    \newcommand{\href}[2]{#2}
\theoremstyle{definition}
\newtheorem{theorem}{Theorem}[section]
\newtheorem{lemma}[theorem]{Lemma}
\newtheorem{definition}[theorem]{Definition}
\newtheorem{observation}[theorem]{Observation}
\newtheorem{corollary}[theorem]{Corollary}
\newcommand{\rank}{{\rm rank}}
\title{Self-Assembly with Geometric Tiles}
\author{
  Bin Fu%
    \thanks{Department of Computer Science, University of Texas - Pan American,
      \protect\url{binfu@cs.panam.edu}}
\and
  Matthew J. Patitz%
    \thanks{Department of Computer Science, University of Texas - Pan American,
      \protect\url{mpatitz@cs.panam.edu}}
\and
  Robert T. Schweller%
    \thanks{Department of Computer Science, University of Texas - Pan American,
      \protect\url{schwellerr@cs.panam.edu}}
\and
  Robert Sheline%
    \thanks{Department of Computer Science, University of Texas - Pan American,
      \protect\url{b.sheline@gmail.com}}
}
\date{}
\begin{document}

\maketitle

\begin{abstract}
In this work we propose a generalization of Winfree's abstract Tile Assembly Model (aTAM) in which tile types are assigned rigid shapes, or geometries, along each tile face.  We examine the number of distinct tile types needed to assemble shapes within this model, the temperature required for efficient assembly, and the problem of designing compact geometric faces to meet given compatibility specifications.  Our results show a dramatic decrease in the number of tile types needed to assemble $n \times n$ squares to $\Theta(\sqrt{\log n})$ at temperature 1 for the most simple model which meets a lower bound from Kolmogorov complexity, and $O(\log\log n)$ in a model in which tile aggregates must move together through obstacle free paths within the plane.  This stands in contrast to the $\Theta(\log n / \log\log n)$ tile types at temperature 2 needed in the basic aTAM.  We also provide a general method for simulating a large and computationally universal class of temperature 2 aTAM systems with geometric tiles at temperature 1.  Finally, we consider the problem of computing a set of compact geometric faces for a tile system to implement a given set of compatibility specifications.  We show a number of bounds on the complexity of geometry size needed for various classes of compatibility specifications, many of which we directly apply to our tile assembly results to achieve non-trivial reductions in geometry size.

%

\end{abstract}

\setcounter{page}0
\thispagestyle{empty}
\clearpage

\section{Introduction}

The stunning diversity of biological tissues and structures found in nature, including examples such as signaling axons stretching from neurons, powerfully contracting muscle tissue, and specifically tailored coats protecting viral payloads, are composed of basic molecular building blocks called proteins.  These proteins, in turn, are assembled from an amazingly small set of only around 20 amino acids.  So how is it that so much structural and functional variety can be derived from so few unique components?  The simplified answer is ``geometry''.  Essentially, a protein's function is determined by its $3$-dimensional shape, or geometry.  The exact sequence of amino acids which compose a protein (along with environmental influences such as temperature and pH levels) determine how that particular string of amino acids will fold into a protein's characteristic $3$-dimensional structure.  However, as simple as it may sound, the resulting geometries are often extremely complex, and predicting them has proven to be computationally intractable.  It is from such geometrically intricate structure that nearly all of the complexity of life as we know it arises.

Scientists and inventors have always recognized nature as providing invaluable examples and inspiration, and as for many other fields, this is also true for the study of artificial self-assembling systems.  Self-assembling systems are systems in which sets of relatively simple components begin in disconnected and disorganized initial states, and then spontaneously and autonomously combine to form more complex structures.  Self-assembling systems are pervasive in nature, and their power for creating intricate structures at even the nano-scale have inspired researchers to design artificial systems which self-assemble. One such productive line of research has followed from the introduction of the Tile Assembly Model (TAM) by Winfree in ~\cite{Winf98}.  As a basic model, the TAM has proven powerful, providing a basis for laboratory implementations \cite{SchWin04,SchWin07,ReiSahYin04,MaoLabReiSee00,WinLiuWenSee98,CheSchGoeWin07,BarSchRotWin09,LaWiRe99} as well as copious amounts of theoretical work \cite{Luhrs08,SoloveichikCW08,Winfree06,CookFuSch11,SFTSAFT,USA,ChaGopRei09,SolWin07}.  However, in this work, we have once again looked to the guidance provided by nature, this time in terms of the power and importance of the geometric complexity of the components of self-assembling systems, to extend the TAM in an attempt to harness that power.

\subsection{Overview}

    \begin{figure}[htbp]
    \centering
    \includegraphics[width=.35\linewidth]{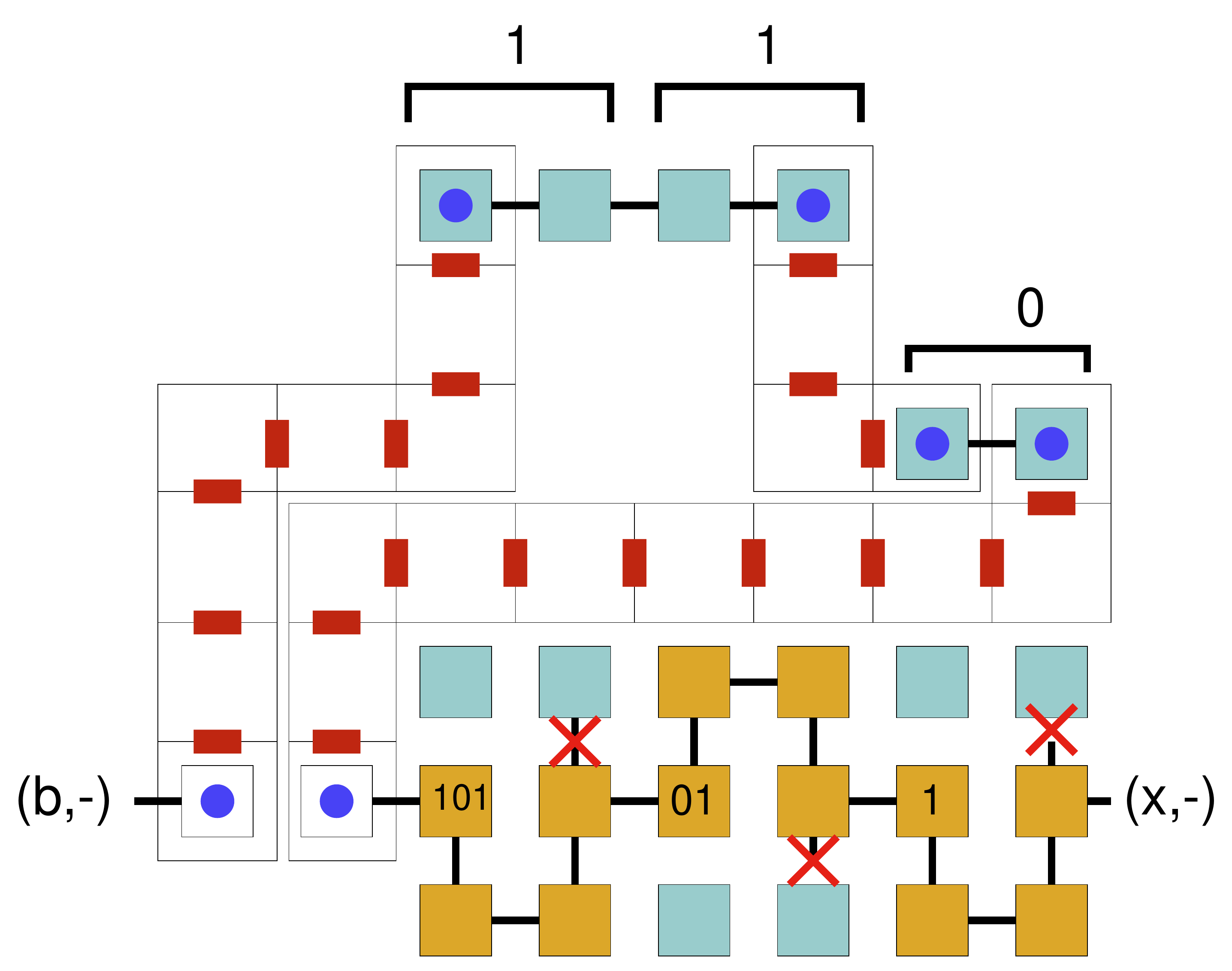}
    \includegraphics[width=.31\linewidth]{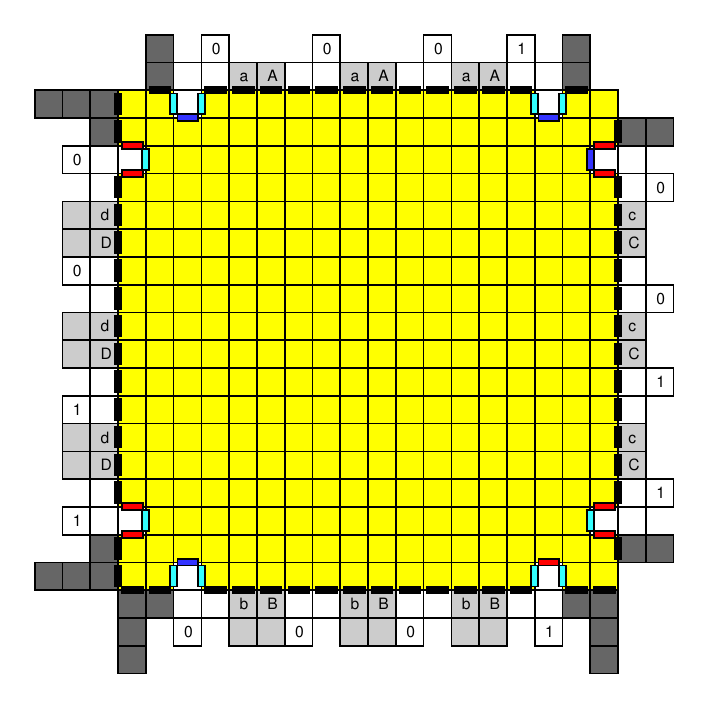}
    \includegraphics[width=.31\linewidth]{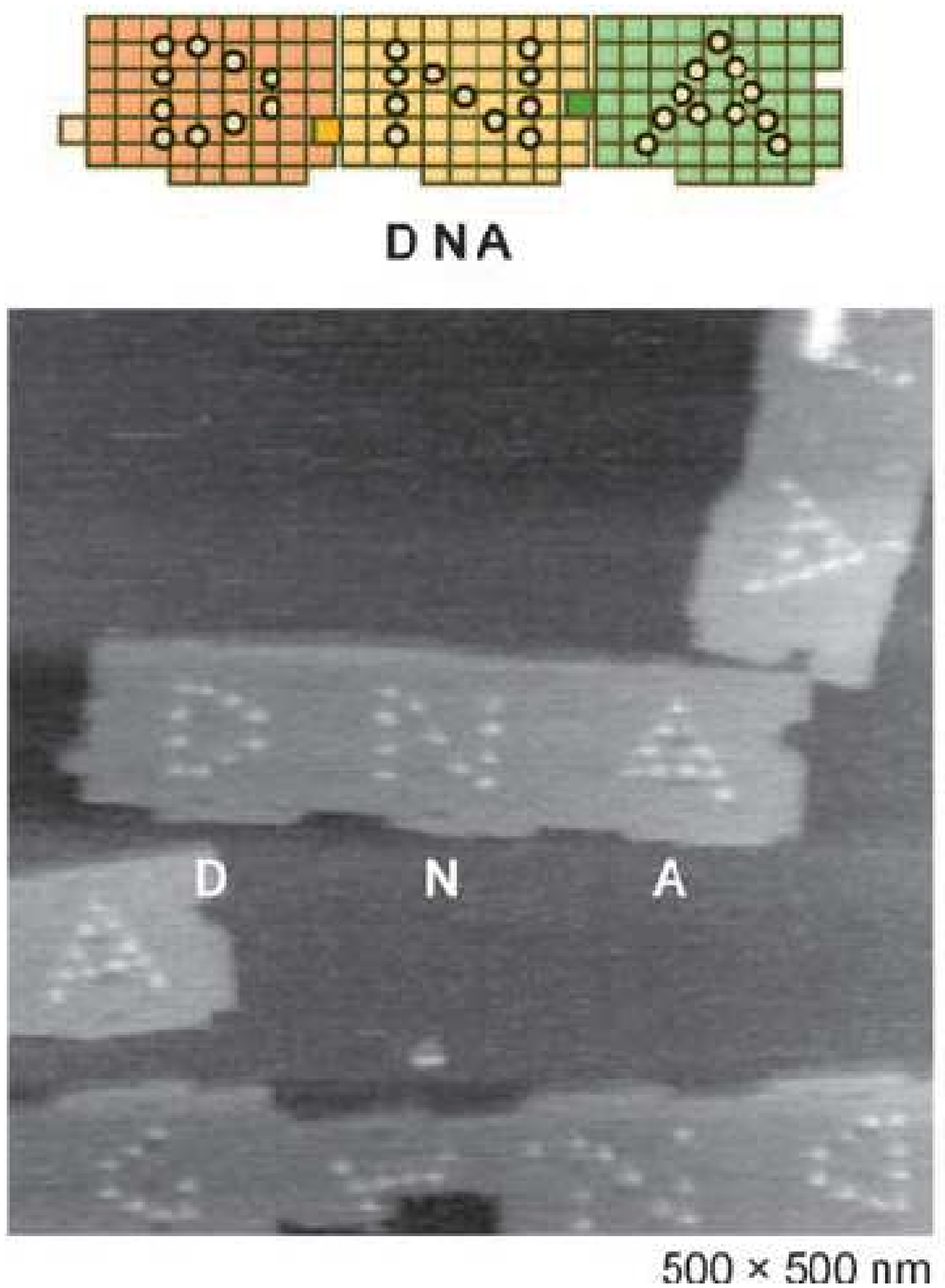}
    \caption{The use of jigsaw faced macro tiles for self-assembly is emerging in both theoretical and experimental work.  This figure contains three separate recent examples.  The first figure depicts the theoretical technique of encoding binary strings within the geometry of tile growth into the third dimension, as seen by small blue tiles in this figure~\cite{CookFuSch11}.
     The second figure depicts a macro tile assembled using staged assembly from smaller tile types~\cite{DDFIRSS07}.  Finally, the third figure depicts the experimental work of~\cite{EndoSugitaKatHidSug10} in which a
     jigsaw geometry on the face of tiles is created with the DNA origami technique. }
    \label{fig:hindrance}
    \end{figure}

  We introduce a generalization of the abstract Tile Assembly Model (aTAM) in which tile types are assigned rigid shapes, or geometries, along each tile face.  This model is motivated by the plausibility of implementing novel sophisticated nanoscale shapes with technology such as DNA origami~\cite{RothOrigami}.  We show that this model permits substantially greater efficiency in terms of tile type complexity when compared to assembling shapes in the basic temperature 2 aTAM.  Furthermore, these efficiency improvements hold even at temperature 1.

\subsection{Results}\label{sec:overview}
\begin{table*}\small
\centering
\tabcolsep=0.5\tabcolsep
\begin{tabular}{l|c|c|c|c|c|}
\multicolumn{1}{c|}{\textbf{$n\times n$ square}} & \multicolumn{2}{|c|}{\textbf{Tile Types}} & \textbf{Temperature} & \textbf{Geometry Size}
\\ \hline
ATAM (previous work) \cite{RotWin00,AdChGoHu01} & \multicolumn{2}{|c|}{$\Theta(\log n/\log\log n)$} & 2 & -
\\ \hline
GTAM (Thms. \ref{thm:gtamUpper},\ref{thm:gtamLower}) & \multicolumn{2}{|c|}{$\Theta(\sqrt{\log n})$} & 1 & $O(\sqrt{\log n})$
\\ \hline
2GAM (Thm. \ref{thm:2GAM-squares}) & \multicolumn{2}{|c|}{$O(\log\log n)$} & 2 & $O(\log n \log\log n)$
\\ \hline
\multicolumn{2}{c}{}\\

\multicolumn{1}{c|}{\textbf{Zig-zag simulation}} & \textbf{Tile Type Scale} & \textbf{Glues} & \textbf{Temperature} & \textbf{Geometry Size}
\\ \hline
Theorem \ref{thm:zigzagSimulation} & \multicolumn{1}{|c|}{$O(1)$} & $O(\sigma_w)$ & 1 & $\log\sigma_n + \log\log\sigma_n +O(1)$
\\ \hline
Theorem \ref{thm:zigzagSimulation1Glue} & \multicolumn{1}{|c|}{$O(1)$} & 1 & 1 & $\log\sigma + \log\log\sigma +O(1)$
\\ \hline
\multicolumn{2}{c}{}\\

\end{tabular}

\begin{tabular}{l|c|c|c|c|c|}
\multicolumn{1}{c|}{\textbf{Compact Geometry Design}} & \multicolumn{2}{|c|}{\textbf{Geometry Size}} & \textbf{Run Time}
\\ \hline
Random Matrix (Thms. \ref{thm:mostMatrices}) &
\multicolumn{2}{|c|}{$L(M)=\Theta(n)$} &
\\ \hline
Diagonal 1's (Cor. \ref{cor:diagOne}) & \multicolumn{2}{|c|}{$L(M)=n$} &
$O(n^2)$
\\ \hline
Diagonal 0's (Cor. \ref{cor:diagZero}) & \multicolumn{2}{|c|}{$\log n + 1\le L(M)\le \log
n + \log\log n$}   & $O(n^2)$
\\ \hline
Ind. Sub. Matrices (Thm. \ref{thm:indSubMatrices}) &
\multicolumn{2}{|c|}{$L(M)=\sum{L(M_i)}$} &
\\ \hline
Ind. Sub. Matrices (Thm. \ref{thm:indSubMatrices}) &
\multicolumn{2}{|c|}{$L(M)\le (1+\epsilon)\max(\min(m_i,n_i))$} & \\
& \multicolumn{2}{|c|}{$+ O(\log(n+m))$} & $O((m+n)^{3})$
\\ \hline
Ind. Sub. Matrices (Thm. \ref{thm:indSubMatrices}) &
\multicolumn{2}{|c|}{$\max(L(M_i))\le L(M)\le$} &\\
& \multicolumn{2}{|c|}{$(1+\epsilon)\max(L(M_i)) + O(\log(m+n))$} &
$O(2^{\max(L(M_i))\min(m,n)}(m+n)^{3})$
\\ \hline
\multicolumn{1}{c}{}\\

\multicolumn{1}{c|}{\textbf{Bar to Bump Reduction}} & \multicolumn{2}{|c|}{\textbf{Geometry Size}}
\\ \hline
Theorem \ref{thm:twoFuncUpper},\ref{thm:twoFuncLower} & \multicolumn{2}{|c|}{$n$}
\\ \hline

\end{tabular}
\caption{\footnotesize Summary of our Results. $\sigma$ denotes the number of distinct glues of a tile system to be simulated, with $\sigma_n$ and $\sigma_w$ denoting only the number of north/south and west/east glue types respectively.  $L(M)$ is the size of the smallest geometry that can satisfy a binary $n\times m$ compatibility matrix $M$.}
\label{table:summary}
\end{table*}

The abstract tile assembly model (aTAM)~\cite{Winf98}, as well as many of the nanoscale self-assembly models spawned by it, feature single stranded DNA sequences as the primary mechanism for decision making.
This commonality applies to weak systems such as deterministic temperature-1 assembly, as well as stronger ones that rely on higher temperatures or stochastic methods.
Since it is known that DNA strands are capable of hybridizing with sequences other than their exact Watson-Crick compliments, it is therefore reasonable to consider a tile assembly model
in which one glue can potentially bond with an arbitrary subset of the other glues, with possibly differing strengths. Aggarwal~et.~al.~\cite{AGKS05g} have shown that such a \emph{non-diagonal}
 glue function allows for significant efficiency gains in terms of the numbers of unique tiles used to assemble a target shape. Despite this potentially promising result,
 it is also true that designing non-specific hybridization pairs, while possible, is severely limited in a practical sense, and would likely introduce a potential for error in a much greater
 sense than is already present in laboratory experiments.


\begin{figure}[h]
\centering
  \includegraphics[width=4.0in]{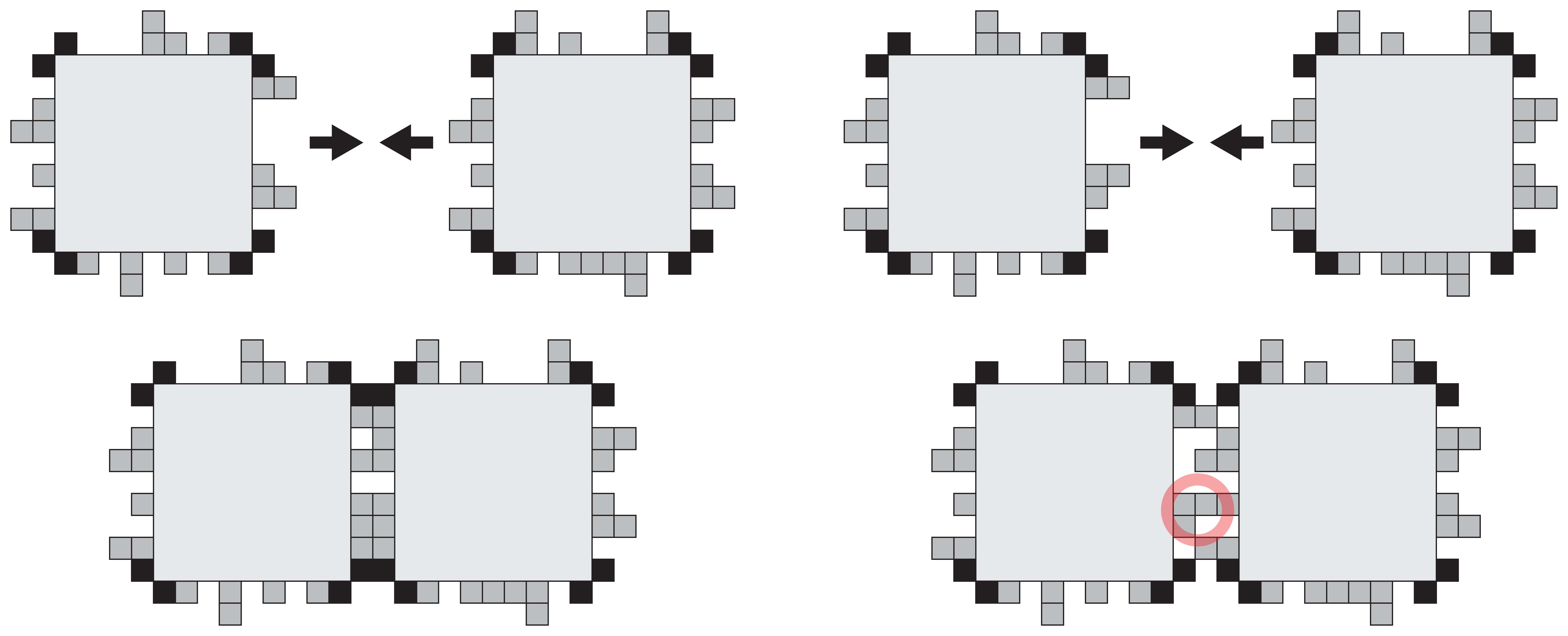}
    \caption{\label{fig:geometric-tile-transform}Examples of geometric tiles.  Note that only the black portions on the corners are binding surfaces with glues, while the ``teeth'' in between provide potential geometric hindrance.  Left: Compatible tiles.  Right: Incompatible tiles (colliding teeth, which prevent the glue pads from coming together, are circled).}
\end{figure}

If non-specific binding is impractical or impossible to implement, but powerful in theory, the question remains: are there any other mechanisms by which this power can be realized? One possible answer to this question is motivated by advances in DNA origami\cite{RothOrigami,EndoSugitaKatHidSug10} in which DNA strands can be folded into blocks with semi-rigid jig-saw faces (see the rightmost image in Figure~\ref{fig:hindrance}).  In this work we introduce a generalization of the aTAM in which tile faces are given some rigid shape (which we hereon refer to as geometry).  As suggested in Figure~\ref{fig:geometric-tile-transform}, the \emph{geometric hindrance} which can be provided by this geometry is capable of simulating non-diagonal glue functions by creating a set of compatible and non-compatible faces.  We show that this new model realizes much of the power of non-specific hybridization.  Among our results, we show that $n\times n$ squares can be assembled in $\Theta(\sqrt{\log n})$ distinct tile types, which meets an information theoretic lower bound for the model and improves what is possible without geometric tiles from $\Theta(\log n / \log\log n)$ (see \cite{RotWin00}).  In addition, this tile efficient construction requires only a temperature threshold of 1, thus showing this model can mimic both non-specific glue functions and temperature 2 self-assembly simultaneously.

Next, we show that temperature-$1$ systems utilizing geometry can efficiently simulate a powerful class of temperature-$2$ aTAM systems.  This class of systems, called \emph{zig-zag} systems, is capable of simulating arbitrary Turing machines and therefore universal computation.  Furthermore, the simulation performed using geometric tiles is efficient in that it requires no increase in tile complexity (i.e. the number of unique tile types required) or in the size of the assembly.  This is especially notable due to the fact that it is conjectured (although currently unproven) that temperature-$1$ systems in the aTAM are not computationally universal (see \cite{FuSch09,jLSAT1} for more discussion about temperature-$1$ assembly in the aTAM).

While tile geometries provide a method for greatly reducing the tile complexity required to build squares in a seeded model similar to the aTAM (i.e. one in which tiles can only combine with a growing assembly one at a time), our next result holds for geometric tiles considered within the 2-handed assembly model (sometimes referred to by other names) \cite{AGKS05g,DDFIRSS07,Winfree06,Luhrs08,AdlCheGoeHuaWas01,Adl00,RNaseSODA2010}.  We show that, in this model, the tile complexity required to build a square is reduced to only $O(\log \log n)$ tile types.  The construction presented utilizes the ability of 2-handed assembly to grow assemblies by the combination of sub-assemblies composed of groups of previously combined tiles, and, coupled with complex geometric patterns on the tile edges, forces assembling components to undergo intricate patterns of relative motion in order to combine with each other.  The tile geometries required are, however, complex ($O(\log n \log \log n)$) and in a $2$-dimensional model require disconnected components.  We then show a simple extension to $3$ dimensions which allows for connected components while retaining all other features.

Finally, we conduct a detailed analysis of problems related to computing
necessary patterns for tile geometries given specifications of the
desired compatibility matrices (i.e. the listings of which tile sides should be compatible and incompatible with each other), with the goal being to minimize the size of the necessary geometries (as well as the running time of the computations).  They deal with designing tile face geometries as a subset of
$Z_1\times Z_l$. Their solutions help show the feasibility and
limitations of geometric tile face designs. We show a number of lower and upper
bounds related to variants of the problem, some of which are
incorporated into the previously mentioned constructions.

\subsection{Organization of this paper}

The remainder of this paper is organized as follows.  In Section~\ref{sec:model} we describe and define the new models introduced here.  In Section~\ref{sec:seeded-results}, we present our constructions and proofs related to the self-assembly of $n \times n$ squares using $\Theta(\sqrt{\log n})$ tile types, as well as the simulation of zig-zag, temperature-$2$ aTAM systems by temperature-$1$ systems with geometric tiles.  Section~\ref{sec:2GAM-results} describes our construction which utilizes geometric tiles as well as $2$-handed assembly to self-assemble $n \times n$ squares using $O(\log \log n)$ tile types.  Additionally, there is a technical appendix which contains the majority of the proofs and construction details for the results presented in the previous sections, as well as the results related to computing compatibility matrices.

\section{Model}
\label{sec:model}
In this section we define the basic \emph{geometric tile assembly model} (GTAM) and the \emph{two-handed planar geometric tile assembly model} (2GAM). We begin with an informal description of the aTAM.  We then define the Geometric Tile Assembly Model (GTAM).  The GTAM generalizes the aTAM~\cite{Winf98} by adding a geometry to each tile face that may prevent two tiles from attaching.

\subsection{Basics}

A tile type is a unit square with four sides, each having a glue consisting of a label (a finite string) and strength (0, 1, or 2). We assume a finite set T of tile types, but an infinite number of copies of each tile type, each copy referred to as a tile. A supertile (a.k.a., assembly) is a positioning of tiles on the integer lattice $\mathbb{Z}^2$. Two adjacent tiles in a supertile interact if the glues on their abutting sides are equal. Each supertile induces a binding graph, a grid graph whose vertices are tiles, with an edge between two tiles if they interact. The supertile is $\tau$-stable if every cut of its binding graph has strength at least $\tau$, where the weight of an edge is
the strength of the glue it represents. That is, the supertile is stable if at least energy $\tau$ is required to separate
the supertile into two parts. A seeded tile assembly system (TAS) is a triple $T = (T, \tau, s)$, where T is a finite tile set, $\tau$ is the temperature, usually 1 or 2, and $s\in T$ is a special tile type denoted as the $\emph{seed}$. Given a TAS $T = (T, \tau, s )$, a supertile is producible if either it is
the seed tile, or it is the $\tau$-stable result of attaching a single tile $r\in T$ to a producible supertile. A supertile $\alpha$ is
terminal if for every tile type $r\in T$, $r$ cannot be $\tau$-stably attached to $\alpha$. A TAS is directed (a.k.a.,
deterministic or confluent) if it has only one terminal, producible supertile. Given a connected shape $X \subset \mathbb{Z}^2$,
a TAS T produces X uniquely if every producible, terminal supertile places tiles only on positions in $X$
(appropriately translated if necessary).

\subsection{Geometric Tiles and the Basic Geometric Tile Assembly Model (GTAM)}
In this paper we generalize the basic aTAM by assigning a geometric pattern to each side of a tile type along with its glue.  For each tile set in the GTAM, fix two values $w,\ell \in \mathbb{N}$.  While at a high-level we still consider tiles as occupying unit squares within the plane, in order to determine whether or not adjacent tiles are \emph{geometrically compatible} with each other, we define a \emph{tile body} to be an $\ell \times \ell$ square (see Figure~\ref{fig:geometric-tile-definition}), and we define a \emph{(tile face) geometry} to be a subset of $\mathbb{Z}_w \times \mathbb{Z}_\ell$.  A \emph{geometric} tile type consists of a tile body which has both a glue and a geometry assigned to each side.  For a tile type $t$, let $northGeometry(t)$ denote the geometry assigned to the north side of $t$.  Define $eastGeometry(t)$, $southGeometry(t)$, and $westGeometry(t)$ analogously.  Intuitively, the geometry of a tile type face represents the positions of inflexible bumps, or ``filled-in'' locations of the $w \times \ell$ rectangle, that can prevent two tiles from lining up adjacently to one another so that the rectangles of their adjacent geometries completely overlap.  Only if the $w \times \ell$ geometries on adjacent sides of two combining tiles can completely overlap so that no location contains a filled-in portion of both, can any glues on those adjacent sides interact.  Formally, we say a tile type $t$ is \emph{east incompatible} with tile type $r$ if $eastGeometry(t) \bigcap westGeometry(r) \neq \emptyset$. We define $\emph{north}$, $\emph{south}$, and $\emph{west}$ incompatibility analogously.  Seeded Geometric Tile Assembly takes place in the same manner as in the aTAM, with the added requirement that a tile type cannot be attached to a supertile at a position in which the tile type is either east, west, north, or south incompatible with another adjacent tile type in the supertile at a position west, east, south, or north, respectively, of the attachment position.  As in the original aTAM, tiles are not allowed to rotate and must always maintain their pre-specified orientation, even while moving into position to attach to an assembly.

\begin{wrapfigure}{r}{2.2in}
\vspace{-20pt}
\begin{center}
    \includegraphics[width=2.0in]{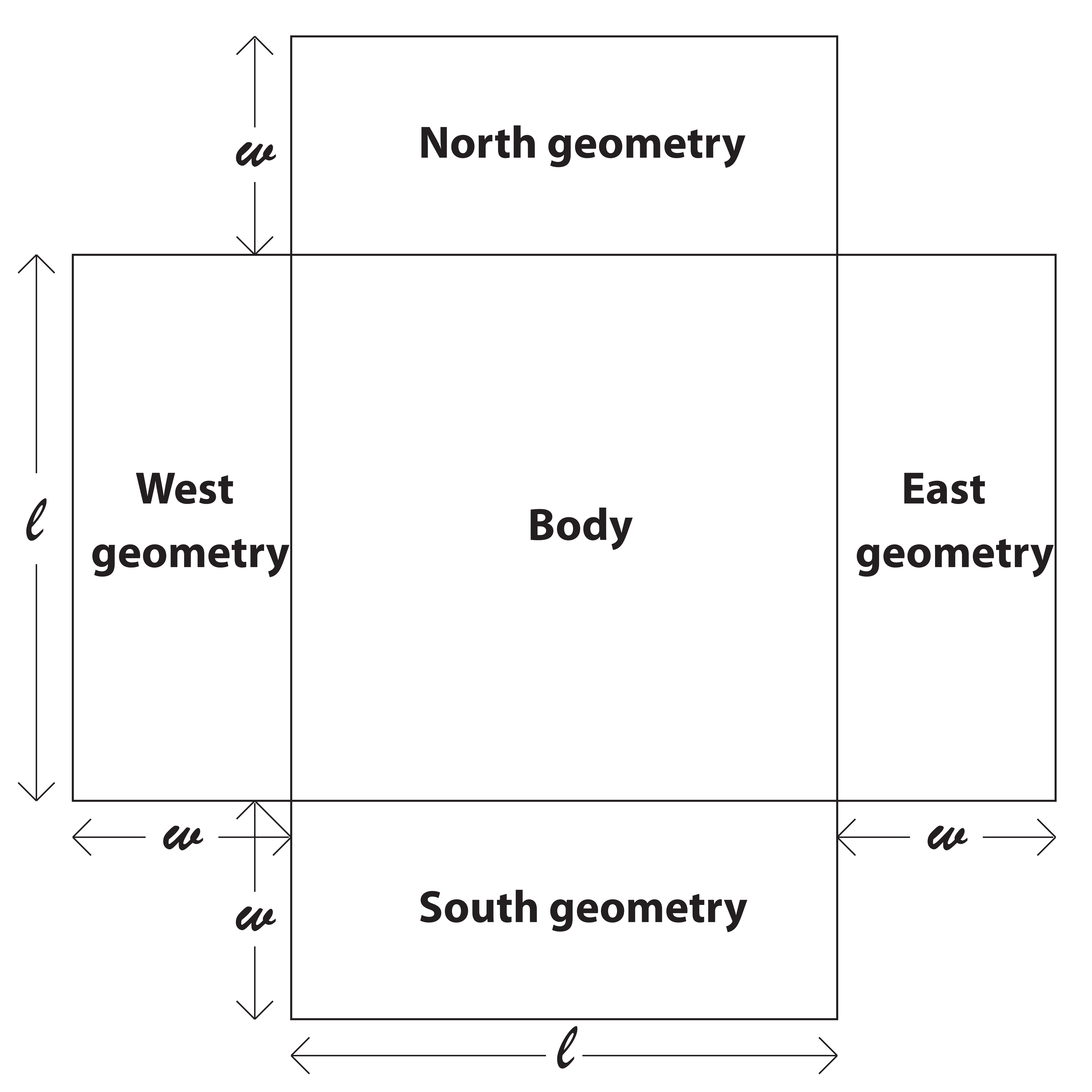} \caption{\label{fig:geometric-tile-definition} \footnotesize Definition of a geometric tile.}
\end{center}
\vspace{-20pt}
\end{wrapfigure}

\subsection{Two-Handed Geometric Tile Assembly Model}
The Two-Handed Geometric Tile Assembly Model (2GAM) extends the GTAM by allowing large assembled supertiles to attach to one another.  We further restrict the model to planar assembly in which two supertiles may only attach if there exists a collision free path for the supertiles to traverse to reach their point of connection.  In two dimensional assembly this enforces that supertiles must be able to slide into position while staying in the 2D plane.  With standard aTAM tiles, this requirement enforces that individual tiles of a supertile do not collide with individual tiles from another supertile while the supertiles shift into position.  With geometric tiles, we must also enforce that the geometries of individual tiles do not overlap with other tile geometries.

\paragraph{Informal Definition of the 2GAM}  As in the GTAM, tiles are composed of tile bodies and tile face geometries as shown in Figure~\ref{fig:geometric-tile-definition}. Within the 2GAM, two tiles may attach if 1) there exists a collision free path within the 2D plane to shift the tiles into an adjacent position in which the east (or south) geometry box of one tile exactly overlaps the west (or north) geometry box of the second tile, and 2) the east (north) and west (south) glues of each tile are equal and have strength at least $\tau$.  More generally, preassembled multiple tile supertiles may come together if there is a collision free path in which the supertiles line up to create a $\tau$-stable assembly.  The set of \emph{producible} supertiles within the 2GAM is defined recursively:  As a base case, all singleton supertiles consisting of a single tile are producible.  Recursively, for any two producible supertiles $\alpha$ and $\beta$ such that there exists a collision free path within the plane to shift $\alpha$ and $\beta$ into a $\tau$-stable configuration $\gamma$, then the supertile $\gamma$ is also producible.  The subset of producible assemblies of a 2GAM system to which no producible assembly can attach defines the \emph{terminally produced} supertiles.  Intuitively, this set represents the set of assemblies we expect to see from a system if it is given enough time to assemble, and we refer to this as the output of the system.  A 2GAM is directed (e.g., deterministic, confluent) if it has only one terminal, producible supertile. Given a connected shape $X\subseteq \mathbb{Z}^2$, a 2GAM $\Gamma$ produces $X$ uniquely if every producible, terminal supertile places tiles only on positions in $X$ (appropriately translated if necessary).

Please refer to Section~\ref{sec:formal-2GAM} for a more formal definition of the 2GAM model.  Additionally, for a discussion of the different types of tile face geometries that are possible and the classes into which they can be categorized, please see Section~\ref{sec:geometry-classes}.

%

\section{Complexities for the GTAM: Squares and $\tau=1$ Assembly}\label{sec:seeded-results}
In this section we examine the power of the GTAM in the context of efficiently building squares and simulating temperature $\tau=2$ ATAM systems at $\tau=1$.  We first show in Secton~\ref{subsec:TheTileComplexity} that the tile complexity of $n\times n$ squares in the GTAM is $\Theta(\sqrt{\log n})$ for almost all $n$ by providing an order $\sqrt{\log n}$ tile complexity upper bound construction for all $n$, and a matching information theoretic lower bound for almost all $n$.  In addition, our upper bound construction utilizes only temperature $\tau=1$.  This stands in contrast to the temperature $\tau=2$, $\Theta(\log n / \log\log n)$ tile complexity result that can be achieved in the ATAM~\cite{AdChGoHu01}.

As the square construction shows, the GTAM seems to be powerful at $\tau=1$.  In Section~\ref{subsec:SimulatingTemperature} we consider the problem of simulating $\tau=2$ ATAM systems within the GTAM, but at $\tau=1$.  We show that for a large class of temperature $\tau=2$ ATAM systems called \emph{zig-zag} systems, such a simulation is possible with no scale up in tile complexity or assembly size.  Of particular note is the fact that zig-zag systems are capable of simulating universial Turing machines, something that is conjectured to not be possible in the ATAM at $\tau=1$.

\subsection{The Tile Complexity of GTAM squares: $\Theta(\sqrt{\log n})$}\label{subsec:TheTileComplexity}
In this section we analyze the size of the smallest tile type GTAM system that uniquely assembles an $n\times n$ square.  Our first result is a construction that will assemble an $n\times n$ square using $O(\sqrt{\log n})$ tile types.  We then show that this is tight for almost all $n$ by applying an information theoretic argument to show that for almost all $n$, at least $\Omega(\sqrt{\log n})$ distinct tile types are required to uniquely assemble an $n\times n$ square.  This result stands in contrast to the $\Theta(\log n / \log\log n)$ tile complexity for building squares in the standard ATAM model, showing that the GTAM is strictly more powerful than the ATAM.  Further, our upper bound construction uses only temperature 1, while the ATAM construction requires temperature 2.

In the remainder of this section we prove the following theorems:

\begin{theorem}\label{thm:gtamUpper}
The minimum tile complexity required to assemble an $n\times n$ square in the GTAM is $O(\sqrt{\log n})$.  Further, this complexity can be achieved by a temperature $\tau=1$ system with $O(\sqrt{\log n})$ size geometry.
\end{theorem}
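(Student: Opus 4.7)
The plan is to construct a $\tau = 1$ GTAM system that grows the $n \times n$ square by running a counter from a seeded starting value up to overflow, balancing the two information resources of the model so that both the tile-type count and the per-face geometry come out to $O(\sqrt{\log n})$. I would pick $c := \lceil \sqrt{\log n}\,\rceil$, work in base $b := 2^c$, and use $k := \lceil(\log n)/c\rceil = O(\sqrt{\log n})$ digit positions; the starting counter value $2^{\lceil\log n\rceil} - n$ is then $\lceil\log n\rceil$ bits long, split as $k$ chunks of $c$ bits each, with the $i$-th chunk stored as a bump/notch pattern in the north-face geometry of the $i$-th seed tile. This layout meets the information-theoretic lower bound of Theorem~\ref{thm:gtamLower}: the $\log n$ bits that identify $n$ are distributed evenly between ``tile-type identity'' and ``geometry per tile face.''

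Concretely I would build three components. First, a seed row of $k$ distinct tile types whose north-face geometries collectively encode the starting value. Second, a counter body whose tile types are shared across the $k$ columns and operate at $\tau = 1$: each tile reads the digit beneath it through south-face geometric hindrance, writes the incremented digit on top, and propagates a one-bit carry eastward via a single strength-one glue. Third, an $O(1)$-tile frame triggered by the overflow carry that pads the assembly east, west, and north so the bounding box becomes exactly $n \times n$. Summing, the system uses $k + O(c) + O(1) = O(\sqrt{\log n})$ distinct tile types, each with geometry inside an $O(\sqrt{\log n}) \times O(\sqrt{\log n})$ rectangle.

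The main obstacle is realizing the counter body at $\tau = 1$. At $\tau = 2$ one would pin down the successor tile by strength-one cooperation between the below-digit glue and the west-carry glue; at $\tau = 1$ a single strength-one bond is enough to attach a tile, so the role of the second bond must be played by geometric hindrance rather than by glue strength. I would prove a gadget lemma of the form: there is a family of $O(c)$ tile types, each with geometry of size $O(c)$, that deterministically implements the single-digit increment-with-carry map $f:\{0,1\}^c \times \{0,1\} \to \{0,1\}^c \times \{0,1\}$ at $\tau = 1$. The design encodes each of the $2^c$ possible predecessor digits as a specifically structured bump pattern on its north face, so that each of the few successor tiles is geometrically compatible with an exponentially large but carefully chosen family of predecessor patterns; the strength-one carry glue from the west then cuts each such family down to exactly one predecessor per successor, giving the required determinism. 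This matching problem is essentially a small instance of the compact-geometry-design problem analyzed later in the paper (cf.\ Theorem~\ref{thm:indSubMatrices}), and its solution is what keeps both the tile count and the geometry size at $O(\sqrt{\log n})$ rather than blowing up to $\Theta(2^c)$.
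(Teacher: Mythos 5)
There is a genuine gap in the core gadget lemma, and it is not repairable as stated. You claim a family of $O(c)$ tile types, each with geometry of size $O(c)$, that deterministically implements the increment-with-carry map $f:\{0,1\}^c\times\{0,1\}\to\{0,1\}^c\times\{0,1\}$, where $c = \lceil\sqrt{\log n}\,\rceil$. But a geometric tile type is a fixed object: its north-face geometry is fixed at design time, not computed at attachment time. If a counter-body tile is to \emph{present} the incremented digit to the row above via its north geometry, then each distinct value of the new digit requires a distinct tile type. Since the least significant column of a base-$2^c$ counter cycles through all $2^c$ digit values as the counter runs, at least $2^c = 2^{\sqrt{\log n}}$ tile types must appear in that column, which is super-polynomial in $\sqrt{\log n}$. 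Geometric hindrance lets one tile's south face be compatible with a carefully chosen \emph{subset} of north faces below it, and the strength-one carry glue can cut that subset further, but neither mechanism changes the fact that the attaching tile must be drawn from a fixed catalog whose size bounds the number of distinct digits that can ever be written. Your own accounting of ``each of the few successor tiles is geometrically compatible with an exponentially large family of predecessor patterns'' is precisely the problem: those many predecessors would all be forced to accept the \emph{same} successor north geometry, so the counter would not actually increment.

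The missing idea, which is where the paper's construction diverges from yours, is a \emph{decoding} phase that converts the compactly geometry-encoded seed value into an explicit binary string written with only $O(1)$ distinct bit-tile types, so that a standard base-$2$ counter (a constant-size tile set, converted to $\tau=1$ via the zig-zag simulation theorem) can take over. Concretely, the paper first builds a $2\times O(\log n)$ bed with a $2$-digit, base-$r$ counter, $r=O(\sqrt{\log n})$, never attempting a base-$2^{\sqrt{\log n}}$ counter. It then grows a single decoder row across that bed: there are $O(r)$ decoder tile types (labels $0,\dots,2r-1$, each in a $0$-bit and a $1$-bit version), and the $\log n$ bits of the target string are stored not in any one tile's geometry but in the $O(r)\times O(r)$ \emph{compatibility matrix} between decoder south geometries and counter north geometries, which holds $O(r^2)=O(\log n)$ bits and is realizable with face geometry of size $O(r)=O(\sqrt{\log n})$ by the matrix-design results. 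Each decoder tile writes a single bit (not a $c$-bit digit) on its north face, so the base-$2$ counter can then run with $O(1)$ further tile types. You had the right budget ($O(\sqrt{\log n})$ tile types carrying $O(\sqrt{\log n})$ bits of geometry each), but you tried to spend it on a base-$2^c$ counter body, where the tile-type count is forced to blow up; the budget has to be spent on the decoder/compatibility-matrix step instead, with all subsequent counting done in base $2$.
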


\begin{theorem}\label{thm:gtamLower}
For almost all integers $n$, the minimum tile complexity required to assemble an $n\times n$ square in the GTAM is $\Omega(\sqrt{\log n})$.
\end{theorem}

For the sake of brevity, we only give a high level overview of the upper bound construction and place the details in referenced appendix sections.

\subsubsection{$O(\sqrt{\log n})$ Construction Overview}
The tile system for the assembly of $n\times n$ squares in the GTAM at temperature $\tau=1$ and tile complexity $O(\sqrt{\log n})$ starts with the assembly of a roughly $2\times\log n$ rectangle ($2\times \lceil \log(\frac{n+1}{2})\rceil +2$ to be precise) that is used as a base to encode a roughly $\log n$ digit binary number ($\lceil \log(\frac{n+1}{2})\rceil$ digits to be precise).  The rectangle is efficiently built with $O(\sqrt{\log n})$ tile types by using a tile set for simulating a 2-digit, base-$\sqrt{\log n}$ counter.  Such counters are known to exist in the ATAM at $\tau=2$.  To achieve $\tau=1$ in the GTAM, we apply the transformation described in Theorem~\ref{thm:zigzagSimulation1Glue} to convert a zig-zag (see Definition~\ref{def:zigzag}) version of the $\tau=2$ counter into a $\tau=1$ GTAM system.  The tileset for the basic $\tau=2$ ATAM counter is given in Figure~\ref{fig:temp2BaseCounterWithExample}, and the $\tau=1$ GTAM version is given in Figure~\ref{fig:Temp2ToTemp1SingleTileExample}.  The details of this portion of the construction are described in Section~\ref{subsubsec:baseCounter}.

\begin{figure}[t!]
\begin{center}
    \includegraphics[width=6.0in]{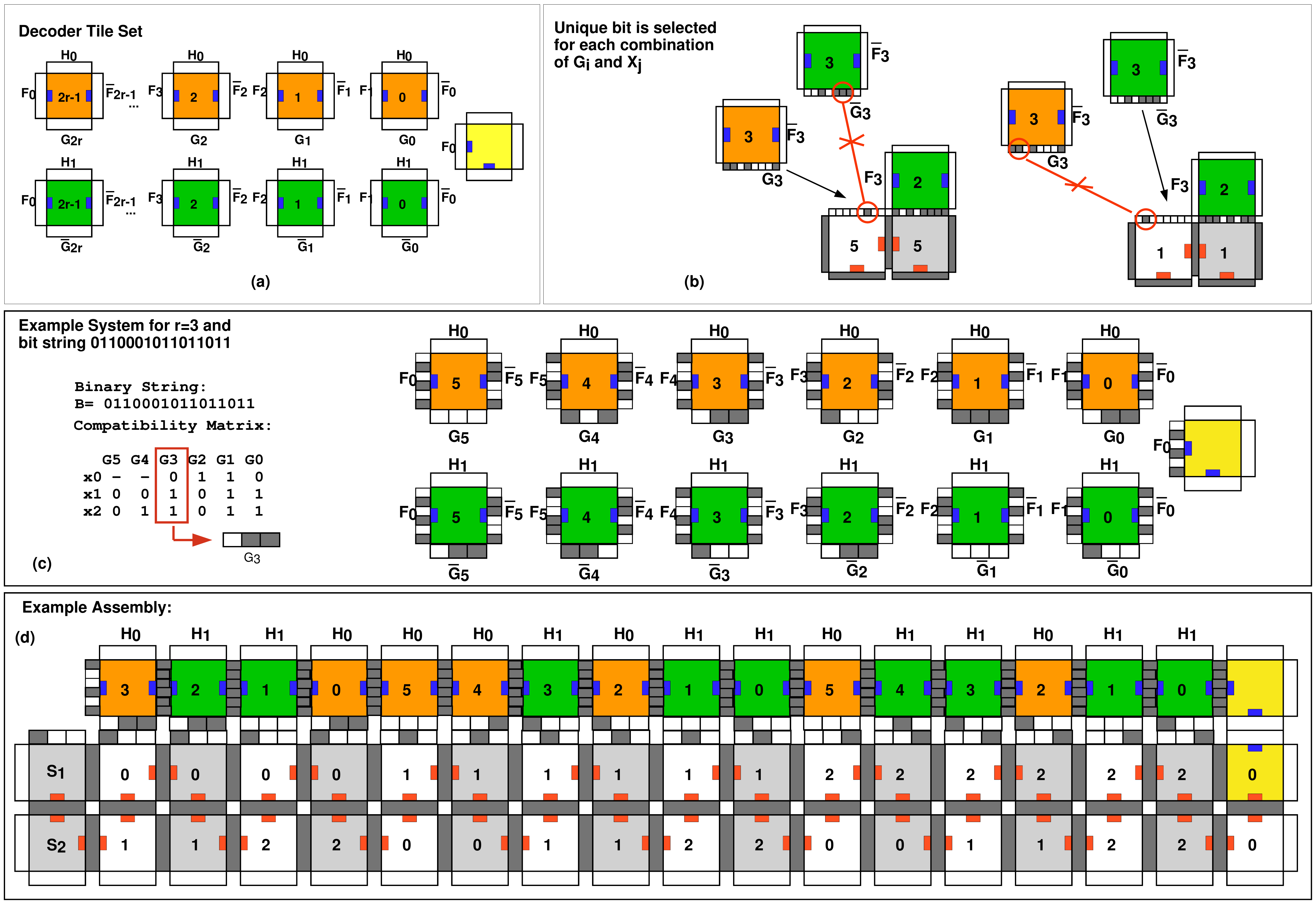} \caption{\label{fig:decoderTilesCombo} \footnotesize (a) The decoder set of tiles grows across the north face of the assembled counter from Figure~\ref{fig:Temp2ToTemp1SingleTileExample} to build a binary string of tiles by placing at each position a tile type representing either a 0-bit or a 1-bit.  (b)  For a given binary string $B=b_{2r-1}\ldots b_2 b_1 b_0$ to be assembled, the geometries $G_j$, $\bar{G_j}$, and $X_i$ are assigned such that  $G_j$ and $X_i$ are compatible if and only if $b_{2r(r-1) - 2ri +j}=0$.  (c)  For a given binary string $B=b_{2r-1}\ldots b_2 b_1 b_0$ to be assembled, the listed compatibility matrix is obtained, along with the described assignment of geometry to each tile face.  Such compatibility constraints can be achieved with geometry of length $r$ for a length $2r^2-2$ string $B$.  For more compact geometries in the case of less complex strings $B$, see Section~\ref{sec:matrix}.  (d) An example assembly of decoder tiles. }
\end{center}
\vspace{-20pt}
\end{figure}

The next step of the construction grows a third row of tiles on top of the surface of the roughly $2\times \log n$ rectangle.  Within this assembled row one tile type representing a binary 0 or 1 bit is placed at each position, thus assembling a length roughly $\log n$ binary string upon completion.  To generate $O(\log n)$ bits from only $O(\sqrt{\log n})$ distinct tile types is impossible in general within the ATAM.  Within the GTAM, at this stage in the assembly we make use of the \emph{non-specific} hindrance property of geometric tile faces to select the correct bit (and reject the wrong bit) at each of the $\log n$ bit positions.  The key idea is that a collection of $m$ geometric tile faces can be designed such that each face is compatible with a specified subset of the other faces, while incompatible with all others.  Thus $m$ tile faces can encode a compressed $m^2$ binary pieces of information (compatible or not compatible), thereby providing the possibility for a more tile type succinct assembly of an $n\times n$ square.  A description of the decoder tile set is given in Figure~\ref{fig:decoderTilesCombo} along with an example of how the assignment of geometry to tile faces permits the decoder tiles to efficiently select the correct bits.  The details of this portion of the construction are described in Section~\ref{subsubsec:bitDecoder}.

Once the binary string is assembled on the surface of the $3\times \log n$ rectangle, we utilize a well known constant sized set of tiles that implement a binary counter in the ATAM at $\tau=2$~\cite{RotWin00}.  This system reads a given surface of glues that denote an initial binary value for the counter, and then assembles upwards, incrementing a binary value encoded in tile types at every other row of the assembly.  Once the counter is maxed out the construction stops, thus growing a rectangle of height roughly $2^{\log n}$ minus the initial value of the counter.  This $\tau=2$ ATAM counter construction is a zig-zag construction (see Definition~\ref{def:zigzag}).  Thus, our construction applies Theorem~\ref{thm:zigzagSimulation1Glue} to convert to a $\tau=1$ GTAM version.

Finally, with the ability to generate large length $O(n)$ rectangles with $O(\sqrt{\log n})$ tile types at $\tau=1$, we combine 3 of these constructions to assemble the border of an $n\times n$ square using a factor of 3 times more tile types.  A high level schematic of the approach is given in Figure~\ref{fig:seededSquareOverview}.  With the shell constructed, the completion of the final rectangle can seed a growth of filler tiles to fill in the body of the square, finishing the construction.  The final details of the construction are described in Section~\ref{subsubsec:binaryCounter}.

\subsubsection{Tight Kolmogorov Lower Bound for GTAM Squares: $\Omega(\sqrt{\log n})$}

\begin{theorem}\label{thm:gtamLower}
For almost all integers $n$, the minimum tile complexity required to assemble an $n\times n$ square in the GTAM is $\Omega(\sqrt{\log n})$.
\end{theorem}
\begin{proof}
	The Kolmogorov complexity of an integer $n$ with respect to a universal Turing machine $U$ is $K_U(n) = \min|p|$ s.t. $U(p) = b_n$ where $b_n$ is the binary representation of n.  It is known that $K_U(n) \geq \lceil\log n \rceil - \Delta$ for at least $1-(\frac{1}{2})^\Delta$ of all $n$ (see \cite{Li:1997:IKC} for results on Kolmogorov complexity).  Thus, for any $\epsilon > 0$, $K_U(n) \geq (1-\epsilon)\log n = \Omega(\log n)$ for almost all $n$.

Consider a tile simulator program (of constant size in bits) that reads as input a GTAM tile system (encoded as a bit string).  Suppose the simulator is modified so that it outputs the maximum extent (i.e. width or length) of a shape that is terminally produced by the input system.  When such a simulator is paired with a GTAM system that uniquely assembles an $n\times n$ square, the combined program constitutes a program that outputs the integer $n$, implying that the total number of bits of the simulator (constant) plus the encoding of the tile set must be at least $K_U(n)$.  As the simulator has a constant size, this implies that the number of bits to to encode the GTAM system must be at least $K_U(n)$, which is $\Omega(\log n)$ for almost all $n$.  To achieve our bound we now show that any GTAM system can be encoded using $O(|T|^2)$ bits (independent of the size/area of the tile face geometries) assuming a constant bounded temperature.  To achieve this, we do not explicitly encode the geometry for each tile face, but instead utilize a \emph{compatibility matrix}.

\paragraph{Encoding a GTAM system.}
For a GTAM system $\Gamma = (T,\tau, s)$, arbitrarily index each distinct face of each distinct tile in $T$ from $1$ to $4|T|$.  Define the \emph{compatibility matrix} $M$ for $\Gamma$ to be the $4|T| \times 4|T|$ matrix such that $M_{i,j} = 1$ $\iff$ $i$ is the index of an east (or north respectively) edge and $j$ is the index of a west (south respectively) edge and $i$ and $j$ have incompatible edge geometries.  $M$ can be encoded using $O(T^2)$ bits, and the remaining portions of $\Gamma$ can easily be encoded in asymptotically fewer bits, yielding an $O(T^2)$ bit encoding for any GTAM system.  Note that even without the explicit representation of the GTAM's geometries, a simulator can derive what the system will build from the compatibility matrix $M$.

Now consider the smallest tile type GTAM system $\Gamma=(T,\tau,s)$ that uniquely assembles an $n\times n$ square.  As $\Gamma$ can be encoded in $O(|T|^2)$ bits, we know that for almost all $n$, $c_1 |T|^2 \geq c_2 \log n$ for constants $c_1, c_2$. Therefore, $|T| = \Omega(\sqrt{\log n})$ for almost all $n$.
\end{proof}

\subsection{Simulating Temperature $\tau=2$ ATAM Systems with $\tau=1$ GTAM Systems}\label{subsec:SimulatingTemperature}

\begin{definition}{\textbf{Zig-Zag System.}}\label{def:zigzag} A tile system $\Gamma = (T,\tau,s)$ is called a zig-zag system if:
\begin{enumerate}
\item The location and type of the $i^{th}$ tile to attach is the same for all assembly sequences.
\item The $i^{th}$ tile attachment occurs to the north, west, or east (not south) of the previously placed tile attachment in all assembly sequences.
\end{enumerate}
\end{definition}

For the proofs of the following two theorems and technical details about the notion of tile system ``simulation'', please see Section~\ref{sec:zig-zag-details}.

\begin{theorem}\label{thm:zigzagSimulation}
Any temperature $\tau=2$ zig-zag ATAM tile system $\Gamma = (T,2,s)$ can be simulated by a $\tau=1$ GTAM tile system $\Upsilon=(R,1,q)$ with tile type scale $|R|/|T| = O(1)$.  The simulation utilizes geometry size at most $\log\sigma_n + \log\log\sigma_n + O(1)$ where $\sigma_n$ is the number of distinct north/south glue types represented in $T$.
\end{theorem}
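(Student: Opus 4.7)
The plan is to construct the $\tau=1$ GTAM system $\Upsilon = (R, 1, q)$ from the zig-zag ATAM system $\Gamma = (T, 2, s)$ by mapping each tile $t \in T$ to a corresponding tile $r(t) \in R$, preserving east/west glue interactions, and replacing the north/south glue matching with geometric hindrance. Concretely, the east and west glues of $r(t)$ are identical to those of $t$ (each carrying strength 1 throughout), while the north and south faces of $r(t)$ receive geometric patterns in place of their original glue labels. A small constant number of auxiliary tile types may be needed to handle tiles that initiate new rows (which in the original attach via a single strength-2 south glue, without east/west cooperation), keeping the overall scale $|R|/|T| = O(1)$.

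To design the north/south geometries, we seek a family of $\sigma_n$ north patterns $\{N_g\}$ and $\sigma_n$ south patterns $\{S_g\}$ such that $S_g$ is geometrically compatible with $N_{g'}$ (i.e., $S_g \cap N_{g'} = \emptyset$) if and only if $g = g'$. This is exactly a ``diagonal 0's'' compatibility matrix specification, and by Corollary~\ref{cor:diagZero} it can be realized with geometry size at most $\log \sigma_n + \log\log \sigma_n + O(1)$, matching the stated bound. The south face of $r(t)$ is then assigned $S_g$ where $g$ is the original south glue label of $t$, and similarly for north faces; row-starting tiles additionally carry a unique strength-1 south glue label so their lone south attachment can bond at $\tau=1$.

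Correctness follows by induction along the deterministic assembly order guaranteed by the zig-zag property. At the $i$-th attachment the correct tile $t$ attaches in $\Gamma$ via cooperation of a strength-1 east/west glue with its strength-1 south glue (or, for row-starters, via a single strength-2 south glue). In $\Upsilon$ at $\tau=1$, the corresponding east/west glue (now strength 1) alone provides the required bond strength, while the south geometry $S_g$ of $r(t)$ matches the north geometry $N_g$ of the tile already below, exactly mirroring the original south glue contribution. The main obstacle is ruling out spurious attachments that the $\tau=1$ relaxation might otherwise permit: a tile with a matching east/west glue but a mismatched south glue cannot attach in $\Gamma$ (insufficient cooperative strength), but in a naive $\tau=1$ aTAM it could; in $\Upsilon$ the geometric incompatibility of its south pattern with the north pattern of the tile below physically blocks it from reaching the position, so no bond can form. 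A symmetric argument handles row-starters via their unique strength-1 south glue label together with the matching geometry, and the fact that zig-zag growth places the row-below tile before any tile in the row above ensures the geometric constraint is always available to filter attachments. This yields a producible-assembly correspondence with no blowup in tile complexity or assembly size.
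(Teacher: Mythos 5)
Your overall strategy matches the paper's: keep east/west glues but drop all to strength $1$, replace north/south glue matching by geometry realizing a diagonal-zero compatibility matrix, and invoke Corollary~\ref{cor:diagZero} for the $\log\sigma_n + \log\log\sigma_n + O(1)$ bound. However, there is a genuine gap: you never address the need to first convert $\Gamma$ into a \emph{nice} zig-zag system (Lemma~\ref{lemma:nicezigzag}, Definition~\ref{def:nicezigzag}), and without that preprocessing your construction can fail in two concrete ways.

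First, a general zig-zag system may produce assemblies with mismatched north/south glues, e.g.\ a tile attaching by a single strength-$2$ east glue whose south glue does not match the north glue of the tile already below it. In the aTAM this is a harmless mismatch, but in your $\Upsilon$ the south geometry $S_g$ of $r(t)$ is incompatible with the north geometry $N_{g'}$ of the tile below (by the diagonal-zero design, $g\neq g'$ implies incompatibility), so a geometrically blocked position would prevent an attachment that \emph{must} occur to simulate $\Gamma$ correctly. Your sentence that the geometry ``exactly mirror[s] the original south glue contribution'' is precisely where this breaks, because in the original that contribution could have been zero (or negative, a mismatch) and yet attachment still happened. Second, if the terminal assembly of $\Gamma$ leaves exposed east/west strength-$1$ glues (again perfectly allowed at $\tau=2$, since a lone strength-$1$ glue cannot drive attachment), then in $\Upsilon$ at $\tau=1$ with empty east/west geometry a matching tile could bind there and grow spuriously; your argument only rules out spurious attachments that are blocked by a south geometry, not those at boundary positions with nothing below. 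Both issues are exactly what the ``no mismatched glues'' and ``no exposed east-west non-null glues'' conditions of a nice zig-zag system eliminate, and the paper's proof explicitly appeals to both conditions in the induction. You would need to insert (and prove, or cite) the $O(1)$-scale conversion to a nice zig-zag system at the start of your argument; your handling of row-starters via a ``unique strength-$1$ south glue'' is also looser than the paper's injective relabeling $H$ of strength-$2$ north/south glues, but that is a smaller matter of precision.
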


\begin{theorem}\label{thm:zigzagSimulation1Glue}
Any temperature $\tau=2$ zig-zag ATAM tile system $\Gamma = (T,2,s)$ can be simulated by a $\tau=1$ GTAM tile system $\Upsilon=(R,1,q)$ using only 1 non-null glue type and tile type scale $|R|/|T| = O(1)$.  The geometry size of the simulation system is at most $\log\sigma + \log\log\sigma + O(1)$ where $\sigma$ is the number of distinct glue types represented in $T$.
\end{theorem}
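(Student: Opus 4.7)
The plan is to reduce the number of east/west glue types from $O(\sigma_w)$ down to $1$ by starting from the system produced by Theorem~\ref{thm:zigzagSimulation} and then geometrically encoding the east/west glue-label matching using the compact geometry design machinery for matrices with $0$s on the diagonal.

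First I would invoke Theorem~\ref{thm:zigzagSimulation} on the given zig-zag system $\Gamma=(T,2,s)$ to obtain a $\tau=1$ GTAM system $\Upsilon' = (R', 1, q)$ that simulates $\Gamma$ with constant tile-type scale, $O(\sigma_w)$ east/west glue labels, and north/south face geometries of size at most $\log\sigma_n + \log\log\sigma_n + O(1)$. I then define $\Upsilon$ by taking the same tile bodies as $\Upsilon'$, (i) replacing every non-null east and west glue with a single universal strength-$1$ glue, and (ii) augmenting the east and west faces of every tile with geometric patterns that implement the matching compatibility relation on the $\sigma_w$ original east/west labels. This matching relation is exactly the $\sigma_w \times \sigma_w$ compatibility matrix $M$ with $0$s on the diagonal and $1$s off-diagonal (east face $i$ compatible with west face $j$ iff $i=j$). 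The north/south geometries are inherited verbatim from $\Upsilon'$.

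By Corollary~\ref{cor:diagZero}, the matrix $M$ admits a geometric realization of size at most $\log\sigma_w + \log\log\sigma_w + O(1)$. Because the east/west and north/south geometries sit on disjoint faces, the overall geometry dimension required per face is bounded by
\[
\max\bigl(\log\sigma_n + \log\log\sigma_n,\ \log\sigma_w + \log\log\sigma_w\bigr) + O(1) \ \le\ \log\sigma + \log\log\sigma + O(1),
\]
using $\sigma \ge \max(\sigma_n,\sigma_w)$. No new tile types are introduced, so the tile-type scale $|R|/|T|=O(1)$ is inherited from $\Upsilon'$, and the number of non-null glue labels is exactly $1$ by construction.

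The main obstacle is to argue that the geometry-only enforcement of east/west label matching produces an assembly process indistinguishable from $\Upsilon'$, so that the simulation of $\Gamma$ established in Theorem~\ref{thm:zigzagSimulation} still goes through. Since attachments in the zig-zag simulation occur one tile at a time and always rely on a single strength-$1$ bond together with geometric compatibility against every adjacent neighbor, I would verify the two directions of the matching: when an east face with original label $i$ meets a west face with original label $j=i$, the diagonal $0$ entry guarantees geometric compatibility and the universal glue supplies the required strength-$1$ bond; when $j\ne i$, the off-diagonal $1$ entry forces geometric hindrance and blocks attachment regardless of the glue. Combined with the unchanged north/south behavior of $\Upsilon'$, this shows that $\Upsilon$ has the same set of producible supertiles as $\Upsilon'$, and hence simulates $\Gamma$ under the same correspondence, yielding the theorem.
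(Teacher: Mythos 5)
Your plan follows the paper's strategy quite closely---start from the construction of Theorem~\ref{thm:zigzagSimulation}, realize the east/west glue-matching relation as a diagonal-zero compatibility matrix, and invoke Corollary~\ref{cor:diagZero} to bound its geometric realization by $\log\sigma_w + \log\log\sigma_w + O(1)$. The analysis of the geometry size and the argument that geometry-only matching on east/west faces preserves the simulated assembly sequence are both sound.

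However, there is a concrete gap in step (i): you replace only the non-null \emph{east and west} glues of $\Upsilon'$ with the single universal glue, while leaving the north/south glues of $\Upsilon'$ untouched. In the construction behind Theorem~\ref{thm:zigzagSimulation}, every strength-$2$ north/south glue of the intermediate nice zig-zag system $T'$ is mapped by an injective function $H$ to a \emph{distinct} strength-$1$ glue. So after your modification, $\Upsilon$ still carries on the order of $\sigma_n$ distinct non-null north/south glue labels, and your claim that ``the number of non-null glue labels is exactly $1$ by construction'' does not hold. The theorem requires the whole system to use one non-null glue type. The fix is the one the paper itself uses: map \emph{all} non-null glues of $T'$---north/south as well as east/west---to the single universal strength-$1$ glue. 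This collapse is safe precisely because the north/south face geometries you inherit from $\Upsilon'$ already enforce that the south face of $r_t$ is geometrically compatible with the north face of $r_v$ if and only if $south(t) = north(v)$; once that matching is policed by geometry alone, the north/south glue labels carry no information and can all be the same label. With that extra collapse your verification argument goes through unchanged, and the bound $\log\sigma + \log\log\sigma + O(1)$ on the per-face geometry size is exactly as you derived it.
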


\section{2GAM Results}
\label{sec:2GAM-results}

In this section, we explore the theoretical limits achievable when utilizing geometric tiles by designing tiles whose edges contain highly complex geometries.  Furthermore, we move to the 2-handed variant of the GTAM, the $2$GAM, to allow for the geometric hindrances experienced by individual tiles to be grouped and combined to provide more complex interactions between larger supertiles.  The goal, rather than providing a realistic and potentially experimentally realizable set of constructions, is to gain further understanding into the interplay between geometry and the types of computations which can be carried out via algorithmic self-assembly.

We now present the details of our construction, which reduces the tile complexity required to self-assemble an $n \times n$ square to a mere $O(\log \log n)$ tile types, while requiring a geometry size of $O(\log n \log \log n)$.  Our construction requires the constraint of planarity, in which components are not allowed to float into position from above or below the assembly, but must always be able to slide into position with a series of translations along only the $x$ and $y$ axes.  However, the intricate geometric designs and complex series of movements require that individual tile geometries are composed of disconnected components.  (Note that in Section~\ref{sec:2d-conversion-to-3d} we show how to extend the tiles into the third dimension, utilizing a total of $4$ planes, in a manner which results in connected tiles and also implicitly enforces the restriction that only tile translations along the $x$ and $y$ axes must be sufficient to allow for tile attachments.)

\subsection{Self-assembly of an $n \times n$ square with $O(\log \log n)$ tile types}

\begin{theorem}\label{thm:2GAM-squares}
For every $n \in \mathbb{N}$, there exists a 2GAM tile system $\Gamma = (T,2)$ which uniquely produces an $n \times n$ square, where $|T| = O(\log \log n)$, and with $O(\log n \log \log n)$ size geometry.
\end{theorem}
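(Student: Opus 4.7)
The plan is a three-phase construction. First, I would build via 2-handed assembly a small ``program'' supertile whose geometric face encodes the $\lceil \log n \rceil$-bit binary expansion of $n$. Second, I would run a constant-tile-type zig-zag counter off this program to grow the three edge-rectangles of the square. Third, a constant-size filler gadget would complete the interior. The total tile budget is $O(\log\log n)$ because each phase uses only $O(1)$ or $O(\log\log n)$ tile types.

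For phase one I would define a recursive family of rectangular ``gadget'' supertiles $P_0,P_1,\ldots,P_{\lceil \log\log n \rceil}$, where $P_k$ has length $2^k$ and comes in a small number of geometric variants. A fresh constant-size palette of tile types at each level would fuse two copies of $P_{k-1}$ into a $P_k$, selecting the correct north-face pattern according to the next chunk of bits of $n$. With $O(\log\log n)$ levels this totals $O(\log\log n)$ tile types. The key subtlety is that only $O(1)$ glue types are available per level, so the ``which $P_{k-1}$ variant mates with which'' decisions must be enforced entirely by geometric hindrance on the abutting faces. I would realize this using the compact compatibility-matrix constructions summarized in the bottom of Table~\ref{table:summary} (in particular the independent sub-matrix result of Theorem~\ref{thm:indSubMatrices} and the diagonal-zero result of Corollary~\ref{cor:diagZero}), which pack the required one-hot compatibility relation into $O(\log n \log\log n)$ bits per face. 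The terminal gadget $P_{\lceil \log\log n \rceil}$ then presents a single north face whose geometry encodes all $\log n$ bits of $n$.

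For phase two I would attach a standard temperature-$2$ base-$2$ aTAM counter of constant tile complexity to the north of the program supertile, using its north face as the seed row. Because such a counter is a zig-zag system in the sense of Definition~\ref{def:zigzag}, Theorem~\ref{thm:zigzagSimulation1Glue} lets me implement it in the GTAM (and thus in the 2GAM) at $\tau=1$ with only $O(1)$ extra tile types and $O(\log\log n)$ extra geometry. Running the counter to overflow produces an $O(\log n)\times n$ strip; three such strips assembled at corners form the border of the square, in the same manner as the seeded construction of Section~\ref{subsec:TheTileComplexity}. A final $O(1)$ set of filler tiles completes the interior, and the total geometry size is dominated by phase one at $O(\log n \log\log n)$.

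The hard part will be phase one's uniqueness proof. Because 2-handed assembly may combine any producible pair of supertiles in any order, any pair of gadgets whose face geometries are mutually compatible but semantically wrong threatens to yield a rogue producible supertile that could terminally lock some tiles away from the intended assembly. I would handle this by designing each level's compatibility matrix so that each gadget variant has a singleton set of legal mates, and then verifying that no ``sideways'' or ``misaligned'' attachment of supertiles admits a collision-free planar path into a stable configuration. This planar-path verification, further complicated by the fact that individual tile geometries in two dimensions must contain disconnected components to realize the required patterns, is where most of the technical work lies; the $3$-dimensional lifting of Section~\ref{sec:2d-conversion-to-3d} can then be invoked as a black box to eliminate the disconnectedness without changing the tile count or combinatorial analysis.
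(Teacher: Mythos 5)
Your approach diverges from the paper's in a way that masks the central difficulty. The paper does not build the $\log n$-bit ``program'' supertile by recursive doubling; it builds it as a single 2-handed base-$2$ counter whose columns have height $\Theta(\log\log n)$ and whose east/west interfaces span the \emph{entire column}. The reason this matters is information-theoretic: a single tile face, drawn from an $O(\log\log n)$-type palette, can assume only $O(\log\log n)$ distinct geometries, so a one-tile-wide seam between two copies of $P_{k-1}$ can realize a compatibility relation distinguishing only $O(\log\log n)$ variants. Your recursion needs the seam to disambiguate among exponentially many $P_{k-1}$ variants (there is no other place for the $\Theta(\log n)$ bits of $n$ to live), and neither Theorem~\ref{thm:indSubMatrices} nor Corollary~\ref{cor:diagZero} helps here -- those results bound the face-geometry size for a given $V \times V$ compatibility matrix, but the number of distinct rows of that matrix still cannot exceed the number of distinct tile types touching the seam. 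The paper escapes this by having the geometries on a \emph{tall, multi-tile} interface force the two combining columns to execute a long, counter-value-dependent ``wiggle'' as they slide together, so that the prong-to-socket checks effectively concatenate information from all $\Theta(\log\log n)$ tiles of the interface. You would need to reinvent this relative-motion mechanism (or something equivalent) for your recursion to work; a compatibility-matrix design on a flat one-tile abutment will not.

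There is a second, independent gap: you propose to ``design each level's compatibility matrix so that each gadget variant has a singleton set of legal mates,'' but in 2-handed assembly the incorrectly-capped columns (the ones the paper deliberately produces nondeterministically and then rejects from $C_1$) are still producible supertiles. If they cannot eventually be absorbed into the terminal assembly, the system is not directed and does not uniquely produce the square. The paper resolves this with the garbage-collecting counter $C_3$ and its buffer columns, which give every rejected column a home inside the final square. Your plan has no analogue, and merely tightening the compatibility matrix does not make the rogue supertiles go away -- it only prevents them from mating. Finally, note that the paper's $O(\log n \log\log n)$ geometry bound comes directly from the $(2^{n''}+h+4)\times(n''+2)$ prong/socket rectangles, not from the compact-matrix machinery of Section~\ref{sec:matrix}, and that $C_2$ in the paper is run as an ordinary temperature-$2$ counter (polyomino-safe in the 2HAM) rather than through the $\tau=1$ zig-zag simulation of Theorem~\ref{thm:zigzagSimulation1Glue}, which is a seeded GTAM theorem and is not stated for the two-handed model.
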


To prove Theorem~\ref{thm:2GAM-squares}, we present the following construction.

\begin{wrapfigure}{l}{3.5in}
\vspace{-15pt}
\begin{center}
    \includegraphics[width=3.0in]{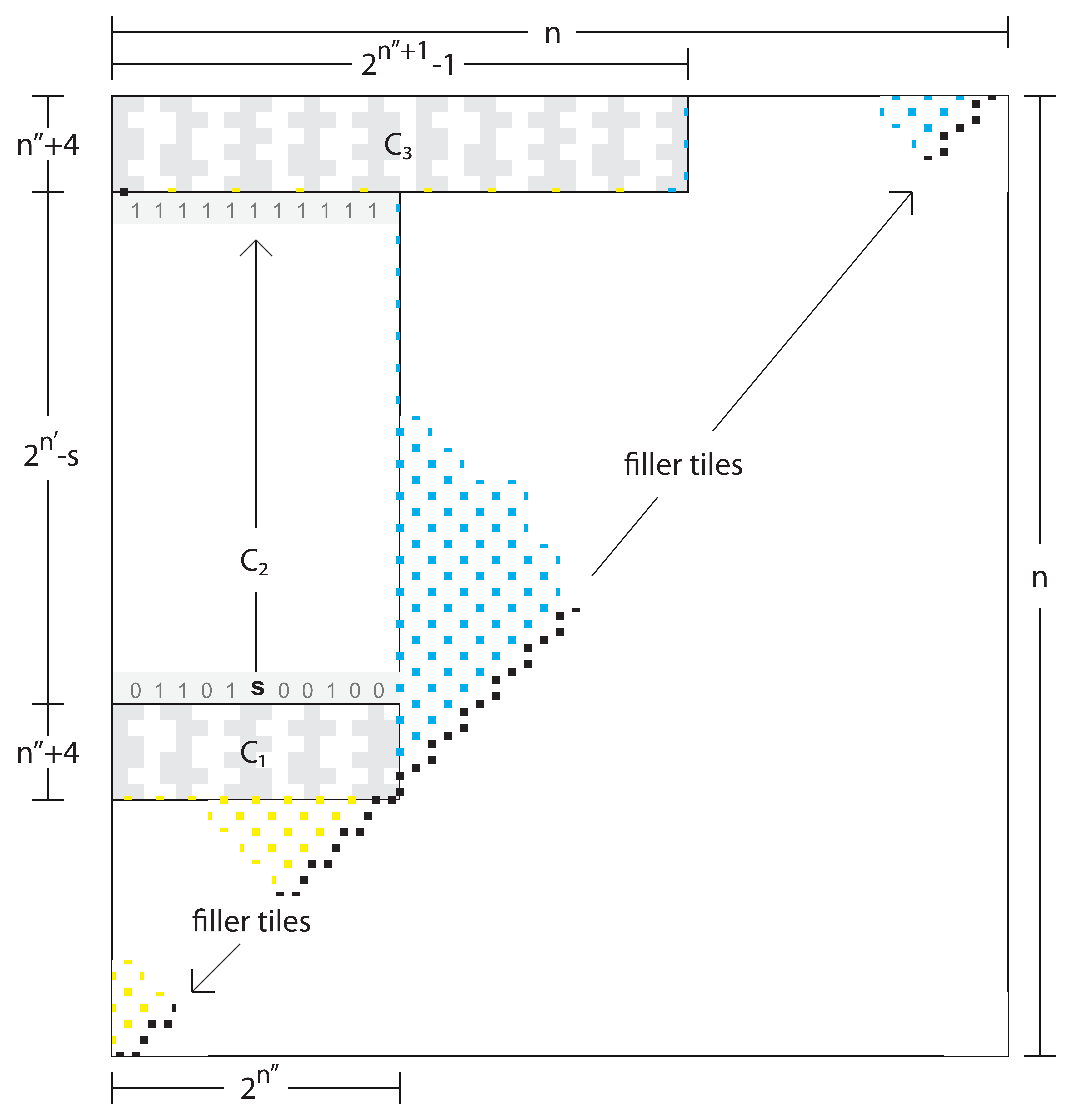} \caption{\label{fig:planar-high-level} \footnotesize A high-level sketch of the construction for building a square in the 2GAM.}
\end{center}
\vspace{-55pt}
\end{wrapfigure}

\subsubsection{High-level sketch of the construction}

Following is a list of values based on the particular dimensions of the square to be formed and which are used throughout the following discussion:

\begin{itemize}
  \item $n$: dimensions of the square to self-assemble
  \item $n'$: $\lceil \log n \rceil$
  \item $n''$: $\lceil \log n' \rceil$
  \item $s$: $2^{n'} + 2^{n''} + 2n'' + 8 - n$
  \item $h$: $2^{n''-1} - 1$
  \item $C_1$: $2$-handed counter which counts from $0$ through $2^{n''}-1$ for a total of $2^{n''}$ columns
  \item $C_2$: standard counter which counts from $s$ through $2^{n'}-1$ for a total of $2^{n'} - s$ columns
  \item $C_3$: $2$-handed counter with ``buffer'' columns which counts from $0$ through $2^{n''}-1$ for a total of $2^{n''+1}-1$ columns
\end{itemize}

Figure~\ref{fig:planar-high-level} shows a high level view of the main components of this construction. Without loss of generality, we can consider the construction to be composed of a series of sub-assemblies, or modules, which assemble in sequence, with each module completely assembling before the next begins.  The careful design of all modules ensures that none can grow so that they occupy space required by another, and that each will be able to terminally grow to precisely defined dimensions that result in the final combination forming exactly an $n \times n$ square.  For the rest of this discussion, we will describe the formation of the modules in such a sequence. (See Figure~\ref{fig:full-square-assembly-sequence} for a series of high-level images which exemplify the ordering of the formation of the square from these modules.)

Similar to the construction in Section~\ref{subsec:TheTileComplexity}, this construction makes use of one counter, $C_1$, to assemble an encoding of a number which in turn seeds another counter, $C_2$.  $C_1$ assembles in a $2$-handed manner, meaning that each number which is counted is represented by exactly one one-tile-wide column of tiles, and individual columns form separately and then combine to form the full counter of length $2^{n''}$ (similar in design to counters found in \cite{SFTSAFT}).  Each column of the counter, besides representing a counter value, is used to represent (on the north face of the northernmost tile) one bit of the seed value $s$ for $C_2$.  Each column can form in one of two versions: one that represents a $0$, and one that represents a $1$, for the corresponding bit of $s$.  The east and west sides of the tiles forming the columns of this counter contain geometries which force the columns, in order to combine, to ``wiggle'' up and down in patterns based on the counter values of those columns.  See Figure~\ref{fig:2gam-column-for-poster} for an example pair of compatible columns.  The columns also contain tiles with geometries which ``read'' those patterns of wiggling and allow columns to combine with each other if and only if they are the correct versions of the counter columns, namely those with the seed bit values which correctly correspond to their location in the counter.  It is the tiles of this component as well as those of the counter $C_3$ to which the intricate geometries are applied, and thus they receive a much more detailed explanation in Section~\ref{sec:counters-c1-and-c3}.

\begin{wrapfigure}{r}{2.5in}
    \vspace{-25pt}
    \begin{center}
    \includegraphics[height=5.5in]{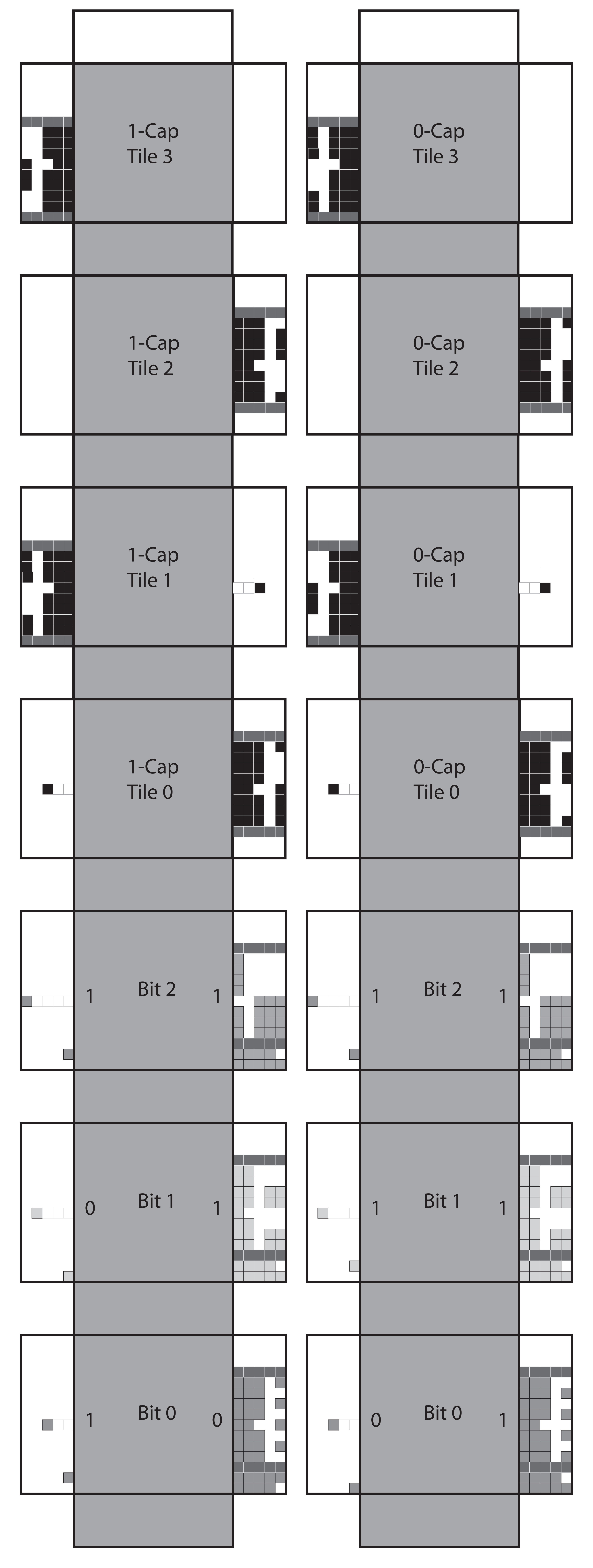} \caption{\label{fig:2gam-column-for-poster} \footnotesize Example columns for the counter $C_1$.  Note that all colored areas are filled-in, and areas colored white are empty, although they may be outlined for reference.}
    \end{center}
\end{wrapfigure}

$C_2$ is a standard binary counter (i.e. one that would also assemble correctly in the aTAM) which utilizes $16$ tile types (see Figure~\ref{fig:c2-counter-tiles}) and grows to complete the majority of the western side of the square.  Next, a small set of $7$ ``filler'' tile types (see Figure~\ref{fig:filler-tiles}) fill in the majority of the square, and once they have filled in a sufficient portion of the northern portion they provide a platform to which $C_3$ can attach (as long as $C_3$ is fully formed). In order to provide a directed system with only one terminal assembly, the ``incorrect'' columns (those which couldn't become part of $C_1$) are able to combine into the $2$-handed counter structure $C_3$ via some extra buffer columns (see Figures \ref{fig:cap-tile-geometry} and \ref{fig:column-combination-example-buffer}).  Finally, the filler tiles are able to complete the formation of the square.  Note that the tile types which make up $C_2$ and the filler tiles require no geometries but only standard glues.

By utilizing the assembly of supertiles (i.e. sub-assemblies of grouped tiles) and carefully designing geometries which force the supertiles forming $C_1$ to move in well-defined patterns as they attach, we are able to essentially ``transmit'' information about tiles in one location of a supertile to the interfaces where potential binding is occurring with other tiles in the same supertile.  By concatenating this information from such a group of distant tiles, the binding ``decision'' can be made based on an arbitrarily large amount of information (as long as the geometry sizes scale appropriately).  This results in a dramatic lowering of the tile complexity required to assemble an $n \times n$ square, with the tradeoff being an increase in the complexity of the tiles themselves.

Please see Section~\ref{sec:2GAM-results-details} for much more detail and several supplementary images describing this construction.  Additionally, a comprehensive example has been provided in Section~\ref{sec:2HAM-square-example} to which the reader can refer for additional clarity.

\subsection{Analysis of tile complexity and geometry size}

First, we analyze the tile complexity of this construction, module by module, in order to determine the overall complexity.

The tile complexity of each component is as follows:
\begin{itemize}
    \item $C_1$:
        \subitem Counter tiles: There are $n''$ bit positions which each require a constant number of tile types (as can be seen from the depiction in Figure~\ref{fig:increment-and-edge-detect-tiles}), plus the requirement for a hard-coded column on each of the west and east sides, for a total of $O(\log \log n)$ tile types.
        \subitem Cap tiles:  There are $4$ cap tile positions which each need to be able to represent a $0$ or a $1$ cap, for a total of $8$ tile types.
    \item $C_2$: 16 tile types.
    \item $C_3$:
        \subitem Counter tiles: There are $n''$ bit positions in the buffer columns which each require a constant number of tile types, for a total of $O(\log \log n)$ tile types.
        \subitem Buffer cap tiles: There are $4$ cap tile positions which each require a single tile type, for a total of $8$ tile types.
    \item Filler tiles:  7 tile types.
\end{itemize}

Thus, the total tile type complexity is $O(\log \log n) + O(1) + O(1) + O(\log \log n) + O(1) + O(1) = O(\log \log n)$.

Next, we simply note that the geometries defined for all tiles in this construction consist of rectangles of dimensions $(2^{n''} + h + 4) \times (n'' + 2) = (2^{n''} + \lceil 2^{n''} / 2 \rceil + 4) \times (n'' + 2) = O(\log n \times \log \log n)$, and therefore the geometry size is $O(\log n \log \log n)$.

\bibliographystyle{amsplain}
\bibliography{tam}

\clearpage
\appendix

\section{Formal 2GAM Definition}
\label{sec:formal-2GAM}

\paragraph{Hindrance Map}  To permit a supertile to interweave itself into an attachable position requires modeling translations of a supertile at the resolution of the size of individual units of the tile face geometries.  To this end we define a \emph{Hindrance Map} for a supertile $\alpha$ that represents the set of positions that a supertile takes up, including the bodies of each tile in the supertile, along with the positions blocked by each geometry for each tile face.

Formally, consider a supertile $\alpha$.  The Hindrance Map $H_\alpha$ is the following set of positions:
For each tiled position $(x,y)$ in supertile $\alpha$, the following points are defined to be in $H_\alpha$. (Please refer to Figure~\ref{fig:geometry-coordinates} for depictions of tile components in terms of $w$ and $\ell$, and note that the entire area occupied by a

BODY:  $\{ (i,j) | x \cdot (w + \ell) + w \leq i < x \cdot (w + \ell) + w + l,\\
                   y \cdot (w + \ell) + w \leq j < y \cdot (w + \ell) + w + l\}$\\
                   
WEST GEOMETRY: $\{(i,j) | x \cdot (w + \ell) \leq i < x \cdot (w + \ell) + w, \\
                          y \cdot (w + \ell) + w \leq j < y \cdot (w + \ell) + w + l,\\
                            (i - (x \cdot (w + \ell)), j - (y \cdot (w + \ell) + w)) \in WestGeometry(\alpha(x,y))\}$\\
                            
SOUTH GEOMETRY: $\{(i,j) | x \cdot (w + \ell) + w \leq i < x \cdot (w + \ell) + w + l, \\
                          y \cdot (w + \ell) \leq j < y \cdot (w + \ell) + w,\\
                            (i - (x \cdot (w + \ell) + w), j - (y \cdot (w + \ell))) \in SouthGeometry(\alpha(x,y))\}$\\
                            
NORTH GEOMETRY: $\{(i,j) | x \cdot (w + \ell) + w \leq i < x \cdot (w + \ell) + w + l, \\
                          (y + 1) \cdot (w + \ell) \leq j < (y + 1) \cdot (w + \ell) + w,\\
                            (i - (x \cdot (w + \ell) + w), j - ((y + 1) \cdot (w + \ell))) \in SouthGeometry(\alpha(x,y))\}$\\

EAST GEOMETRY: $\{(i,j) | (x + 1) \cdot (w + \ell) \leq i < (x + 1) \cdot (w + \ell) + w, \\
                          y \cdot (w + \ell) + w \leq j < y \cdot (w + \ell) + w + l,\\
                            (i - ((x + 1) \cdot (w + \ell)), j - (y \cdot (w + \ell) + w)) \in WestGeometry(\alpha(x,y))\}$\\

%
%

The final hindrance map $H_\alpha$ is the union of the sets BODY, SOUTH, NORTH, EAST, and WEST for each tiled position of $\alpha$.

\begin{figure}
\begin{center}
    \includegraphics[width=3.0in]{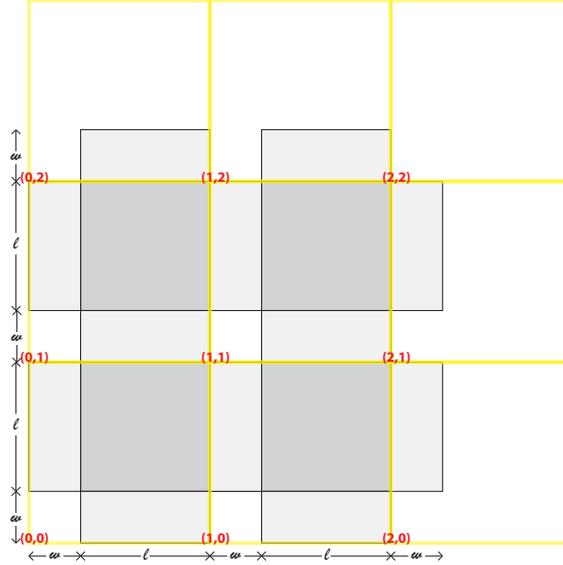}\caption{\label{fig:geometry-coordinates} Mapping of tile coordinates (red) to dimensions and units of tile bodies and geometries.}
\end{center}
\end{figure}

\paragraph{Planar Translation of Supertiles}
Given two supertiles $\alpha$ and $\beta$, a \emph{collision free} translation of $\beta$ with respect to $\alpha$ is any translation of $H_\alpha$ that can be obtained by a sequence of unit translations $\{v_0, u_1, u_2, u_3, \ldots u_r\}$ where $v_0$ is an initial translation that shifts $H_\alpha$ such that all positions of $H_\alpha$ are northwest of all positions of $H_\beta$, and each $u_i$ is one of the translations $\{ u_n = (0,1), u_e = (-1,0), u_s = (0,-1), u_w = (-1,0) \}$.  Further, after each translation $u_i$, it must be the case that $H_\alpha$ does not overlap $H_\beta$.  A \emph{grid locked collision free} translation is a collision free translation in which $H_\alpha$ has been shifted by multiples of $w +\ell$ in both the $x$ and $y$ direction.  We are interested in grid locked translations as they correspond to direct translations of $\alpha$ at the resolution of tiles, rather than the higher resolution translations of $H_\alpha$.  We require grid locked translations for a supertile to attach to another supertile as such a translation is needed for the tiles to \emph{line up}.

\paragraph{2GAM Model}
A Two-Handed Planar Geometric Tile Assembly System consists of a duple $(T,\tau)$ where $T$ is a set of geometric tile types and $\tau$ is the positive integer temperature of the system.  Given a 2GAM system $\Gamma = (T, \tau )$, a supertile is producible if either it is a single tile from T, or it is the $\tau$-stable result of a grid locked collision free translation of two producible assemblies. A supertile $\alpha$ is terminal if for every producible supertile $\beta$, $\alpha$ and $\beta$ cannot be $\tau$-stably attached. A 2GAM is directed (e.g., deterministic, confluent) if it has only one terminal, producible supertile. Given
a connected shape $X\subseteq \mathbb{Z}^2$, a 2GAM $\Gamma$ produces $X$ uniquely if every producible, terminal supertile places tiles only on positions in $X$ (appropriately translated if necessary).

\section{Classes of Tile Face Geometries}
\label{sec:geometry-classes}

Here we discuss different classes into which tile face geometries can be classified.  Note that all GTAM results presented in this paper have ``bump'' geometries, while the 2GAM result (in its $2$-dimensional form) has ``unrestricted'' geometries.  See Figure~\ref{fig:geometry-classes} for an example of each class.

\begin{wrapfigure}{l}{1.0in}
\vspace{-20pt}
\begin{center}
    \includegraphics[height=3.0in]{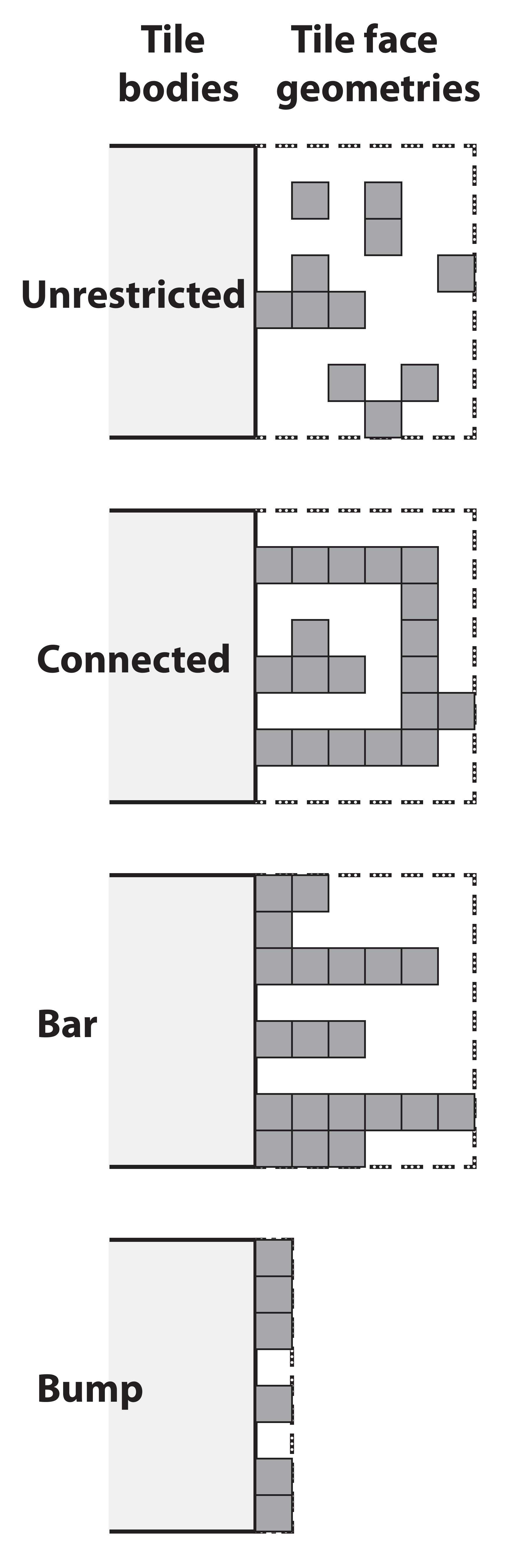} \caption{\label{fig:geometry-classes} \footnotesize Classes of tile geometries.}
\end{center}
\end{wrapfigure}

\paragraph{Unrestricted}
This is the most general class of geometries and places no restrictions on the portions of a tile face geometry region (i.e. the $w \times \ell$ rectangle) which are filled-in and which are empty.  Such geometries may be infeasibl to implement as the pieces of a geometric face may not be connected to the tile body.  However, given a third dimension it is plausible that such a scheme might be implemented by attaching particles to the face of a substrate which also attaches to the tile body.

\paragraph{Connected}
A slightly more restricted class, the connected class allows arbitrary patterns to be filled-in in within the tile face geometries as long as all such portions retain a connected path to the tile body.

\paragraph{Bar}
A bar geometry is restricted to lines of filled-in points which are connected to the tile body and extend directly away from it.  Each bar can be of length $x$ where $0 \leq x < w$.

\paragraph{Bump}
Bump geometries are the simplest possible types of geometry and consist of a set of points which are directly connected to the tile body.  This class can be thought of as simplified bar geometry with $w = 1$.

\section{Additional Details for the $O\sqrt{\log n}$ GTAM Square Construction}
\label{sec:seeded-details}

\subsubsection{Construction Notation}
Consider a positive integer $n$ (the width of the square we wish to assemble).  For the sake of clarity, assume $n$ is even.
\begin{itemize}
\item Let $n' = \lceil \log{\frac{n+1}{2}} \rceil$.
\item Let $r = \lceil \sqrt{(1/2) n' + 1/2}\rceil$.
\item Let $B = 2^{n'} - n/2 - 1$.
\end{itemize}

\begin{figure}[htp]
\begin{center}
    \includegraphics[width=6.5in]{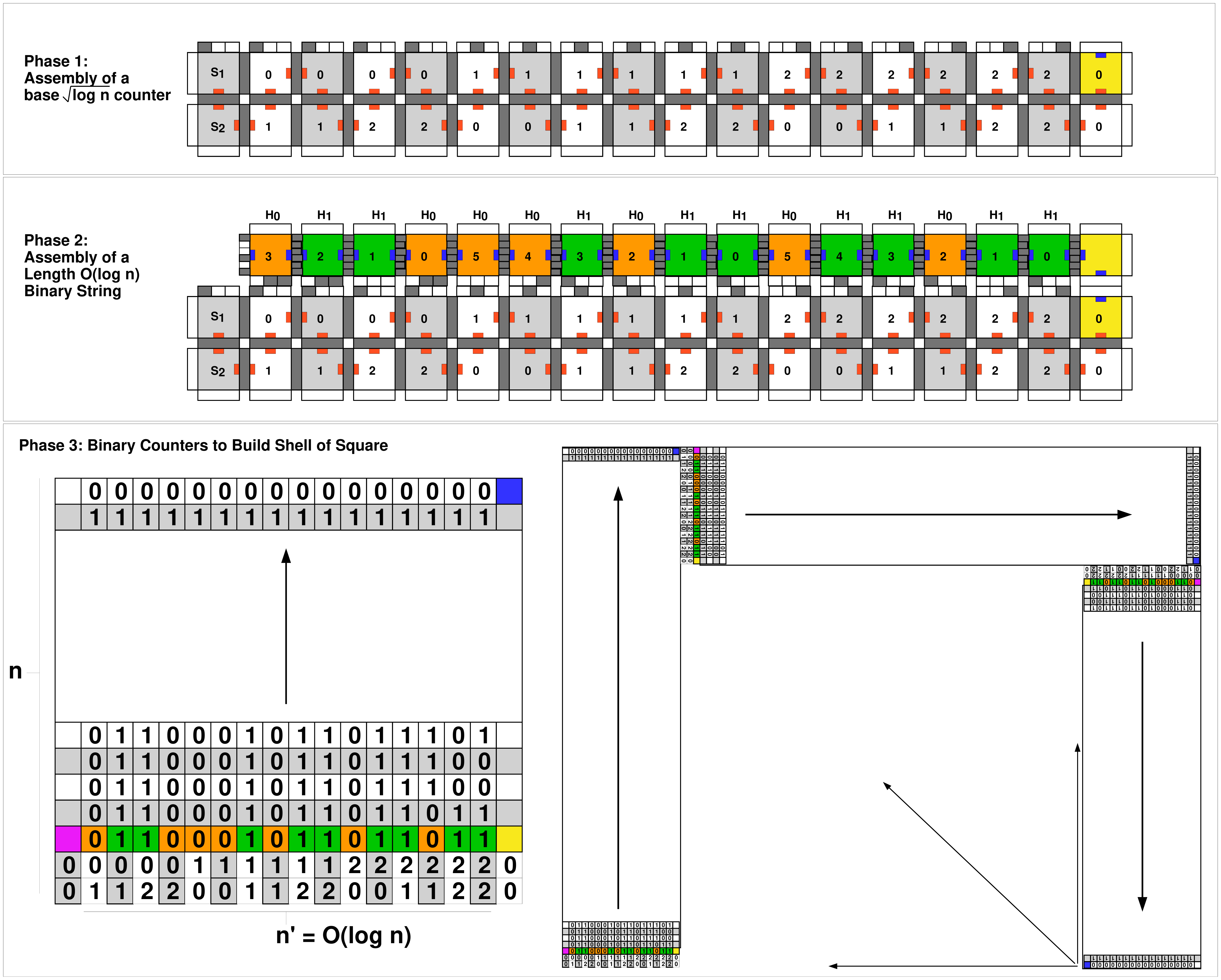} \caption{\label{fig:seededSquareOverview} \footnotesize This figure shows a high level overview of the different steps in the assembly of $n\times n$ squares in the GTAM with $O(\sqrt{\log n})$ tile types at temperature $\tau=1$.  In phase 1, a tile set that implements a 2-digit, base $O(\sqrt{\log n})$ counter is used to form a length $O(\log n)$ bed upon which a binary number will be assembled.  Phase 2 places green and orange decoder tiles which denote either a 0 or 1 bit at each position of the third row of the assembly.  In phase 3, the assembled binary string is utilized as the seed for a binary counter set of tile types which grow a length $n$ rectangle.  Phases 1-3 are repeated two more times to create the outer shell of an $n\times n$ square.  Finally, a collection of filler tiles are seeded to fill in the body of the square.   }
\end{center}
\end{figure}

\subsubsection{Base $O(\sqrt{\log n})$ Counter at Temperature $\tau=1$}\label{subsubsec:baseCounter}
\begin{figure}[htp]
\begin{center}
    \includegraphics[width=6.5in]{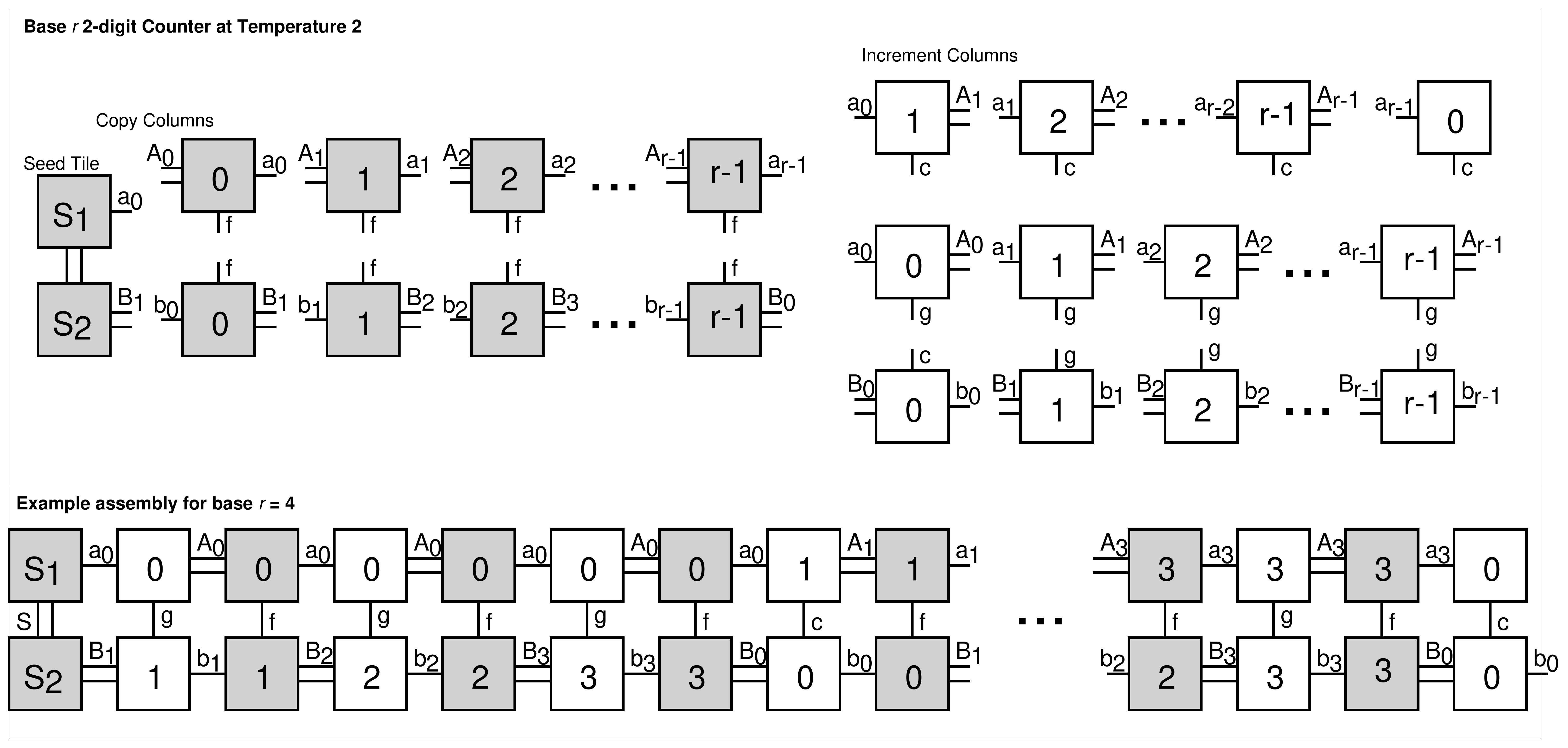} \caption{\label{fig:temp2BaseCounterWithExample} \footnotesize This figure contains a tile system that assembles a $2\times r^2$ rectangle for a given integer $r$.  The construction is an implementation of a 2-digit, base-$r$ counter with tile complexity $5r + 2$ that assembles at temperature $\tau=2$ in the standard ATAM. }
\end{center}
\end{figure}

\begin{figure}[htp]
\begin{center}
    \includegraphics[width=6.5in]{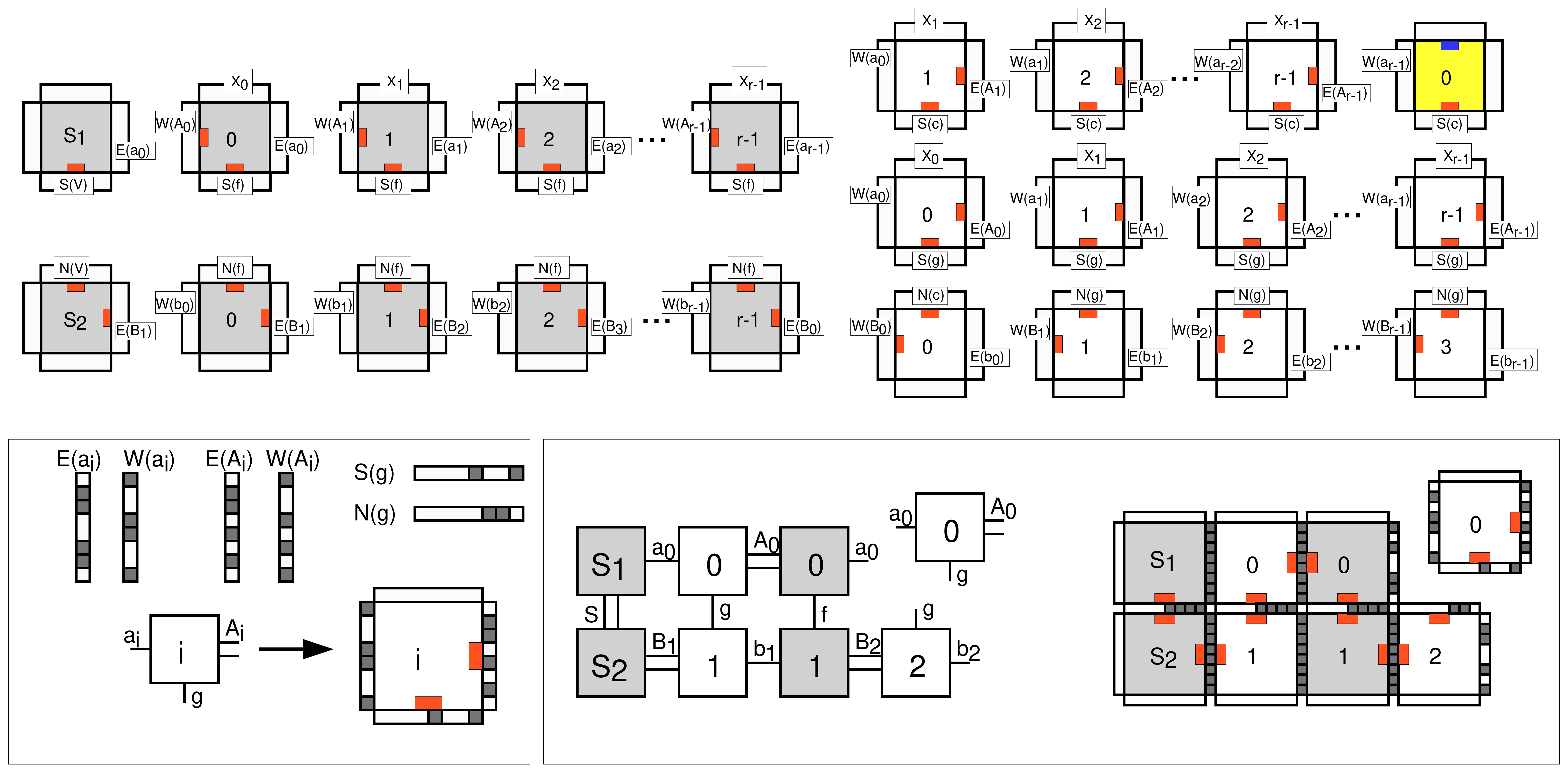} \caption{\label{fig:Temp2ToTemp1SingleTileExample} \footnotesize This tile system is the result of applying the transformation from Theorem~\ref{thm:zigzagSimulation1Glue} to the zig-zag counter described in Figure~\ref{fig:temp2BaseCounterWithExample}.  The systems utilizes a single (red) glue at temperature $\tau=1$.  For each east/west (or north/south) glue type $x$ in the initial ATAM system, there exist two corresponding geometries $W(x)$ and $E(x)$ (or $N(x)$ and $S(x)$) in the new GTAM system such that $W(x)$ is compatible with $E(x)$, but incompatible with all other geometries in the system.}
\end{center}
\end{figure}

The first step of the construction is the assembly of a $2\times \log n$ rectangle that will serve as a bed for a third layer that will place a row of tiles that represent a $\log n$ bit binary number.  The $\tau=1$ GTAM system used for this portion of the assembly is obtained by applying the transformation from Theorem~\ref{thm:zigzagSimulation1Glue} to convert an efficient $\tau=2$ ATAM system into a equivalent $\tau=1$ GTAM system.

The temperature $\tau=2$ ATAM system that will be converted is described in Figure~\ref{fig:temp2BaseCounterWithExample} and constitutes a 2-digit, base $r$ counter.  The counter works within the ATAM at temperature $\tau=2$ and is a generalization of the base-2 version first described in~\cite{RotWin00}.  By specifying the east glues of tiles $S_1$ and $S_2$, the counter can be \emph{seeded} to any specified starting value.  In the tile set given in the figure, the counter is initialized to value 0.  The value of the counter is incremented at every other column as the assembly grows from west to east, finally halting when the counter rolls over to 0.  Thus, for a given choice of $r$ and an initial seed value $b$, the final assembly will be a $2\times 2r^2 - 2b$ rectangle.  For our construction, we utilize $r = \lceil \sqrt{(1/2) n' + 1/2}\rceil$, which guarantees enough room to place a length $n'$ binary string in the next step of the construction.  As we can initialize the counter to be shorter if needed, we assume the counter has been initialized to grow to length exactly $n' + 2$ (an extra 2 positions are not used to encode bits in our constructions, thus the extra 2 length).

To modify the ATAM system of Figure~\ref{fig:temp2BaseCounterWithExample} to a $\tau=1$ system, we observe that it is a zig-zag system according to Definition~\ref{def:zigzag} (rotated 90 degrees).  Therefore, we can apply Theorem~\ref{thm:zigzagSimulation1Glue} to obtain an equivalent $\tau=1$ GTAM system shown in Figure~\ref{fig:Temp2ToTemp1SingleTileExample}.  The general case details of the conversion are detailed in Section~\ref{subsec:SimulatingTemperature}, but the basic idea of the transformation is to replace east strength 1 glues with the null glue type, and assign a unique geometry to each edge for each glue type.  In particular, for an east/west glue type x, a corresponding pair of geometries are computed, $E(x)$ and $W(x)$, such that $E(x)$ is incompatible with all other geometries within the system with the exception of $W(x)$, and vice versa.  Each occurrence of the glue type x on the east face of a tile type is replaced by a GTAM tile type with geometry $E(x)$ for the east face geometry.  The same replacement by $W(x)$ is done for west occurrences of $x$.  This geometry assignment is also applied to all north/south glue types as well.  The result is a system that assembles in the same fashion as the original temperature $\tau=2$ system and with the same tile complexity, but does so at temperature $\tau=1$.

\subsubsection{Bit Decoder Tiles}\label{subsubsec:bitDecoder}


The next step in the construction consists of a collection of decoder tiles which grow across the surface of the $2\times 2r^2$ rectangle assembled from the previous step.  The general tile set for these decoder tiles is given in Figure~\ref{fig:decoderTilesCombo}.  The growth of these tiles is initialized by the blue glue displayed by the final tile placement of the base $r$ counter from the previous section.  The decoder tiles consist of a repeating chain of $2r$ tiles with labels $0$ to $2r-1$.  The west geometry of tiles with label $i$ are compatible with the east geometries of tiles with label $(i+1)\mod 2r$, and incompatible with all other geometries.  Thus, the chain of tiles must assemble in the proper order.  Further, there exactly 2 tile types with each integer label from $0$ to $2r-1$, a \emph{green} type and an \emph{orange} type.  Our goal is to assemble a supertile that encodes a given target binary string along its surface.  The encoding is the pattern of green and orange tiles with orange tiles representing binary 0 bits, and green representing binary 1 bits.  The key to get the goal binary string assembled is to enforce that at each bit position the correct bit is chosen.  We do this by appropriately assigning geometry to the south face of the decoder tiles and the north face of the base $r$ counter tiles.

In more detail, suppose we are given a target binary string $B=b_{2r-1}\ldots b_2 b_1 b_0$ to be assembled, the geometries $G_j$, $\bar{G_j}$, and $X_i$ are assigned such that  $G_j$ and $X_i$ are compatible if and only if $b_{2r(r-1) - 2ri +j}=0$.  A detailed example of such a compatibility matrix is given in Figure~\ref{fig:decoderTilesCombo}, along with a sample set of geometry assignments to tile faces that satisfies such constraints.  More generally, such compatibility constraints can be achieved with geometry of length $r$ for a length $2r^2-2$ string $B$ by way of Theorem~\ref{thm:mostMatrices}, which is asymptotically the best achievable in most cases.  In the case of less complex binary strings, more compact geometries can be obtained as discussed in Section~\ref{sec:matrix}.  From these compatibility constraints, the desired target binary string is guaranteed to assemble.  Further, the final placed tile can be specified as a special type with a north \emph{purple} glue which seeds the next portion of the construction.

\subsubsection{Binary Counter Tiles}\label{subsubsec:binaryCounter}
The next portion of the construction, seeded by the final tile placed during the decoder tile portion, consists of a binary counter tile system which grows north, start from the binary string decoded in the previous section, up until the counter rolls over.  The construction is a $\tau=1$ GTAM version of a well known temperature $\tau=2$ ATAM construction of $O(1)$ tile types~\cite{RotWin00}.  The counter increments every other row, and thus will grow to a height of exactly $2^{n'+1} - 2B -2$, where $B$ is the initial value of the counter which is the binary string encoded in the previous phase of the construction. Our goal is for the counter to build to a height equal to $n$, the dimension of the goal $n\times n$ square.  We thus choose $B$ in the previous section to be $B=2^{n'}-n/2 -1$.

Finally, the construction is finished by observing that the binary counter construction can be implemented such that the final tile placed at the northeast corner of the assembly exposes a glue type which seeds the assembly of a rectangle that grows east for exactly $n- n' - 2$ units.  This can be accomplished in the exact same way we achieved a north growing rectangle of length $n$, but with an alternate choice of initial binary string assignment. This construction can in turn seed a south growing rectanle of the same length.  This final rectangle can seed the growth of a final $O(1)$ size collection of tile types which fills in the casing of the hollow $n\times n$ square, yielding the final full $n\times n$ square.  A high level figure depicting the construction is given in Figure~\ref{fig:seededSquareOverview}. 
\section{Details of Zig-Zag Simulation}
\label{sec:zig-zag-details}

\begin{definition}{\textbf{Nice Zig-Zag System.}}\label{def:nicezigzag} A zig-zag system $\Gamma = (T,\tau,s)$ is called a \emph{nice} zig-zag system if:
 \begin{enumerate}
    \item The terminal assembly of $\Gamma$ contains no exposed east-west non-$\null$ glue types.
    \item All producible assemblies of $\Gamma$ contain no mismatched glues.
 \end{enumerate}

\end{definition}


\begin{definition}{\textbf{Tile System Simulation.}} An ATAM or GTAM system $\Gamma_2=(T_2,\tau_2,s_2)$ is said to simulate a second ATAM or GTAM system $\Gamma_1=(T_1, \tau_1, s_1)$ if:
\begin{enumerate}
\item There exists a function $F: T_1 \rightarrow P(T_2)$ such that an assembly sequence $\langle (t_1, x_1,y_1), \ldots (t_i,x_i,y_i) \rangle$ is valid for system $\Gamma_1$ if and only if there exists tile types $r_i \in F(t_i)$  such that the assembly sequence $\langle (r_1\in F(t_1),x_1,y_1), \ldots (r_i\in F(t_i),x_i,y_i) \rangle$ is valid for $\Gamma_2$.
\end{enumerate}
The tile complexity scale factor of the simulation is defined to be $|T_2|/|T_1|$.  This definition only considers the case of simulating a system without scaling the size of the assembly (scale factor 1). See~\cite{CookFuSch11} for a more general definition of simulation that permits scaled assembly size factors.
\end{definition}

\begin{observation} The ``simulate'' relation between tile systems is transitive.  Further, if system $A$ simulates $B$ with tile type scale factor $x$, and $B$ simulates $C$ with tile type scale factor $y$, then $A$ simulates $C$ with tile type scale factor $xy$.
\end{observation}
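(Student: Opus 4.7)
The plan is to prove transitivity by composing the two witness functions. Write $F_1\colon T_B\to P(T_A)$ for the function witnessing that $A$ simulates $B$, and $F_2\colon T_C\to P(T_B)$ for the one witnessing that $B$ simulates $C$. I will define a composed function $F\colon T_C\to P(T_A)$ by
\[
F(t) \;=\; \bigcup_{s\in F_2(t)} F_1(s),
\]
and then show that $F$ witnesses the simulation of $C$ by $A$ in the sense of the definition above.

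The verification amounts to chaining the two biconditionals supplied by the hypotheses. For the forward direction, given a valid assembly sequence $\langle(t_i,x_i,y_i)\rangle$ in $C$, I will invoke the simulation of $C$ by $B$ to obtain tiles $s_i\in F_2(t_i)$ making $\langle(s_i,x_i,y_i)\rangle$ valid in $B$, and then invoke the simulation of $B$ by $A$ applied to these $s_i$'s to obtain tiles $r_i\in F_1(s_i)\subseteq F(t_i)$ making $\langle(r_i,x_i,y_i)\rangle$ valid in $A$. For the backward direction, given valid $\langle(r_i,x_i,y_i)\rangle$ in $A$ with each $r_i\in F(t_i)$, the very definition of $F$ supplies intermediate $s_i\in F_2(t_i)$ with $r_i\in F_1(s_i)$; the $A$-simulates-$B$ biconditional then upgrades validity of $\langle(r_i,x_i,y_i)\rangle$ into validity of $\langle(s_i,x_i,y_i)\rangle$ in $B$, and the $B$-simulates-$C$ biconditional in turn yields validity of $\langle(t_i,x_i,y_i)\rangle$ in $C$. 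The scale-factor claim follows immediately by telescoping, since $|T_A|/|T_C| = \bigl(|T_A|/|T_B|\bigr)\cdot\bigl(|T_B|/|T_C|\bigr) = xy$.

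The only subtle point I expect to encounter is in the backward direction: the intermediate choice of $s_i\in F_2(t_i)$ with $r_i\in F_1(s_i)$ need not be unique, so different selections could a priori produce different sequences in $T_B$. This is ultimately a non-issue because the simulation definition only demands the \emph{existence} of some valid lifting at each stage, not canonicity; making any consistent pointwise choice of $(s_i)$ and then applying the two hypotheses in sequence closes the argument without requiring any further compatibility condition between $F_1$ and $F_2$.
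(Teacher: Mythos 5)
Your proof is correct, and since the paper states this as an observation with no accompanying proof (evidently regarding it as routine), the composed-witness argument you give is precisely the argument the authors must have had in mind. Defining $F(t)=\bigcup_{s\in F_2(t)}F_1(s)$ and chaining the two biconditionals is the natural and essentially unique way to establish transitivity, and your remark that non-uniqueness of the intermediate lift $s_i$ is harmless (because the definition only quantifies existentially over such lifts) correctly dispatches the one point that might otherwise give pause.
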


\begin{lemma}\label{lemma:nicezigzag}  Any zig-zag system $\Gamma_1=(T_1, \tau_1, s_1)$ can be simulated by a nice zig-zag system $\Gamma_2=(T_2,\tau_2,s_2)$ with tile type scale factor $|T_2|/|T_1| = O(1)$.
\end{lemma}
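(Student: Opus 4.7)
The plan is to construct $\Gamma_2$ by augmenting each tile type of $\Gamma_1$ with a constant amount of additional ``role'' information that records where in the (deterministic) zig-zag assembly sequence the tile is placed. Since the next attachment is always to the north, west, or east of the previous one, each attachment in the unique assembly sequence falls into one of an $O(1)$ number of roles---for example, interior-of-row, west-end-of-row, east-end-of-row, turn-tile-above-east-end, turn-tile-above-west-end, or seed. For each $t \in T_1$ and each role $r$ in which $t$ actually appears in the assembly, I would include a copy $t_r \in T_2$ and set $F(t) = \{ t_r : r \text{ is a valid role for } t\}$. Because the number of roles is $O(1)$, this yields $|T_2|/|T_1| = O(1)$.

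To enforce Property 1 of Definition~\ref{def:nicezigzag}, I would, for each copy $t_r$ whose role places $t$ at a west-end (respectively east-end) of a row in the terminal assembly, replace its west (respectively east) face glue with the null glue. Since the zig-zag assembly never places any tile adjacent to that exposed face, this replacement does not remove any bond used during growth, and so every subsequent attachment in the unique sequence remains $\tau$-stable in $\Gamma_2$.

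To enforce Property 2, I would further refine each glue label so that the glue on a given face is a pair $(g, r')$ where $g$ is the original label and $r'$ identifies the role of the intended neighbor across that face. Because the zig-zag determinism of Definition~\ref{def:zigzag} fixes the identity, location, and hence role of every tile in the assembly, the intended neighbor of every face is unique, and the refined labels match only between intended neighbors. Consequently, adjacent tiles in any producible assembly either share matching refined labels or present at least one null glue on that face, so no mismatched glue can ever appear.

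The main obstacle will be showing that the role-labeled copies cannot be misused---for example, that an east-end copy of $t$ cannot $\tau$-stably attach at an interior position, which would break both the simulation and the zig-zag property. I would handle this by induction on the length of the assembly sequence, using the uniqueness of the $\tau$-stable attachment at step $i$ (from Definition~\ref{def:zigzag}) together with the role-sensitive refined glues to show that the only $\tau$-stably attachable tile in $\Gamma_2$ at position $(x_i, y_i)$ is precisely the role-correct copy $t_{i,r_i} \in F(t_i)$. This induction yields the bijective correspondence of assembly sequences required by the simulation definition, and, by construction, $\Gamma_2$ is both a zig-zag and a nice one.
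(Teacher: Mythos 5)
The paper states Lemma~\ref{lemma:nicezigzag} without supplying a proof, so there is no reference argument to compare against; judged on its own terms, your proposal has a genuine gap in the glue-refinement step used to secure Property~2. You assign to each face of a copy $t_r$ the refined label $(g,r')$, where $r'$ is ``the role of the intended neighbor across that face,'' and you justify well-definedness by noting that zig-zag determinism fixes the neighbor at every \emph{position}. But the labels are being attached to tile \emph{types}, and a single type $t_r \in T_2$ will in general occur at many positions in the unique assembly with neighbors of different roles across the same face. For instance, an ``interior-of-row'' tile type that occurs one column away from the east end in one row and several columns away in another has an east neighbor that is east-end in the first case and interior in the second, so there is no single $r'$ to write on its east face. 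The closing induction, which is pitched as resolving exactly this uniqueness issue, cannot start because it presupposes the ``role-sensitive refined glues'' it is supposed to justify. (There is also a smaller bookkeeping issue: if $t_r$'s east glue is $(g, r')$ with $r'$ the neighbor's role, and the neighbor $u_{r'}$'s west glue is $(g, r)$ with $r$ the role of \emph{its} west neighbor, the two labels on the shared edge differ whenever $r \ne r'$, so abutting faces no longer carry equal glues at all; you need a symmetric encoding on each edge.)

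A related determinism problem is already visible in your handling of Property~1. Nulling only the exposed west/east glue of $t_{\text{east-end}}$ leaves its west and south input glues identical to those of $t_{\text{interior}}$, so both copies $\tau$-stably attach at every interior position where $t$ attaches. That gives $\Gamma_2$ more than one producible terminal assembly (the branch that places $t_{\text{east-end}}$ at an interior location stalls when the row cannot continue east), which violates clause~1 of Definition~\ref{def:zigzag}, and the construction would not even output a zig-zag system, let alone a nice one. Restoring determinism therefore requires refining the \emph{input} glues, but the refinement you propose relies on information about neighbors whose role is not determined by $t_r$ alone. What the argument needs, and what is currently missing, is a choice of role labels (and a way of stamping them onto glues) for which the label on a face is a function only of data that is already available and consistent across every occurrence of that tile type in the assembly --- for example, information propagated forward from previously placed neighbors --- rather than of the roles of neighbors across the face, which can vary with position.
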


\begin{proof} of Theorem~\ref{thm:zigzagSimulation}:
Consider a zig-zag system $\Gamma = (T,2,s)$.  By Lemma~\ref{lemma:nicezigzag}, there exists a zig-zag system $\Gamma' = (T',2,s')$ that simulates $\Gamma$ with $O(1)$ tile type scale such that the assembly of $\Gamma'$ has the ``nice'' properties described in Definition~\ref{def:nicezigzag}.

We now define a system $\Upsilon=(R,1,q)$ that simulates $\Gamma'$ with $O(1)$ tile type scale, and thus simulates $\Gamma$ with $O(1)$ tile type scale by the transitivity of simulation.

Let $\sigma'_w$ denote the set of all west-east glues that are represented in the tile set $T'$.  Let $\sigma'_n$ denote the set of all north-south glues that are represented in the tile set $T'$.  Let $H$ denote an injective mapping from the glue types represented in $T'$ to some new set of strength-1 glues $\rho$.

For each $t\in T'$, define the the geometric tile type $r_t$ as follows.  Denote the north, south, east, and west glues of $t$ as $north(t)$, $south(t)$, $east(t)$, and $west(t)$ respectively.  Let the west glue of $r_t$ be $west(r_t)= H(west(t))$ and $east(r_t) = H(east(t))$.  If $north(t)$ is a strength-2 glue, then $north(r_t) = H(north(t))$, otherwise $north(r_t) = null$.  If $south(t)$ is a strength-2 glue, then $south(r_t) = H(south(t))$, otherwise $south(r_t) = null$.

We assign the empty set geometry to all east/west edges of tile types in $R$. We assign non-empty geometries to north/south edges of each tile type $r_t$ such that the south edge of a given $r_t\in R$ is compatible with a north edge of a tile type $r_v\in R$ if and only if $south(t) = north(v)$.  By Theorem~\ref{cor:diagZero}, we can achieve such compatibility requirements with a geometry of size at most $\log \sigma'_n + \log\log\sigma'_n$.

We now show that the system $\Upsilon=(R,1,r_{s'})$ simulates $\Gamma'=(T',2,s')$ (with tile type scale factor 1).  Consider the function $F(t) = \{r_t\}$ to map tile types from $T'$ to $R$.  Suppose $\Upsilon$ correctly simulates $\Gamma'$ up to the first $i-1$ steps.  That is, for the first $i-1$ assembly sequence steps $\langle (t_1, x_1,y_1), \ldots (t_{i-1},x_{i-1},y_{i-1}) \rangle$ of the unique assembly sequence of  $\Gamma'$,  the first $i-1$ steps of the assembly sequence of $\Upsilon$ are $\langle (r_{t_1}, x_1,y_1), \ldots (r_{t_{i-1}},x_{i-1},y_{i-1}) \rangle$.  Consider the $i^{th}$ step in the assembly sequence of $\Gamma'$, $(t_i, x_i, y_i)$.  We know that the placement position $(x_i,y_i)$ must occur north, west, or east of the previously placed tile position $(x_{i-1}, y_{i-1})$.  If the placement occurs to the north, then $(r_{t_i},x_i,y_i)$ is a valid step in the assembly sequence for $\Upsilon$.  This is because the tile transformation explicitly assigns tile type $r_{t_{i-1}}$ a strength-1 north glue that matches the south glue of $r_{t_{i}}$, as $t_{i-1}$ must have a north strength-2 glue that matches the south glue of $t_i$.  Further, by the the fact that $\Gamma'$ is nice, there are no exposed east/west glue faces of the $\Gamma'$ assembly after $i-1$ steps, and thus the $\Upsilon$ assembly will have no other exposed glues (of strength-1) beyond the single north glue of $r_{t_{i-1}}$.  This implies that the placement of $r_{t_i}$ at position $(x_i,y_i)$ is the only possible next tile placed for system $\Upsilon$.

 Now suppose the $i^{th}$ tile attachment occurs to the east of the previously placed tile in the $\Gamma'$ assembly.  The required strength-2 attachment threshold for the $i^{th}$ tile can be achieved by two strength-1 glues, or by a single strength-2 glue.  In the cooperative strength-1 glue case, we know that $r_{t_i}$ is the only tile in $R$ that both matches the east glue of $r_{t_{i-1}}$ and has a compatible geometry with the north face of the tile type at position $(x_{i}, y_{i}-1)$.  Thus, $(r_{t_{i}},  x_i, y_i)$ is a valid next element of $\Upsilon$'s assembly sequence, and is the only valid next element by the nice properties of $\Gamma'$.  For the case of a strength-2 east attachment, we know that the $(r_{t_{i}},  x_i, y_i)$ attachment is valid for $\Upsilon$ because of the the matching strength-1 east glue of $r_{t_{i-1}}$ and west glue of $r_{t_{i-1}}$, and because the tile type south of $(x_i,y_i)$, if there is one, is guaranteed to have a compatible north geometry with $r_{t_{i}}$ by the ``no mismatched glue'' property of nice zig-zag systems.  The attachment is also unique because of the ``no exposed glues'' property of nice zig-zag systems.  The remaining west attachment case is analogous to the east attachment case.

 Therefore, $\Upsilon=(R,1,r_{s'})$ simulates $\Gamma'=(T',2,s')$ with tile type scale $|R|/|T'|=1$, and therefore also simulates $\Gamma=(T,2,s)$ with tile type scale $O(1)$.  The size of the geometry of the simulation is at most $\log \sigma'_n + \log\log\sigma'_n$ by Theorem~\ref{cor:diagZero}, and is thus $\log {2\sigma_n} + \log\log{2\sigma_n} = \log\sigma_n + \log\log\sigma_n + O(1)$ where $\sigma_n$ is the number of north/south glue types in $T$.
\end{proof}

\begin{proof} of Theorem~\ref{thm:zigzagSimulation1Glue}:
We use the same approach for simulating nice zig-zag systems as given in the proof for Theorem~\ref{thm:zigzagSimulation}, with the modification that all non-null glue types from system $T'$ are mapped to a single strength-1 glue type $x$.  Further, rather than empty set geometry assigned to all east west glues in the simulation set, we assign geometries that satisfy a compatibility matrix which assigns a 0 to entries which correspond to identical glues, and 1 to entries corresponding to non-identical glues.  Thus, while there is a single glue, only the appropriate tile with geometry representing the appropriate east/west glue will attach.  By the same analysis given for Theorem~\ref{thm:zigzagSimulation1Glue}, we achieve a simulation set for any zig-zag system with $O(1)$ tile type scale and $\log\sigma +\log\log\sigma + O(1)$ geometry size.
\end{proof}

\section{Additional details for 2GAM Results}
\label{sec:2GAM-results-details}

This section includes additional details and images describing the construction in Section~\ref{sec:2GAM-results}.

\begin{figure}[htp]
\begin{center}
    \includegraphics[width=7.0in]{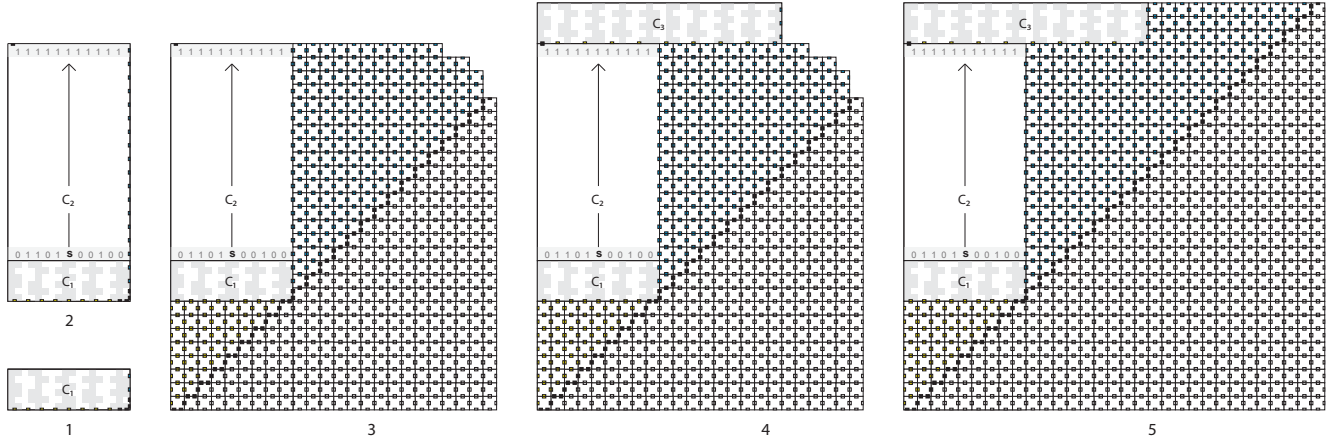} \caption{\label{fig:full-square-assembly-sequence} \footnotesize A sketch of the assembly sequence of the square and its components:  1. first the counter module $C_1$ assembles; 2. the initial value $s$ is seeded by $C_1$, which allows $C_2$ to assemble; 3. filler tiles complete a sufficient portion of the square; 4. A fully formed version of $C_3$ attaches; and 5. filler tiles complete the square.}
\end{center}
\end{figure}

\subsection{The $2$-handed counters $C_1$ and $C_3$}
\label{sec:counters-c1-and-c3}

For each tile in this construction, the bodies are squares of $(2^{n''} + h + 4) \times (2^{n''} + h + 4)$ geometric units.  (See Figure~\ref{fig:complex-geometry-tile-dimensions} for an example.) The north and south geometries consist of $(2^{n''} + h + 4) \times (n'' + 2)$ rectangles of geometric units (i.e. rectangles of length $\ell = 2^{n''} + h + 4$ and width $w = n'' + 2$).  All north geometries are completely empty, while all south geometries are completely filled in.  However, the east and west geometries are composed of $(n'' + 2) \times (2^{n''} + h + 4)$ rectangles of geometric units (i.e. rectangles of length $\ell = 2^{n''} + h + 4$ and width $w = n'' + 2$) which contain intricate collections of gaps, which we call \emph{sockets}, and projections, which we call \emph{prongs}.

Note that while the techniques utilized in this construction can be generalized to form counters of arbitrary bases, the tile complexities are asymptotically identical regardless of the base, so for simplicity of explanation we utilize only base $2$ counters.

\begin{figure}[htp]
\begin{center}
    {\subfloat[{\scriptsize Dimensions of the geometric units of a counter tile for bit position $p$, representing bit $b_1$ on its west side and $b_2$ on its east side.  White portions indicate areas which are not filled in, while grey areas are filled in (although areas $b_1$, $b_2$, and $B_2$ contain a mixture).}]
    {\label{fig:complex-geometry-tile-dimensions}\includegraphics[width=3.0in]{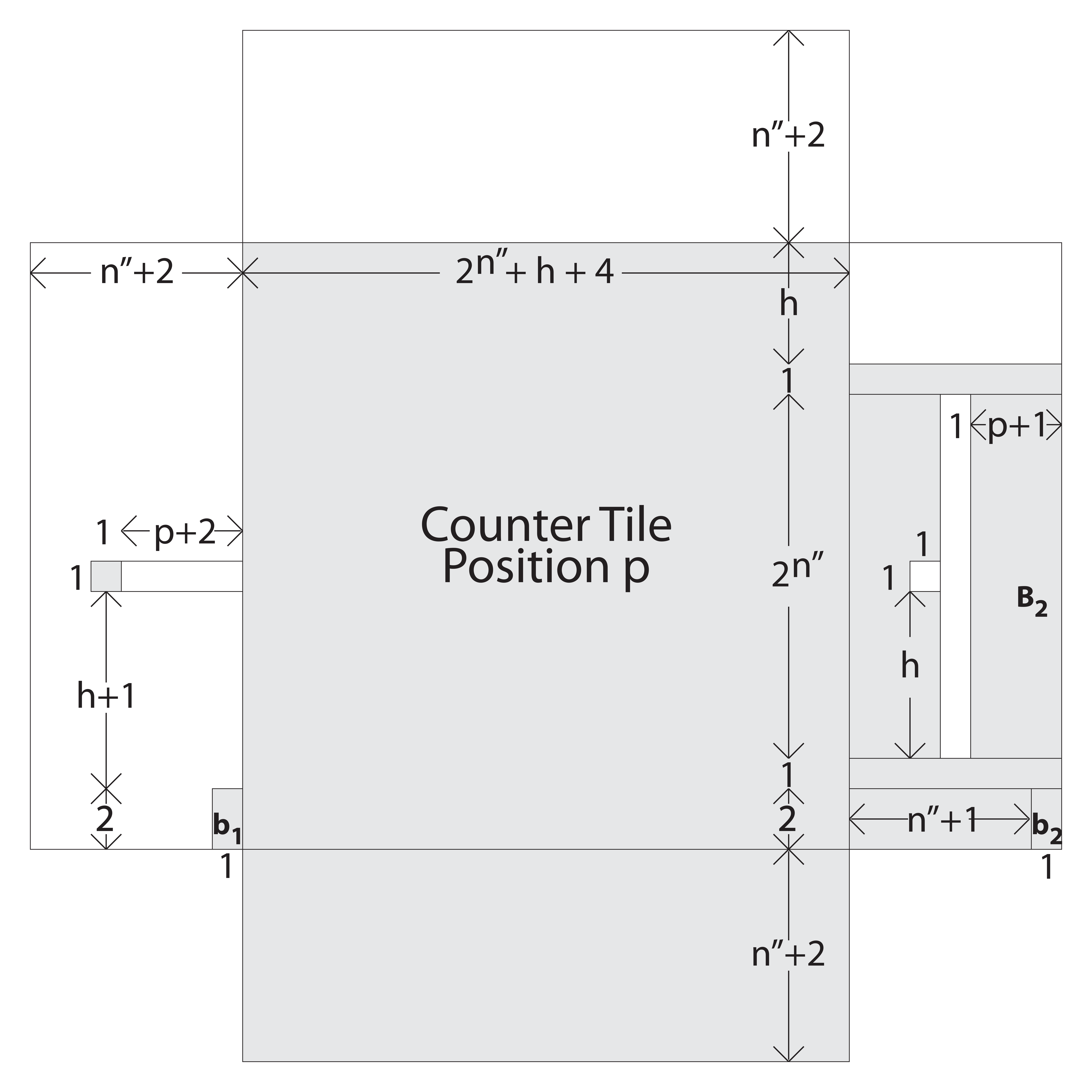}}}
    \quad
    {\subfloat[{\scriptsize Patterns used to represent bit values $b_1$ (left) and $b_2$ (right) on the southern corners of west and east geometries, respectively.}]
    {\label{fig:Soul-patch-pattern}\includegraphics[width=1.0in]{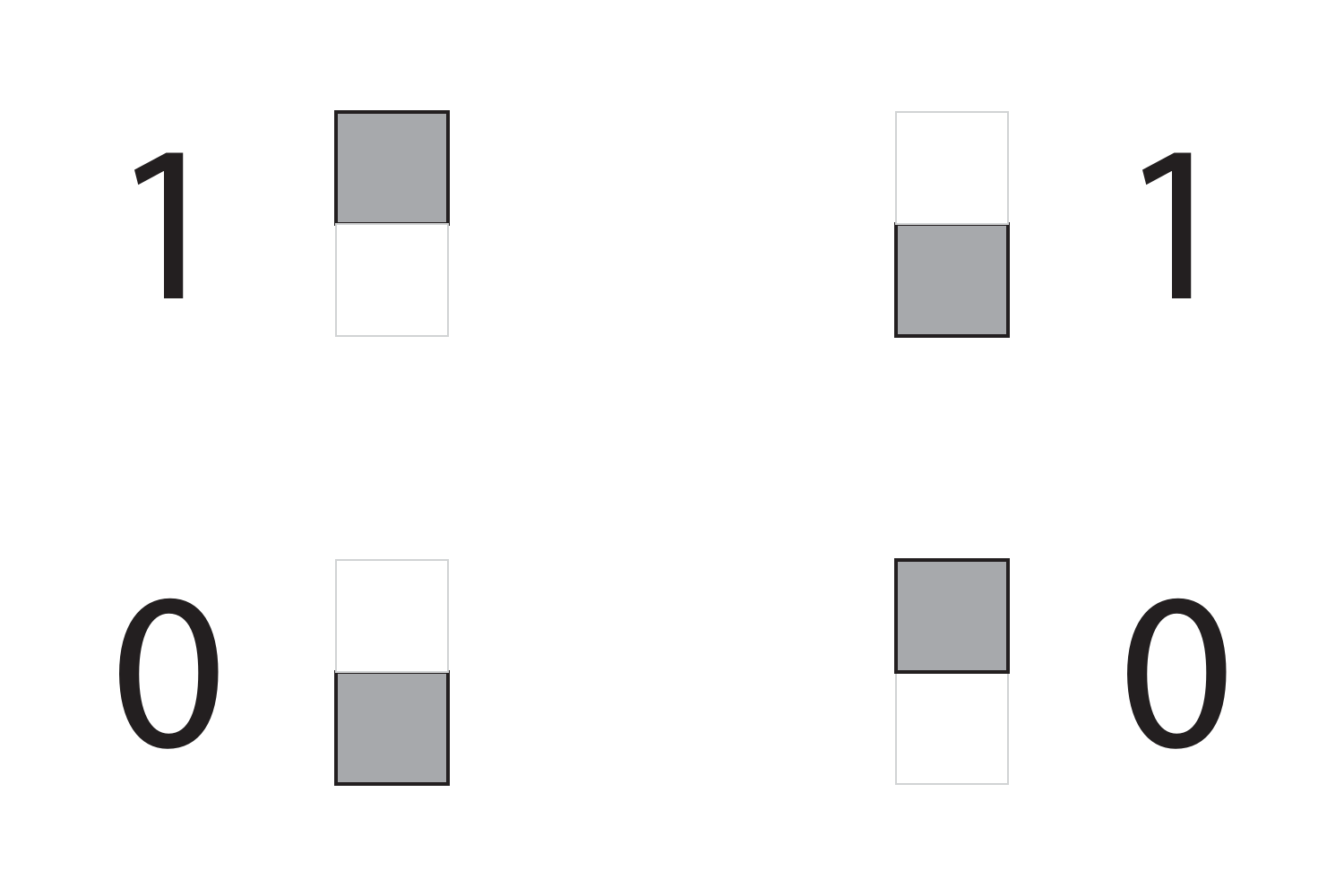}}}
    \quad
    \vspace{-5pt}
    \caption{\footnotesize Details of the geometries of the counter tiles.}
    \label{fig:counter-tiles}
\end{center}
\end{figure}

The purpose of the counter module $C_1$ is to count from $0$ through $2^{n''}-1$, for a total of $2^{n''}$ values, each represented by exactly one column.  The northern glue on the northernmost tile of each of column is used to represent one of the bits of the value $s$ (from left to right, the most significant bit to the least significant).  Each column is composed of $n'' + 4$ tiles arranged vertically.  The tiles in each column perform two tasks: 1. represent an $n''$-bit binary number $m$ for $0 \leq m < 2^{n''}-1$, and 2. present a northern glue which represents the correct value ($0$ or $1$) for the $2^{n''} - (m + 1)$th bit of $s$.  The southern $n''$ tiles, which we call the \emph{counter} tiles, each represent one bit of $m$, from top to bottom the most significant bit to the least significant.  The northernmost $4$ tiles are gadgets called \emph{caps} which serve to represent the bit values on the north of the columns, while ensuring their correct positioning relative to the bits of $s$.  There are two possible caps which can form, a $0$-cap and a $1$-cap.  Either cap can nondeterministically attach to any column of counter tiles to provide the $4$ northernmost tiles of any column (except for the leftmost and rightmost columns, which are special cases discussed later).  This allows for the formation of two versions of the columns which represent each counter value $m$: one that represents a $0$ on the north and one that represents a $1$. It is the purpose of the cap to ensure that only the version of the column with the correct bit of $s$ (i.e. for the column representing the value $m$ and thus the $2^{n''} - (m + 1)$th bit of $s$, the one representing the bit value matching the $2^{n''} - (m + 1)$th bit of $s$) can attach to neighboring columns.

Each column can completely form independently of every other column, and in fact must be fully formed before it can combine with any other column.  For each bit position of the counter, there are two tile types, one representing a $0$ and one a $1$.  Only those for the least significant bit position have strength-$1$ glues on their east and west, while those for every other bit position have $null$ glues on their east and west sides.  However, all counter tiles have geometries on their east and west sides which will be explained shortly.  The north and south sides of these tiles have flat geometries and strength-$2$ glues which allow them to nondeterministically combine with the bit values above and below (but only in the correct order of significance).  Thus, the counter portions of each column can form by nondeterministic combinations of exactly one tile type for each bit position, and therefore columns can form which represent each of the values $0$ through $2^{n''}-1$.  For each counter value, two columns can form: one with a $0$-cap and one with a $1$-cap. The east and west sides of the cap tiles have geometries but $null$ glues, except for the northernmost which has strength-$1$ glues on the east and west.  The north and south sides of each cap tile are similar to those of the counter tiles in that they have only flat geometries and strength-$2$ glues, except for the northernmost cap tile which has a strength-$1$ glue that represents the bit of $s$ provided by that column.  (The rightmost, and therefore least significant, bit of $s$ will be represented by a strength-$2$ glue, and both the most significant and least significant bits will be represented by special glues which convey both the bit values and the most and least significance of those positions.)  The positioning of the two strength-$1$ east and west glues in each column, on only the top and bottom tiles, ensures that the only way that two columns can possibly combine with each other (keeping in mind that this construction operates at temperature $2$) is if they are fully formed (i.e. they contain all $4$ cap tiles and exactly one counter tile for each bit position).

\begin{figure}
  \begin{center}
  {\subfloat[{\scriptsize  Template for the tile set which allows the formation of columns representing binary numbers in the range $1$ through $2^{n''}-2$ (preventing the formation of numbers $0$ and $2^{n''}-1$), and which increment the binary numbers represented by each column from west to east.  The bit significance positions for each group of tiles are shown.  Note that a unique group of tile types is created for every bit position, although those for positions $2$ through $n''-2$ are shown in a single group.  Each tile has strength-$2$ glues on its north and south edges (except for the north edge of the northernmost group and the south edge of the southernmost group).  The east and west bit values are represented in the actual tile set by the east and west geometries.}]{\label{fig:increment-and-edge-detect-tiles}\includegraphics[height=2.6in]{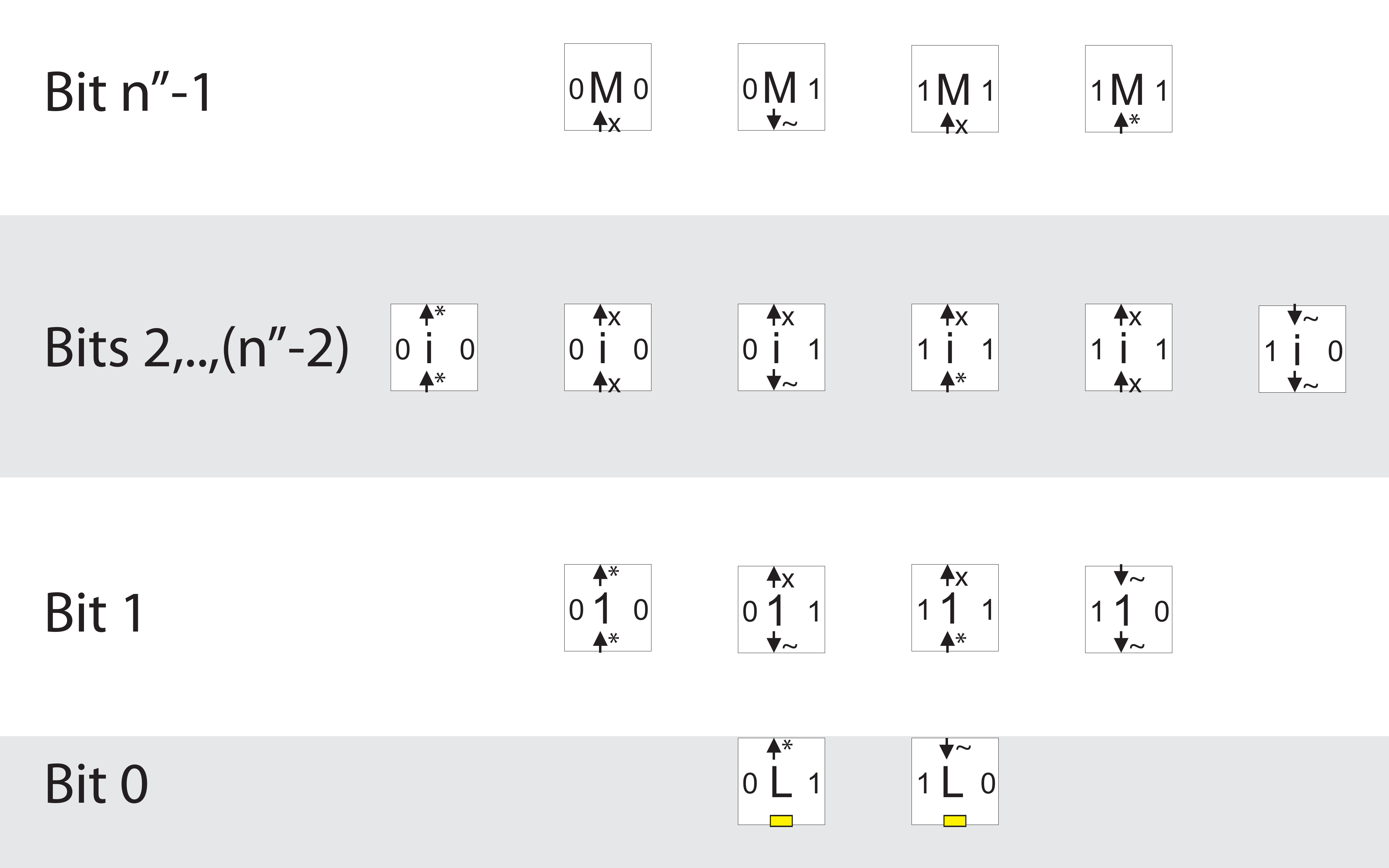}}}
  \quad
  {\subfloat[{\scriptsize Filler tiles which fill in the bulk of the square.}]{\label{fig:filler-tiles}\includegraphics[height=0.9in]{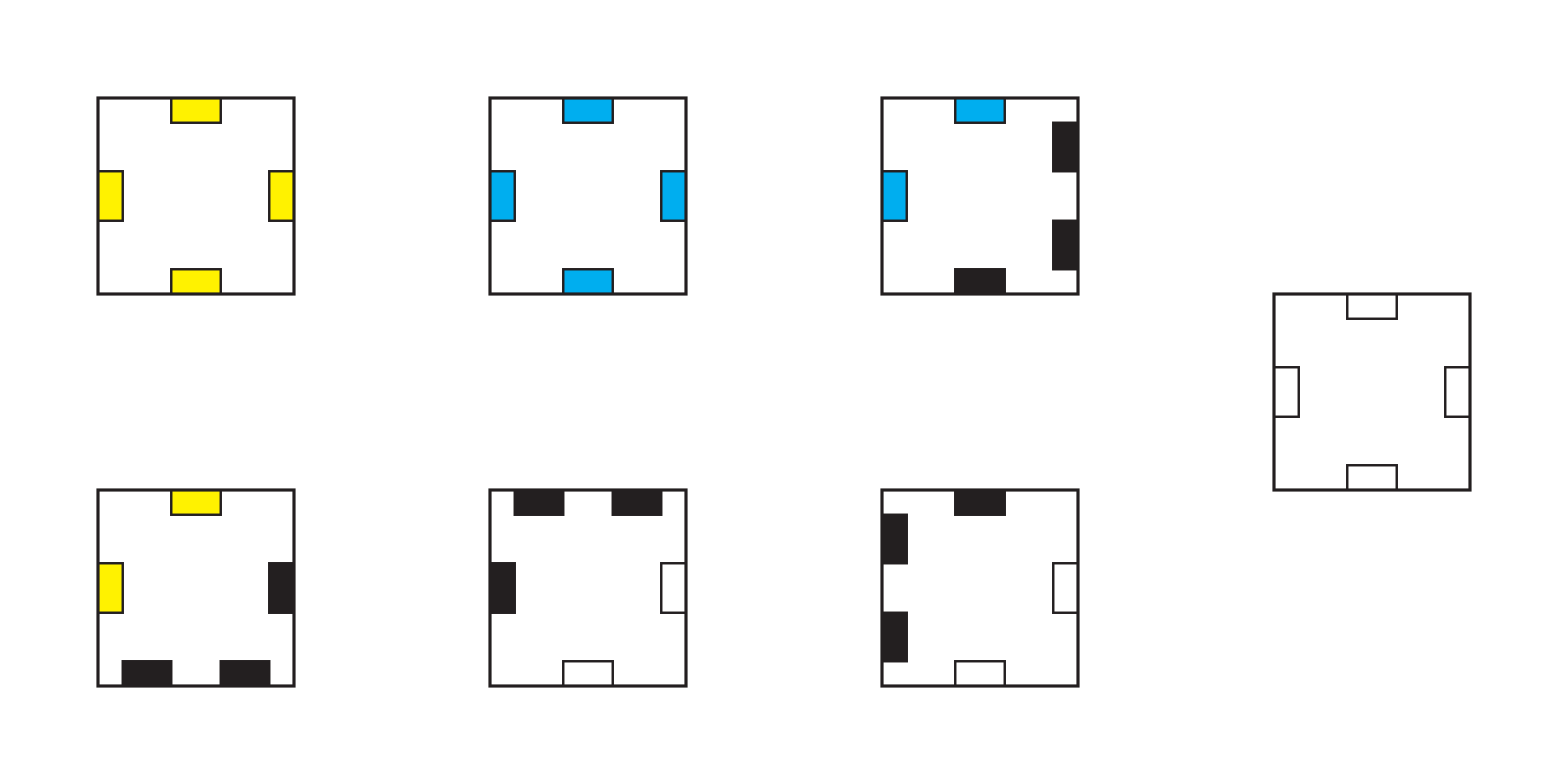}}}
  \quad
  \vspace{-5pt}
  \label{fig:counter-tiles}
  \end{center}
\end{figure}

\subsubsection{Counter tiles}

Figure~\ref{fig:increment-and-edge-detect-tiles} shows a tile set which performs portions of the functionality of the counter tile types.  This tile set is designed to form $1 \times n''$ tile columns representing $n''$-bit binary numbers, with specific groups of tile types designed for various groups of bit positions.  All tiles have strength-$2$ glues on their north and south sides (except for the north glue of those in the most significant bit position and the south glue of those in the least significant bit position) which match only those of the adjacent edges of tiles for adjacent bit positions.  The columns formed represent $n''$-bit binary numbers on their west and east sides such that for each number $x$ represented on the west, the value $x+1$ is represented on the east.  Additionally, every possible $n''$-bit binary value can form \emph{except} for the values $0$ and $2^{n''}-1$ (i.e. all $0$'s or all $1$'s).  Those values are represented by the leftmost and rightmost columns, respectively, of the counter $C_1$ and are formed by a separate set of hard coded tile types since the leftmost column must expose a special glue on the north which marks that bit of $s$ as the most significant bit, and the rightmost column must expose a special strength-$2$ glue on the north which marks that bit of $s$ as the least significant bit, and it must also have flat geometries and strength-$1$ glues along its eastern side which can attach to the filler tiles.

For each bit position $p$ (where $0 \leq p < n''$), in order to represent bit values $b_1$ on the west and $b_2$ on the east, the geometries are designed as shown in Figure~\ref{fig:complex-geometry-tile-dimensions}.  The $1 \times 2$ rectangles labeled $b_1$ and $b_2$ each have exactly one position filled in and one left empty, depending on the bit values being represented and following the patterns shown in Figure~\ref{fig:Soul-patch-pattern}.  It is these regions which enforce the restriction that only if two columns represent the same bits in all positions can they line up and come together completely to combine.  The remaining area of the west geometry is completely empty except for one unit located $h + 3$ units from the southern edge and $p + 2$ units from the east.  This single point is referred to as the prong, which will slide into a complementary socket in the east geometry of a tile in the same bit position of a column to the left.

The portion of the east geometry to the left of region $b_2$ is a $2 \times n''+1$ rectangle which is completely filled in.  The geometry immediately to the north of that creates the socket, which has one empty column $p+1$ units from the east and one additional empty spot $h$ units up that column and one unit to the west.  The area labeled $B_2$ consists of $p+1$ columns which are divided horizontally into a number of blocks dependent upon $p$, the bit position being represented.  It is divided into $2^{n'' - p}$ blocks of width $p+1$ and height $2^p$.  Those blocks can be thought of as being numbered from the southernmost as $0$, through the northernmost as $2^{n'' - p}$.  If $b_2 = 0$, then each even numbered block is empty and each odd numbered block is filled in.  If $b_2 = 1$, then the even numbered blocks are empty and the odd numbered blocks are filled in.  Examples can be seen in Figure~\ref{fig:column-combination-example1}.

Intuitively, the eastern socket is split into two groups of regions - those representing a $0$ and those representing a $1$.  If the socket is part of a $0$-tile (i.e. a tile which represents a $0$ for the particular bit of the counter value), the regions representing $0$ are left open and those representing $1$ are filled in.  For a $1$-tile, the opposite is true.  The number of regions is determined by the bit position.  For the most significant bit position, there are only two regions, with the bottom representing $0$ and the top $1$.  The tile for the next bit down has four regions, which are essentially each of the regions of the previous tile split into $2$ sections, with the bottom of each new pair representing $0$ and the top $1$.  This splitting and doubling continues downward until the tile for the least significant bit which has $2^{n''}$ regions consisting of individual squares.  In a $0$-tile, the bottom of each pair starting from the bottom is left open and the top of each is filled in, and vice versa for a $1$-tile.

The portion of the east geometry above the socket, which is an empty $n'' + 2 \times h$ rectangle, is referred to as the \emph{padding} region.  A padding region exists on the north of each west and east geometry, and its purpose is to guarantee that the geometries of tiles in one bit position of a column will not collide with the geometries of those in adjacent bit positions of a column correctly combining with it.  This padding region is sufficient because the maximum relative translation that two columns which can combine with each other will ever need to experience is a vertical offset of $2^n{''}/2$, and the padding region plus the separation provided by the north-south geometries of combined tiles is at least that large.

\subsubsection{Cap tiles}

The remainder of each column of $C_1$ consists of four tiles called the \emph{cap} which combine to form a vertical column one tile wide and $4$ tiles high.  There are two possible types of cap, a $0$-cap and a $1$-cap, which can nondeterministically attach to any partial column representing a counter value to form the top four tiles of a complete column.  It is the job of a $b$-cap (for $b = 0$ or $1$) to ensure that only if it has attached to a counter value $y$ (which will be positioned at the location of the $(2^{n''} - (y + 1))$th bit of $s$) and $b$ matches that bit of $s$, only then can that column attach to other columns within $C_1$ (specifically, those columns representing $y-1$ and $y+1$) and thus become a portion of that counter.  Otherwise, an unmatching column (i.e. one to which the ``wrong'' cap has nondeterministically attached) is relegated to use within $C_3$, which will be described later.

Please refer to Figure~\ref{fig:cap-tile-geometry} for a graphical depiction of the geometries of the cap tiles. The north and south geometries of all cap tiles are the same as those of the counter tiles:  a completely filled in $(2^{n''} + h + 4) \times (n'' + 2)$ south geometry and a completely empty $(2^{n''} + h + 4) \times (n'' + 2)$ north geometry.  The prong in the west geometry of cap tile $0$ is located $h + 3$ units from the souther and the prong in east geometry of cap tile $1$ is located $2h + 3$ units from the north.  Each consists of a single filled-in location $3$ units away from the bodies.  The remainder of those geometries are empty.  Additionally, the west geometry of cap tile $2$ and the east geometry of cap tile $3$ are completely empty.  Thus, it is the $4$ sockets, one on each cap tile, which provide the bulk of the functionality of the caps.

The east geometries of cap tiles $0$ and $2$ each have two empty rows on the south and a buffer region on the north.  In between are sockets which each have one completely filled-in row above and below them.  The west geometries of cap tiles $1$ and $3$ each have one filled-in row on the south, a buffer region on the north, two empty rows south of the buffer, one filled-in row south of that, and sockets in between the two filled-in rows.  The socket patterns of the cap tiles occupy a single column and are referred to as $S_1$ and $S_2$, plus their complements $\overline{S_1}$ and $\overline{S_2}$.  (Note that the sockets $\overline{S_1}$ and $\overline{S_2}$ are not ``utilized'' during the combination of columns which form into $C_1$, but only for those which become part of $C_3$.)  These columns are $2^{n''}$ units tall, and by associating each of these units with one of the $2^{n''}$ possible values of a counter column to which a cap can attach, it is possible to utilize the pattern of gaps and blocks in a socket to determine which type of cap ($0$ or $1$) can be attached to a particular counter value while allowing that column to become part of $C_1$.  By virtue of the fact that the prongs are relatively short compared with those of the counter tiles, and due to the patterns of sockets and prongs on the counter tiles, compatible columns (which can only combine if they are completely formed, so we only consider completely formed columns in this discussion) are forced to align in such a way that, during their combination, the prong of a cap which is incident upon the socket of a neighboring cap will be at a position which corresponds to the value represented by the counter value.  In this way, we can allow the prong to slide into an ``accepting'' gap if and only if the cap type is correct for that value, namely if that counter value should be associated with a $0$ or $1$ cap and thus a $0$ or $1$ bit of the number $s$.

To provide this functionality, the socket patterns for a cap of type $b$ are the following. $S_1$ treats the $2^{n''}$ units as representing each value $k$ for $0 \leq k < 2^{n''}-1$ counting downward from the top to bottom.  For the $k$th unit, if $k$ coincides with a numbered bit position of $s$ which has the bit value $b$, there is a gap, otherwise it is filled in.  $\overline{S_1}$ is the exact complement (which is equivalent to designing $S_1$ for bit value $1 - b$).  $S_2$ treats the $2^{n''}$ units as representing one ``invalid'' value plus each value $k$ for $0 \leq k < 2^{n''}-2$, counting upward from the bottom to top with the first position being the invalid, and thus always filled-in, position.  Again, for the $k$th unit, if $k$ coincides with a numbered bit position of $s$ which has the bit value $b$, there is a gap, otherwise it is filled in, and again, $\overline{S_2}$ is the exact complement (with the exception that for both the invalid location is filled in).

Intuitively, the reason for the north-south reversal and offset of $1$ for socket values between the pairs of sockets $S_1$ and $S_2$ is that the prongs that will be validating the combination of counter-to-cap values will be positioned (during the combination of a pair of columns) relative to a single counter value, and thus the sockets need to use that one value to validate the two columns which represent (in the correct case) two consecutive values.  Additionally, the motion of the prongs relative to the respective sockets will be vertically reversed due to the fact that they are on opposite sides of the combining columns.  (Also, recall that whenever two columns attempt to combine which do not encode consecutive values, the patterns encoded in regions $b_1$ and $b_2$ will prevent their combination.)  Essentially, $S_2$ and $\overline{S_2}$ are designed so that if they are included in a column which encodes the counter value $y$, the positioning of the prong from a complementary column to the right, which would represent the counter value $y + 1$, would align it with holes if and only if the bit value of the $((y + 1) - 1)$th, or $y$th, bit of $s$ equals $b$.

\subsubsection{Buffer columns and $C_3$}

As previously mentioned, every counter column can form in two versions, one with a $0$-cap and one with a $1$-cap. The design of the glues and geometries is sufficient to guarantee that a fully formed $C_1$ can and will include exactly one version of each column, namely that which will present the correct value for the corresponding bit of $s$.  In order to ensure that this construction is directed, and thus has exactly one terminal assembly, the versions of columns which are not included in the formation of $C_1$ must be included somewhere within the eventual $n \times n$ square.  That is the purpose of $C_3$.

Intuitively, $C_3$ is a counter of approximately double the length of $C_1$ which incorporates all of the ``bad'' columns (meaning those with the wrong caps to be included in $C_1$), but since they cannot directly combine with each other, each pair is separated by a \emph{buffer} column that not only allows them to combine but ensures that they are in fact bad columns (and hence the approximate doubling of the length as compared to $C_1$).

Each buffer column is formed of two portions, similar to the other columns.  However, the bottom portion of buffer columns simply represent a single binary value from the range $1$ through $2^{n''} - 1$, inclusive.  This value is represented on both the west and east sides rather than an incremented value appearing on the east side as in the counter columns, and the geometries which represent each bit are the same as those for the counter tiles.  The logic provided by the north-south glues must simply ensure that columns representing every binary number from $1$ through $2^{n''}-1$ can form, thus simply excluding a column of all $0$'s.  (This is because the counter column representing $0$ encodes the number $1$ on its east, and thus the buffer column need never combine with a column representing the number $0$.)  Thus, there will be a buffer column which can be sandwiched by every consecutive pair of bad columns.

There is only one version of each of the buffer cap tiles, which can be seen in Figure~\ref{fig:cap-tile-geometry}.  Similar to the tiles of the counter columns, the only non $\null$ glues on the west and east edges are strength-$1$ glues on the east and west of the buffer counter tiles in the least significant bit position (the southernmost in each column) and the buffer cap tile $3$ (the northernmost in each column).  As shown, the only ``interesting'' geometry is that of the prongs on buffer cap tile $2$ and $3$.  These prongs are positioned so that they must fit into the sockets of complementary counter columns which allow combinations of counter values with cap values that are the opposite of those required for inclusion in $C_1$.  In this way, full buffer columns can combine on both sides with exactly those full counter columns which received the ``wrong'' cap and in the correct order to count from $0$ through $2^{n''}-1$ and therefore construct $C_3$.

\begin{figure}[htp]
\begin{center}
    \includegraphics[height=8.5in]{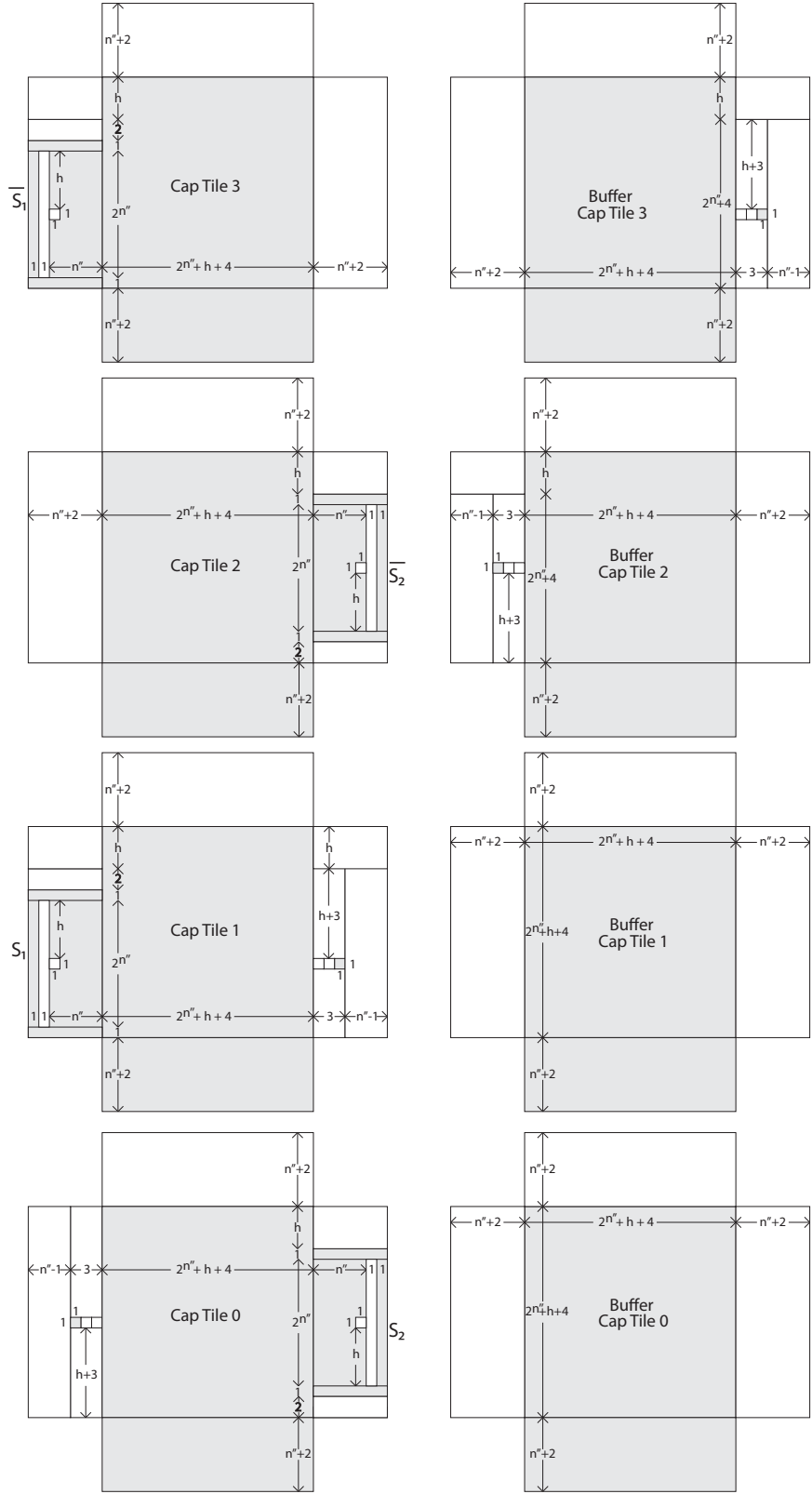} \caption{\label{fig:cap-tile-geometry} \footnotesize Dimensions of the features in the geometries of the counter column cap tiles as well as the buffer column cap tiles. White portions indicate areas which are not filled in, while grey areas are filled in (although socket areas $S_1$, $S_2$, $\overline{S_1}$, and $\overline{S_2}$ contain a mixture}.
\end{center}
\end{figure}

\subsection{Details of the counter $C_2$}

Once $C_1$ has completely formed, the binary number $s$ is encoded on its northern face as a series of strength-$1$ glues, along with a strength-$2$ glue on the north of the east-most tile, which is the least significant bit of $s$.  To these glues, tiles of the types shown in Figure~\ref{fig:c2-counter-tiles} attach to perform the counting of $s$ through $2^{n'}-1$.  This is a basic counter which can assemble in both the standard aTAM as well as the 2HAM, resulting in the same produced assembly since it is polyomino-safe (see \cite{Winfree06}).  It is notable that as each row assembles, it ``checks'' whether or not it is the maximal value of the counter and, if so, causes a tile to be placed in the west-most position which has a special north-facing glue that will eventually allow $C_3$ to attach.

\begin{figure}[htp]
\begin{center}
    \includegraphics[width=4.0in]{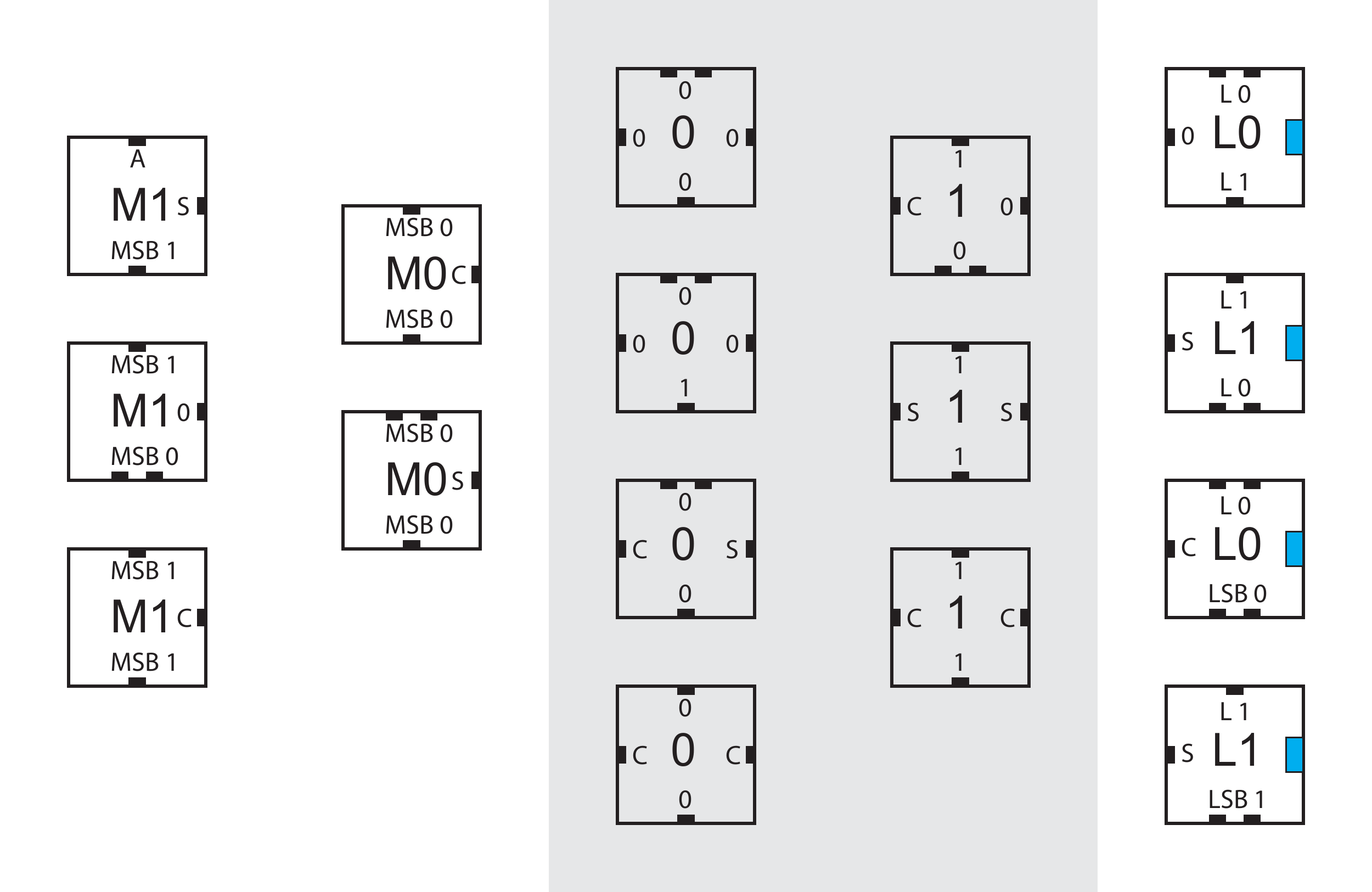} \caption{\label{fig:c2-counter-tiles} \footnotesize The tile types which form the counter $C_2$, grouped by significance of bit position (i.e. tile types for the most significant bit position are in the leftmost group, those for the ``interior'' bit positions are in the middle group, and those for the least significant bit position are in the rightmost group.}
\end{center}
\end{figure}

\subsection{Completion of the square}

Figure~\ref{fig:full-square-assembly-sequence} shows the sequence in which the full assembly can progress.  Once $C_1$ has completely formed and $C_2$ has completely grown northward from it, filler tiles are able to complete a sufficient portion of the rest of the square that a fully formed (and only a fully formed) $C_3$ can attach.  The filler tiles are shown in Figure~\ref{fig:filler-tiles}.  Once $C_3$ has attached, the remaining filler tiles are able to complete the formation of the full $n \times n$ square.

\subsection{An example: constructing a $250 \times 250$ square}
\label{sec:2HAM-square-example}

Here we provide an example of portions of this construction as applied to a $250 \times 250$ square.

\begin{itemize}
  \item $n$: 250
  \item $n'$: $\lceil \log n \rceil = \lceil \log 250 \rceil = 8$
  \item $n''$: $\lceil \log n' \rceil = \lceil \log 8 \rceil = 3$
  \item $s$: $2^{n'} + 2^{n''} + 2n'' + 8 - n = 2^8 + 2^3 + 2(3) + 8 - 250 = 28 = 00011100_2$
  \item $h$: $2^{n''-1} - 1 = 2^2 - 1 = 3$
  \item $C_1$: $2$-handed counter which counts from $0$ through $2^{n''}-1 = 2^3 - 1 = 7$ for a total of $2^3 = 8$ columns
  \item $C_2$: standard counter which counts from $s = 28$ through $2^{n'}-1 = 2^8 - 1 = 255$ for a total of $2^{n'} - s = 256 - 28 = 228$ columns
  \item $C_3$: $2$-handed counter with ``buffer'' columns which counts from $0$ through $2^3-1 = 7$ for a total of $2^{n''+1}-1 = 2^4 - 1 = 15$ columns
\end{itemize}

Figure~\ref{fig:example-dimensions} shows the dimensions for the components of the square for this example, as well as the dimensions of the tile bodies and geometries of counter tiles.

\begin{figure}
  \begin{center}
  {\subfloat[{\scriptsize Dimensions of the modules of the square.}]{\label{fig:square-dimensions-example}\includegraphics[height=3.0in]{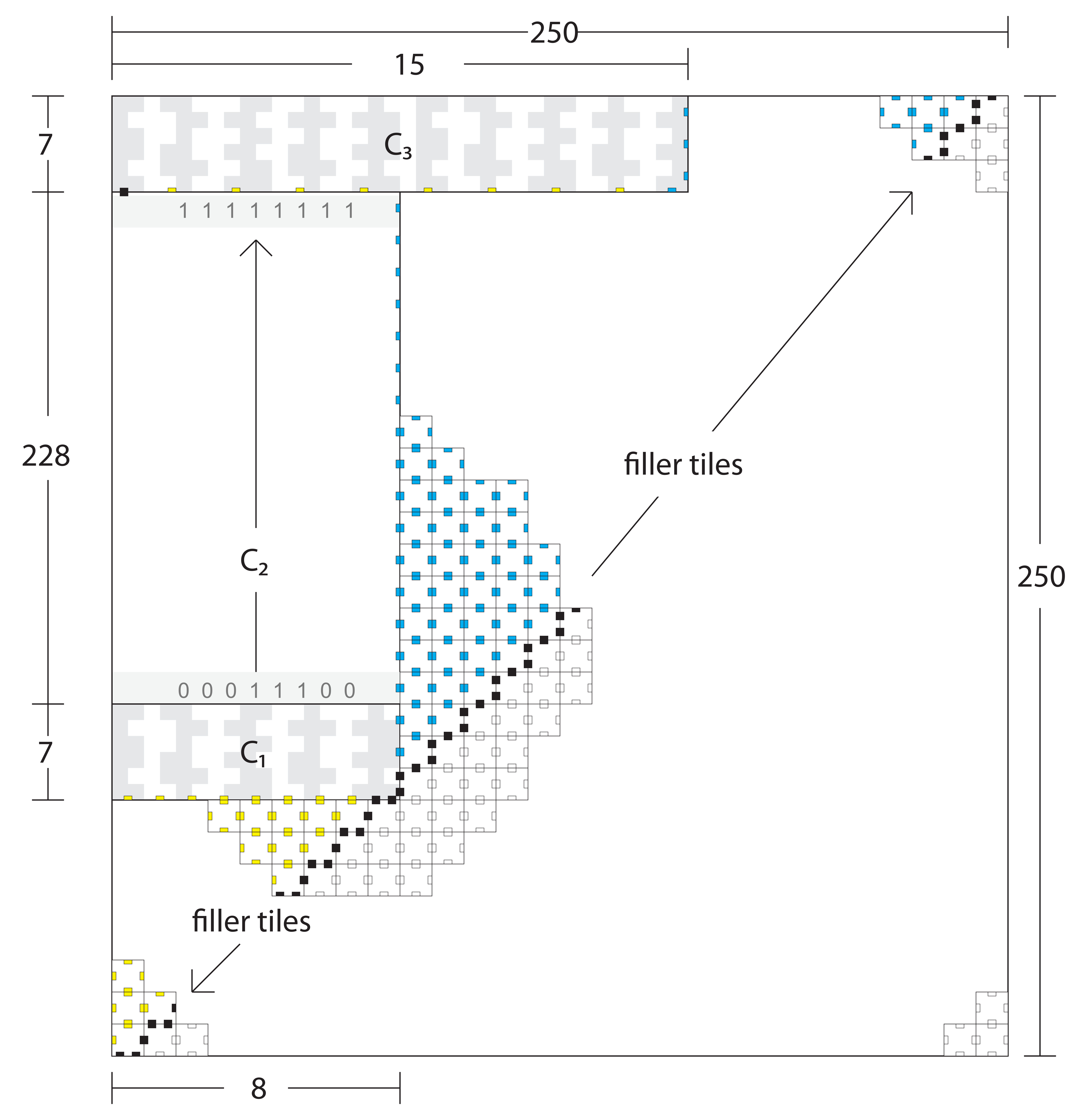}}}
  \quad
  {\subfloat[{\scriptsize Dimensions of the geometries of the counter tiles.}]{\label{fig:tile-dimensions-example}\includegraphics[height=3.0in]{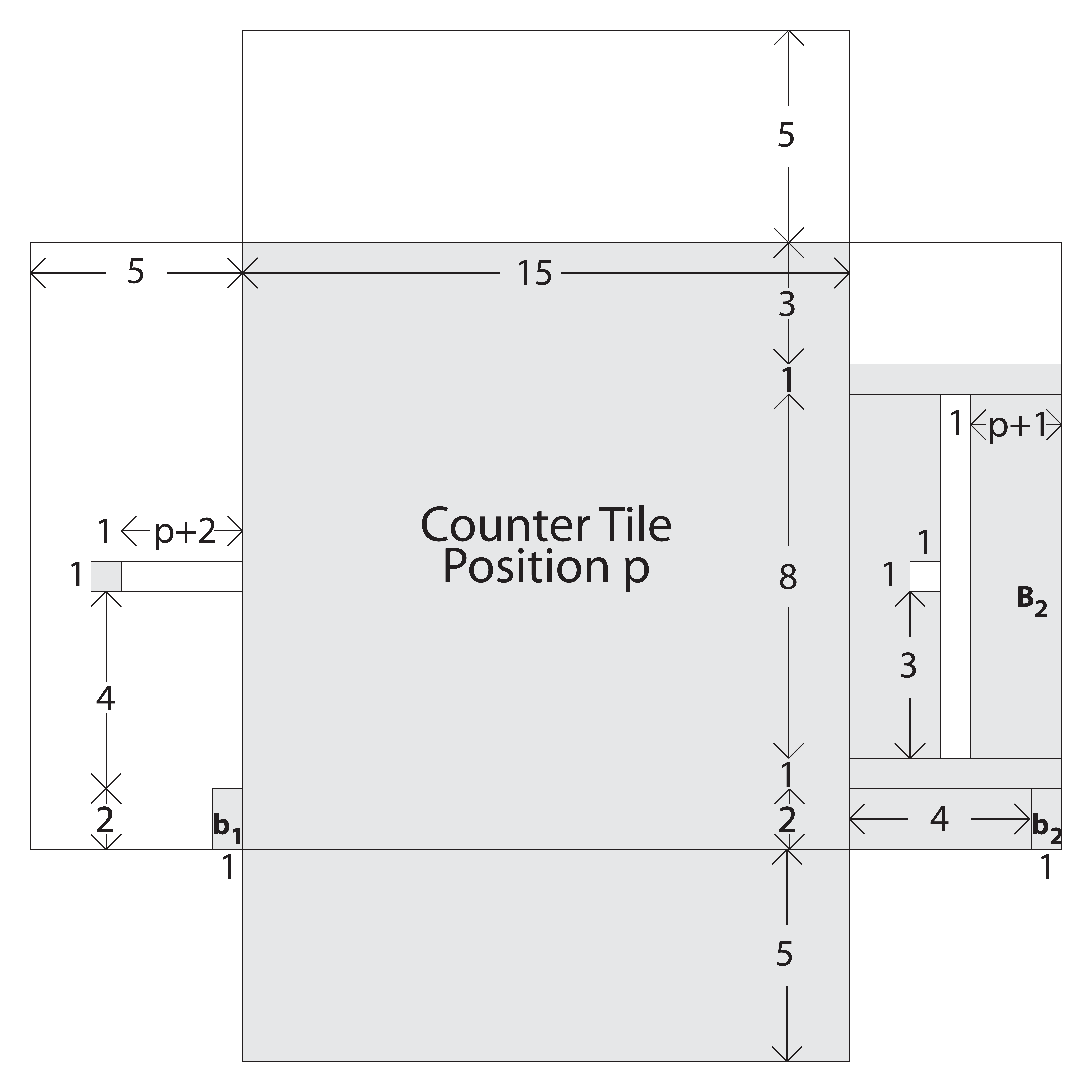}}}
  \quad
  \vspace{-5pt}
  \caption{\footnotesize Dimensions of the components of a $250 \times 250$ square in the $2$GAM construction.}
  \label{fig:example-dimensions}
  \end{center}
\end{figure}

The geometries of the counter tiles are a straightforward application of the generic definitions for the counter tile dimensions, which are based on the value of $n''$ and $h$, to form versions which represent $0$ and $1$ for each bit position.  However, the geometries of the cap tiles are also dependent upon the binary value of $s$ (padded to $8$ bits), in this case ``$00011100$'', since they perform a mapping of counter values to bit values of $s$.  In this case, the geometries of socket $S_1$ in the $0$-cap, which has positions for all $8$ possible counter values, must have openings in exactly the positions $0,1,5,6,$ and $7$ (and hence filled-in locations at positions $2,3,$ and $4$).  Recall that the positions are numbered from north to south for the $S_1$ sockets.  The $1$-cap should have the exact opposite pattern of openings and filled-in locations for its socket $S_1$, as should the socket $\overline{S_1}$ of the $0$-cap.  Clearly, the socket $\overline{S_1}$ of the $1$-cap should match the $0$-cap's $S_1$ socket.

 \begin{table*}\small
\centering
\tabcolsep=0.5\tabcolsep
\begin{tabular}{l|c|c|c|c|c|c|c|c|c|}

\textbf{bit position} & 0 & 1 & 2 & 3 & 4 & 5 & 6 & 7
\\ \hline
\textbf{bit of s} & 0 & 0 & 0 & 1 & 1 & 1 & 0 & 0
\\ \hline
\rule{0cm}{0.4cm} \textbf{$S_1$ for a $0$-cap} & & & X & X & X & & &
\\ \hline
\rule{0cm}{0.4cm} \textbf{$\overline{S_1}$ for a $0$-cap} & X & X & & & & X & X & X
\\ \hline
\rule{0cm}{0.4cm} \textbf{$S_2$ for a $0$-cap} & X & & & & X & X & X &
\\ \hline
\rule{0cm}{0.4cm} \textbf{$\overline{S_2}$ for a $0$-cap} & X & X & X & X & & & & X
\\ \hline
\rule{0cm}{0.4cm} \textbf{$S_1$ for a $1$-cap} & X & X & & & & X & X & X
\\ \hline
\rule{0cm}{0.4cm} \textbf{$\overline{S_1}$ for a $1$-cap} & & & X & X & X & & &
\\ \hline
\rule{0cm}{0.4cm} \textbf{$S_2$ for a $1$-cap} & X & X & X & X & & & & X
\\ \hline
\rule{0cm}{0.4cm} \textbf{$\overline{S_2}$ for a $1$-cap} & X & & & & X & X & X &
\\ \hline

\end{tabular}
\caption{Patterns for each combination of socket and cap type.  Each ''X'' represents a location which is filled-in.  Note that the numbering for the $S_1$ sockets starts at the north and increases southward, while it is the opposite for the $S_2$ sockets.}
\label{table:socket-patterns}
\end{table*}

The geometries of the $S_2$ sockets perform a mapping of counter values to bit values of $s$ as well, but they perform a ``one-off'' mapping by matching bit values of $s$ with the counter values which are $1$ greater than the bit positions.  Additionally, the direction in which the positions of the socket are labeled is reversed, meaning they are counted from the south to north.  Therefore, the sockets $S_2$ for the $0$-cap and $\overline{S_2}$ for the $1$-cap should have openings at exactly positions $1,2,3,$ and $7$ counting from top to bottom, while the sockets $\overline{S_2}$ for the $0$-cap and $S_2$ for the $1$-cap should have openings at exactly positions $4,5,$ and $6$.  Note that the $0$th position of $S_2$ sockets, due to the one-off mapping, is an invalid position which we always fill in (intuitively, because the counter value used to align with an $S_1$ socket is always $1$ greater than the number represented by the column containing the socket).  Please see Table~\ref{table:socket-patterns} for a full listing of the socket patterns and Figures \ref{fig:column-combination-example1}, \ref{fig:column-combination-example2}, and \ref{fig:column-combination-example-buffer} to view examples of counter and buffer columns.

\begin{figure}[htp]
\begin{center}
    \includegraphics[width=6.5in]{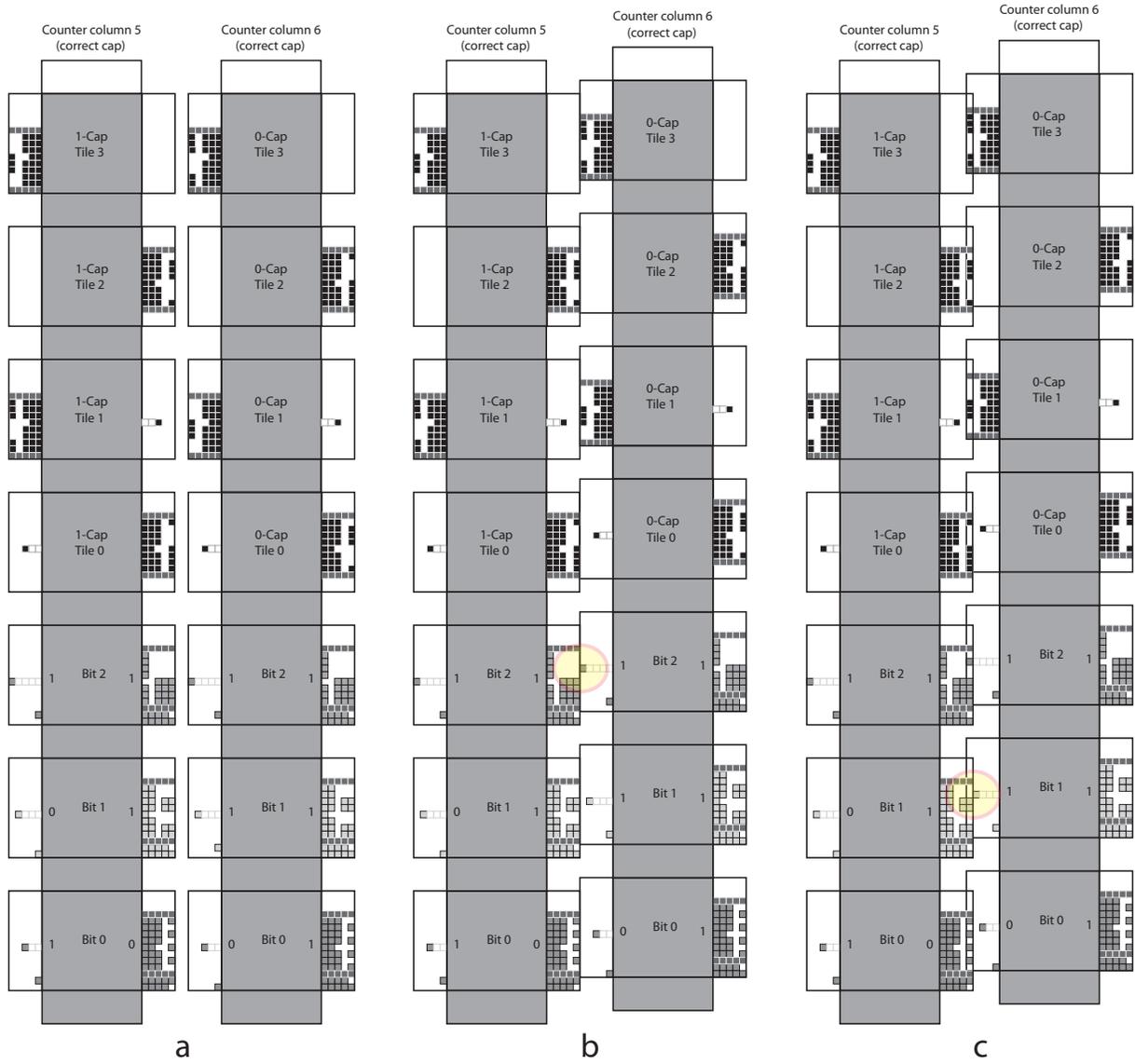} \caption{\label{fig:column-combination-example1} \footnotesize Two counter columns combining during the self-assembly of a $250 \times 250$ square (part 1).  a) Two fully formed and correct counter columns representing the values $5$ and $6$, b) Translation of the right column to allow prong-to-socket alignment for the most significant bit (highlighted), and c) Translation for alignment of the second bit (highlighted).}
\end{center}
\end{figure}

\begin{figure}[htp]
\begin{center}
    \includegraphics[width=6.5in]{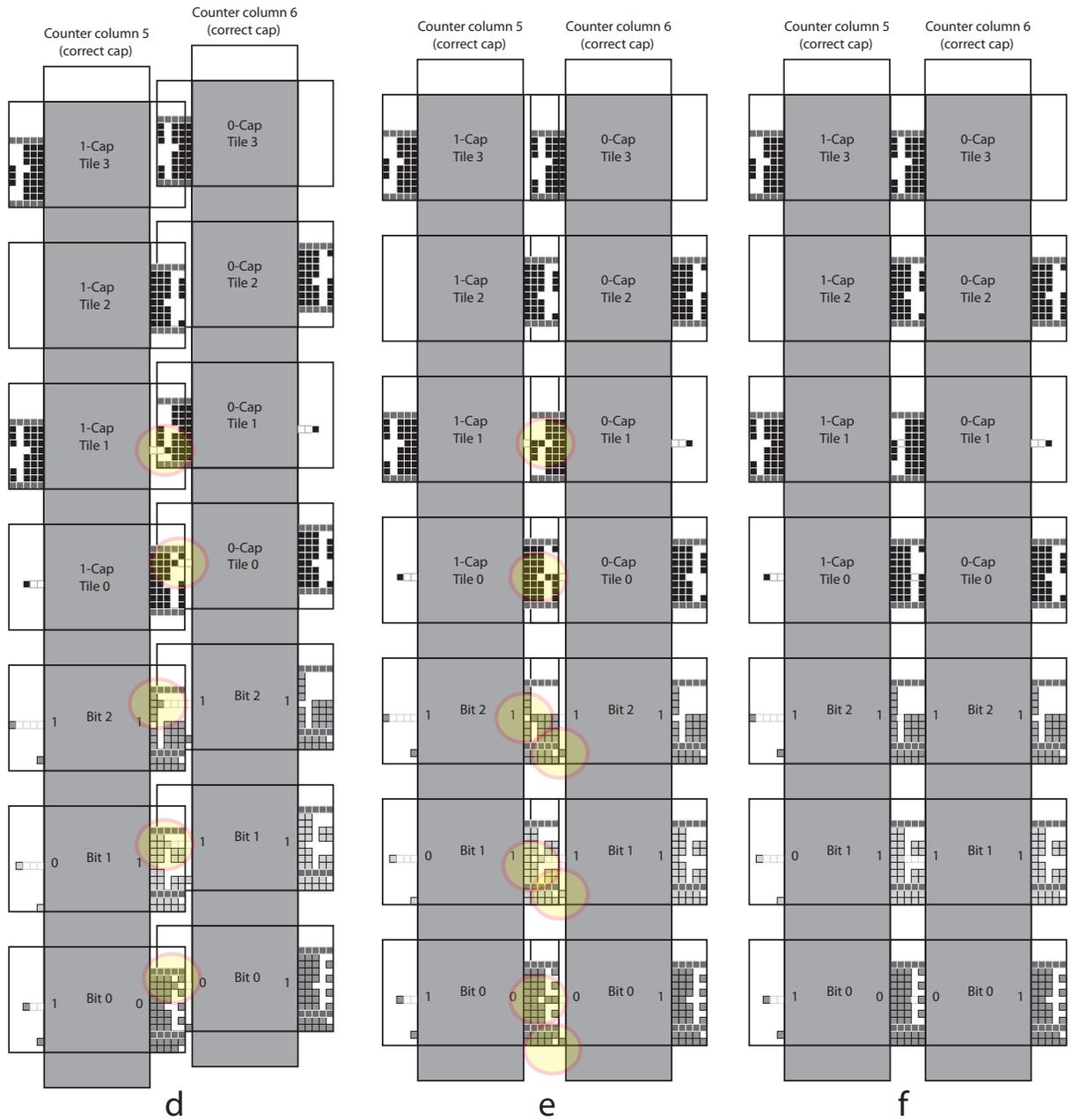} \caption{\label{fig:column-combination-example2} \footnotesize Two counter columns combining during the self-assembly of a $250 \times 250$ square (part 2).  a) After alignment of the final bit, all $4$ prongs (highlighted) are able to slide into the back column of the sockets in the left column, b) Final north-south translation aligns each prong with the west-most gaps in each socket (highlighted) and also aligns the bit patterns of regions $b_1$ and $b_2$ of each pair of counter tiles (also highlighted), c) After the final east-west translation, the counter columns are fully combined, allowing the glues to interact and stably bind them.}
\end{center}
\end{figure}

\begin{figure}[htp]
\begin{center}
    \includegraphics[width=5.5in]{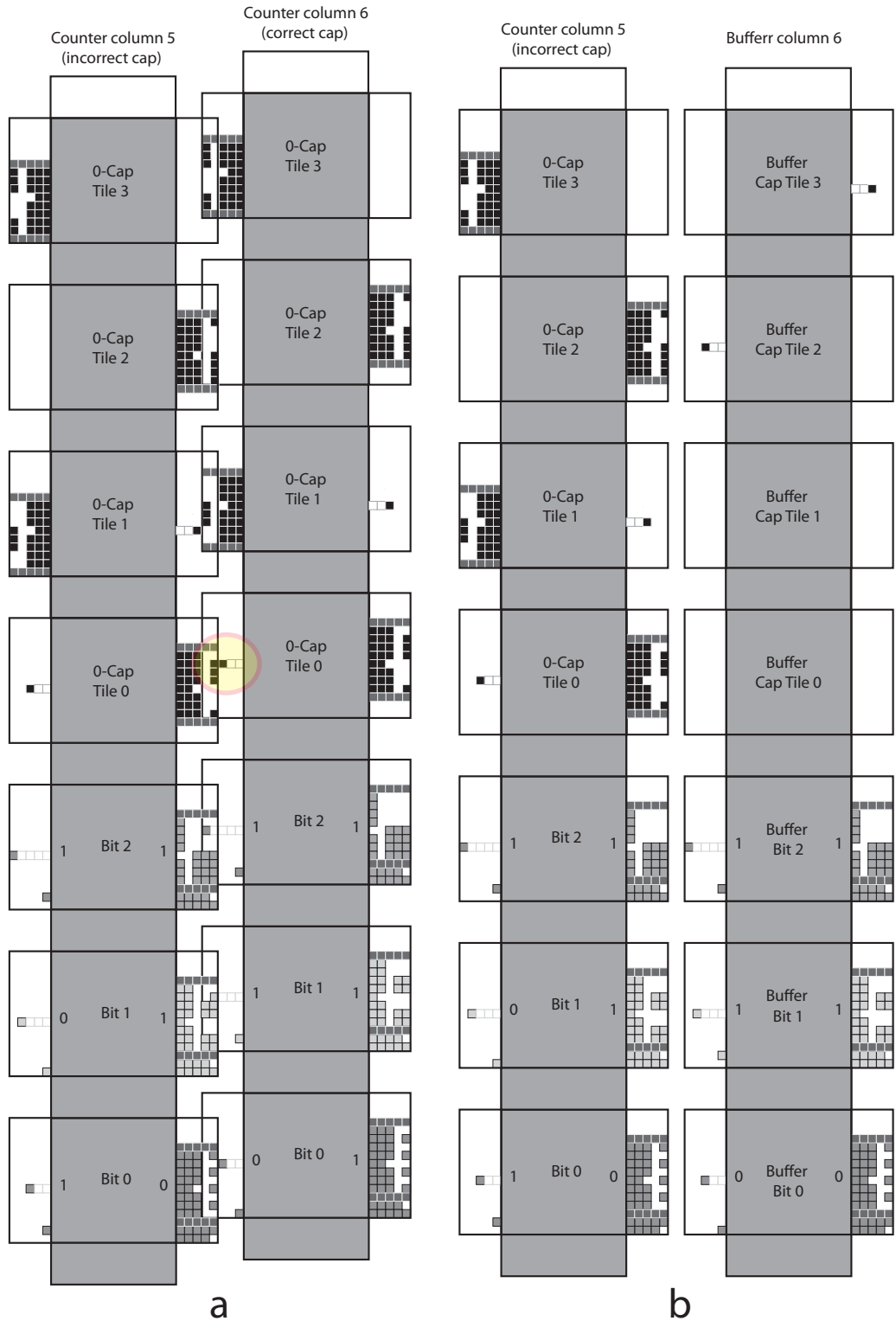} \caption{\label{fig:column-combination-example-buffer} \footnotesize Columns formed during the self-assembly of a $250 \times 250$ square.  a) A counter column with the wrong cap, which prevents its combination with the next consecutive counter column,  b. The incorrect counter column can combine with a buffer column encoding the correct counter value.}
\end{center}
\end{figure}

\subsection{Equivalence of disconnected $2$-dimensional tiles and $3$-dimensional tiles}
\label{sec:2d-conversion-to-3d}

\begin{wrapfigure}{l}{2.3in}
\begin{center}
    \includegraphics[width=2.0in]{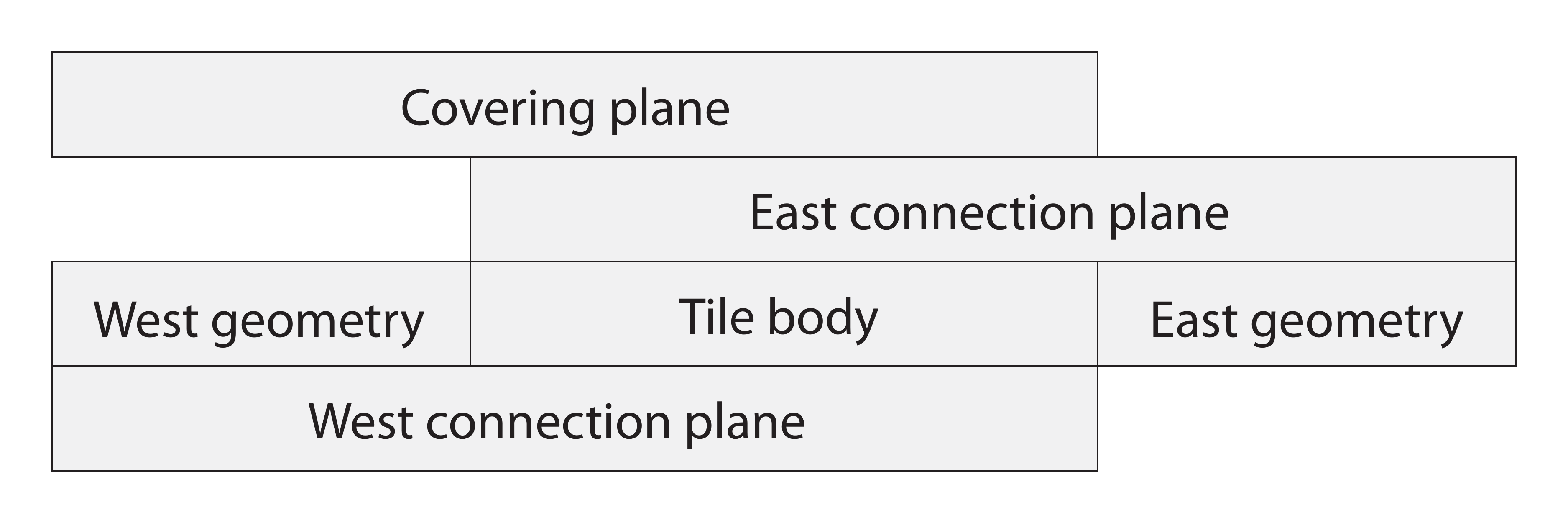} \caption{\label{fig:3-D-tile-side} \footnotesize Side view of equivalent but connected $3$-dimensional geometric tiles.}
\end{center}
\end{wrapfigure}

The construction in this section has been presented as a $2$-dimensional construction meaning that the tiles are $2$-dimensional objects, and due to the constraint of planarity they are never allowed to travel out of the plane during assembly.  However, the geometries defined for the tiles often contain disconnected portions, namely filled-in units which are not connected via an unbroken path of additional filled-in units to their tile bodies.  It is notable that by merely relaxing the restriction of forcing tiles to remain $2$-dimensional, and instead allowing them to extend into three additional planes, this construction can achieve connectivity while removing the explicit restriction that translations of tiles during assembly must only occur within the $x-y$ plane since that restriction instead becomes implicit due to the design.  See Figure~\ref{fig:3-D-tile-side} for a depiction of how the tiles are extended along the $z$-axis.  For each of the east and west geometries a solid filled-in square of units is added in the plane above and below, respectively.  These attach to every unit of the geometries as well as to additional filled-in squares above and below the tile body to ensure full connectivity.  Finally, an additional layer is added which covers the tile body and lies above the west geometry, with exactly one open plane between it and the west geometry.  In this way, it is ensured that all portions of the tiles and their geometries are connected, and exactly the tiles that could connect with each other in the $2$-dimensional planar version can now connect with each other.

\section{Two Function Problem}

In this section, we study the problem of bar-to-bump
reduction for the design of the tile face geometries, namely converting bar geometries to bump geometries (see Section~\ref{sec:geometry-classes}). The goal is to attempt to simplify the geometry patterns as much as possible. Called
the \emph{two function problem}, it takes an input integer $l$, and requires the design of
two functions $f: \{0, 1,\ldots,l\}\rightarrow \{0,1\}^n$ and $g :
\{0, 1,\ldots,l\}\rightarrow \{0,1\}^n$ to satisfy $f(x)\ AND\ g(y)
= 0^n$ if and only if $x + y \le l$, where $AND$ takes as arguments
two binary strings, $x$ and $y$, of matching length and returns the
string of binary values representing the logical AND operation
between each corresponding pair of bits in $x$ and $y$. The target
is to make $n$ as small as possible. It is trivial to design the two
functions with $n=l$. We show a lower bound that exactly matches the
upper bound.

\begin{definition}
Let $l, n\in N$. Define $f : \{0, 1,\ldots,l\}\rightarrow \{0,1\}^n$
and $g : \{0, 1,\ldots,l\}\rightarrow \{0,1\}^n$ where $f$ and $g$
take a number between $0$ and $l$, inclusive, and return a binary
string of length n such that $f(x)\ AND\ g(y) = 0^n$ if and only if
$x + y \le l$. (Note that we define $AND$ as the function which
takes as arguments two binary strings, $x$ and $y$, of matching
length and returns the string of binary values representing the
logical AND operation between each corresponding pair of bits in $x$
and $y$. In this case, there must not be a 1 in the same position of
both bit strings.) Goal: For a given $l$, find $f$ and $g$ such that
$n$ is minimal.
\end{definition}

We have Theorem~\ref{thm:twoFuncLower} for a lower bound for the two
function problem. This lower bound exactly matches the upper bound
shown in Theorem~\ref{thm:twoFuncUpper}.

\begin{theorem}\label{thm:twoFuncLower}
For each $l$, every solution for the two function problem needs
$n\ge l$.
\end{theorem}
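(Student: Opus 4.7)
The plan is to translate the two function problem into a combinatorial rectangle covering problem and then exhibit a fooling set of size $l$ on the anti-diagonal $x+y = l+1$. For each bit position $p \in \{1, \ldots, n\}$, I will define $A_p = \{x \in \{0,\ldots,l\} : f(x)[p] = 1\}$ and $B_p = \{y \in \{0,\ldots,l\} : g(y)[p] = 1\}$. Then $f(x)\ AND\ g(y) \neq 0^n$ exactly when there exists some $p$ with $f(x)[p] = g(y)[p] = 1$, i.e.\ when $(x,y) \in A_p \times B_p$ for some $p$. The defining condition of the problem therefore becomes
$$\bigcup_{p=1}^{n} \bigl(A_p \times B_p\bigr) \;=\; \bigl\{(x,y) \in \{0,\ldots,l\}^2 : x+y \ge l+1\bigr\},$$
and in particular every rectangle $A_p \times B_p$ is contained in the region $R = \{(x,y) : x+y \ge l+1\}$.

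Next I would single out the $l$ anti-diagonal pairs $D = \{(1,l), (2,l-1), \ldots, (l,1)\}$, all of which lie in $R$ and therefore must be covered. The key claim is that no single rectangle $A_p \times B_p$ can contain two distinct points of $D$. Indeed, suppose $(x_1,y_1), (x_2,y_2) \in A_p \times B_p \cap D$ with $x_1 < x_2$; since both satisfy $x+y = l+1$, this forces $y_1 > y_2$. Rectangularity then gives $(x_1,y_2) \in A_p \times B_p \subseteq R$, but $x_1 + y_2 < x_1 + y_1 = l+1$, contradicting membership in $R$.

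Combining these two steps, each of the $l$ points of $D$ must be covered by a distinct rectangle $A_p \times B_p$, so there must be at least $l$ bit positions, i.e.\ $n \ge l$. This argument is the standard fooling-set / nondeterministic communication complexity style lower bound, so I do not anticipate a genuine technical obstacle; the only conceptual point is recognizing the anti-diagonal $\{x+y = l+1\}$ as a fooling set with respect to the forbidden region $\{x+y \le l\}$, after which the reduction to rectangle covering immediately yields the matching lower bound to Theorem~\ref{thm:twoFuncUpper}.
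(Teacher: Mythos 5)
Your proposal is correct, and it takes a genuinely different route from the paper's argument. The paper proceeds by an extremal rearrangement: it fixes a solution that maximizes $D(f)=\sum_i C(f(i))$ among all minimal-length solutions, shows that this forces the supports $G(f(0))\subseteq G(f(1))\subseteq\cdots\subseteq G(f(l))$ to form a chain (otherwise one could flip a bit and increase $D(f)$ without destroying validity), and then uses $f(i)\ AND\ g(l-i)=0^n$ versus $f(j)\ AND\ g(l-i)\neq 0^n$ to show all inclusions are strict, which forces $n\ge l$. You instead recast the condition as a disjoint-rectangle covering statement: the rectangles $A_p\times B_p$ must union to exactly $\{(x,y):x+y\ge l+1\}$, and the $l$ points $(1,l),\dots,(l,1)$ on the line $x+y=l+1$ form a fooling set, since covering two of them by one rectangle would put a point with $x+y<l+1$ into $R$. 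This is the standard nondeterministic communication complexity lower bound. Your argument is shorter and more modular (it immediately yields the general fact that the minimum $n$ equals the rectangle cover number of $R$ under the constraint that rectangles stay inside $R$), while the paper's argument is self-contained and avoids the reduction but at the cost of the auxiliary extremal optimization and a chain argument. Both are sound and give the same bound $n\ge l$.
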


\begin{proof}
Assume that we have a solution for the two function problem. Let $f
: \{0, 1,\ldots,l\}\rightarrow \{0,1\}^n$ and $g : \{0,
1,\ldots,l\}\rightarrow \{0,1\}^n$ such that $f(x)\ AND\ g(y) = 0^n$
if and only if $x + y \le l$.

For a string $s$ in $\{0,1\}^n$, define $C(s)$ to be the number of
$1$s in $s$.  For example, $C(010011)=3$. Define $D(f)=\sum_{i=0}^l
C(f(i))$. For each string $s=a_1a_2\ldots a_n$ in $\{0,1\}^n$,
define $G(s)=\{i: a_i=1\ and\ 1\le i\le n\}$. For each string
$s=a_1a_2\ldots a_n$ in $\{0,1\}^n$, define $s[i]=a_i$.

Let the solution $f$ and $g$ satisfy that $D(f)$ is the largest for
the least $n$. We claim that $G(f(0))\subseteq G(f(1))\subseteq
G(f(2))\subseteq \ldots \subseteq G(f(l))$.

Assume that there is a $m\in \{0,1,\ldots, l\}$ with
$G(f(m))\not\subseteq G(f(m+1))$. Let $i$ be the index such that
$f(m)[i]=1$ and $f(m+1)[i]=0$. Let $f(m)=a_1a_2\ldots a_n$ and
$f(m+1)=b_1b_2\ldots b_n$. Define function $f'$ as follows:
$f'(k)=f(k)$ for each $k\not= m+1$, and $f'(m+1)=b_1\ldots
b_{i-1}a_ib_{i+1}\ldots b_n=b_1\ldots b_{i-1}1b_{i+1}\ldots b_n$. We
claim that $f'$ and $g$ form a new solution for the two function
problem.

For each $y$ with $(m+1)+y\le l$, we have $m+y<l$. Therefore, $f(m)\
AND\ g(y)=0^n$. This implies $a_i \cdot g(y)[i]=0$. Furthermore, we
also have $f(m+1 )\ AND\ g(y)=0^n$. Therefore,
\begin{eqnarray*}
C(f'(m+1 )\ AND\ g(y))&\le& C(f(m+1 )\ AND\ g(y))+a_i \cdot
g(y)[i]\\
&\le& 0+0\\
&=&0.
\end{eqnarray*}
 Thus, $C(f'(m+1 )\ AND\ g(y))=0$. Therefore,
\begin{eqnarray*}
f'(m+1 )\ AND\ g(y)=0^n.
\end{eqnarray*}

For each $y$ with $(m+1)+y> l$, we have $f(m+1 )\ AND\
g(y)\not=0^n$. Since $a_i=1$, we have
\begin{eqnarray*}
C(f'(m+1 )\ AND\ g(y))&\ge& C(f(m+1 )\ AND\ g(y))\\
&\ge& 1.
\end{eqnarray*}
 Therefore,
\begin{eqnarray*}
f'(m+1 )\ AND\ g(y)\not =0^n.
\end{eqnarray*}
 Combining the last two cases, we have
that $f'$ and $g$ form a new solution for the two function problem.
We have $D(f)<D(f')$. This contradicts the fact that $D(f)$ is the
largest for the same $n$.

Therefore, the solution $f$ and $g$ has that $G(f(0))\subseteq
G(f(1))\subseteq G(f(2))\subseteq \ldots \subseteq G(f(l))$. It is
easy to see $f(i)\not= f(j)$ for $0\le i<j\le l$ because $f(i)\ AND\
g(l-i)=0^n$ and $f(j)\ AND\ g(l-i)\not=0^n$. Thus, we have that
$G(f(i))$ is a proper subset of $G(f(i+1))$. Therefore, $n\ge l$.
\end{proof}

\begin{theorem}\label{thm:twoFuncUpper}
The two function problem always has a solution with $n=l$.
\end{theorem}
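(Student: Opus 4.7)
The plan is to exhibit an explicit construction with $n=l$ and verify the required biconditional by a direct bit-by-bit check. Guided by the chain structure $G(f(0))\subsetneq G(f(1))\subsetneq\cdots\subsetneq G(f(l))$ established in the lower bound argument (and the symmetric statement for $g$), the natural candidate is to let $f$ encode its argument as a left-justified block of $1$s and $g$ encode its argument as a right-justified block of $1$s, each inside a length-$l$ string.

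Concretely, for $n=l$ I would define, for each $k\in\{0,1,\ldots,l\}$,
\begin{equation*}
f(k) \;=\; 1^{k}\,0^{\,l-k}, \qquad g(k) \;=\; 0^{\,l-k}\,1^{k},
\end{equation*}
where positions are indexed $1,2,\ldots,l$ from left to right. Note the boundary cases: $f(0)=g(0)=0^l$ and $f(l)=g(l)=1^l$, which matches the intuition from the lower bound that $f$ and $g$ should use all $l$ coordinates.

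The verification proceeds by inspecting position $i$ of $f(x)\ \mathrm{AND}\ g(y)$. By construction, $f(x)[i]=1$ iff $1\le i\le x$, and $g(y)[i]=1$ iff $l-y+1\le i\le l$. Hence the $i$-th bit of the AND is $1$ iff $l-y+1\le i\le x$. Such an index $i$ exists iff $l-y+1\le x$, equivalently $x+y\ge l+1$, equivalently $x+y>l$. Therefore $f(x)\ \mathrm{AND}\ g(y)=0^l$ iff $x+y\le l$, which is precisely the required specification.

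There is no real obstacle here: the only thing to be careful about is keeping the indexing consistent between $f$ and $g$ so that the range $[l-y+1,x]$ comes out correctly, and checking the edge cases $x=0$ or $y=0$ (where one operand is $0^l$ and the AND is trivially $0^l$, consistent with $x+y\le l$) and $x=y=l$ (where the AND is $1^l\ne 0^l$, consistent with $x+y=2l>l$). Combined with Theorem~\ref{thm:twoFuncLower}, this yields the exact bound $n=l$.
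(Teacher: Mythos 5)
Your construction $f(k)=1^k0^{\,l-k}$, $g(k)=0^{\,l-k}1^k$ is identical to the paper's (the paper writes $g(l-i)=0^i1^{l-i}$, which is the same after the substitution $k=l-i$), and your index-range verification simply spells out the step the paper dismisses as ``easy to verify.'' Correct, and the same approach.
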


\begin{proof}
Let $f(i)=1^i0^{l-i}$ and $g(l-i)=0^i1^{l-i}$. It is easy to verify
that $f(x)\ AND\ g(y) = 0^l$ if and only if $x + y \le l$.
\end{proof}

\section{Matrix Problem}\label{sec:matrix}

In this section, we study the complexity of a matrix problem that is related to tile
geometry design for compatibility specifications. An efficient solution
for this problem can achieve reductions in geometry sizes.

Called the {\it matrix problem}, the goal is generating a
binary matrix via vector inner products. Given a $0-1$ matrix $M$
of size $m\times n$, find a list of vectors $E_1,\ldots, E_m$ and a
list vectors $W_1,\ldots, W_n$ of length $l$ for each vector such
that $M(i,j)=0$ if and only if $E_i\cdot W_j=0$, where $\cdot$ is
the inner product of two vectors. The target is to minimize the
length $l$, which is denoted by $L(M)=l$ for the least $l$. We show
that for almost all $n\times n$ binary matrices $M$,
$L(M)=\Theta(n)$. A submatrix $M_i$ of $M$ is characterized by
$(R_i,C_i)$ where $R_i$ is a subset of row indices of $M$ and $C_i$
is a subset of column indices of $M$. In the case that all $1$
entries are in submatrices $M_1=(R_1, C_1),\cdots, M_t=(R_t, C_t)$
that satisfy $R_i\cap R_j=\emptyset$ and $C_i\cap C_j=\emptyset$ for
$i\not=j$, we show the equality $L(M)=L(M_1)+\cdots+L(M_t)$. This
relationship gives a tool for solving a class of concrete matrix
problems. For another case that all $0$ entries of $M$ are in
submatrices $M_1=(R_1, C_1),\cdots, M_t=(R_t, C_t)$ that satisfy
$R_i\cap R_j=\emptyset$ and $C_i\cap C_j=\emptyset$ for $i\not=j$,
we derive the lower bound and upper bound $\max(L(M_1),\cdots,
L(M_t))\le L(M)\le (1+\epsilon)\max(L(M_1),\cdots, L(M_t))+O({\log
n\over \epsilon})$ for an arbitrary constant $\epsilon>0$ using a
randomized algorithm. This randomized algorithm can be used to find
a matrix design with a small length $l$. The dimension of each
submatrix $M_i$ is $m_i\times n_i$ in table~\ref{table:summary}.

For two lists $E_1,\ldots, E_m$ and $W_1,\ldots, W_n$ such that each
$E_i$ or $W_i$ are of length $l$, they are called a {\it $l$-list
pair}.

{\it Matrix Problem:} Given a $m\times n$ $0-1$ matrix $M$, find a
list vectors $E_1,\ldots, E_m$ and a list vectors $W_1,\ldots, W_n$
of length $l$ for each vector such that $M(i,j)=0$ if and only if
$E_i\cdot W_j=0$, where $\cdot$ is the inner product of two vectors
such that $E_i\cdot W_j=\sum_{k=1}^l e_kw_k$ for $E_i=e_1e_2\cdots
e_l$ and $W_j=w_1w_2\cdots w_l$. Minimize the length $l$.

For an integer  matrix $M_{m\times n}$, its rank is the number of
linear independent rows. A matrix $M_{n\times n}$ is {\it diagonal
one} matrix if all the diagonal elements are one and all other
elements are zero. A matrix $M_{n\times n}$ is {\it diagonal zero}
matrix is all diagonal elements are zero and all other elements are
one.

\subsection{Some Lower Bounds}

In this section, we show some results for the lower bound of the
matrix problem. Most of the lower bounds results are for concrete
matrix problems.

\begin{definition}
For a binary matrix $M$, define $L(M)$ to be the least length $l$
for a solution of the matrix problem $M$.
\end{definition}

\begin{definition}\label{matrix-basic-def}
 Assume that $M=(a_{i,j})_{m\times n}$ is a binary matrix.
\begin{itemize}
\item
Define $S(M)=\min(m, n)$.
\item
A {\it submatrix} $M'=(R, C)$ of $M$ is characterized by a set $R$
of rows and a set $C$ of columns in $M$. The elements of $M'$ are
all elements $a_{i,j}$ with $i\in R$ and $j\in C$. We also define
$S(M')=\min(|R|, |C|)$.
\item
Two submatrices $M_1=(R_1, C_1)$ and $M_2=(R_2, C_2)$ of $M$ are
{\it independent} if $R_1\cap R_2=\emptyset$ and $C_1\cap
C_2=\emptyset$.
\item
A set $H$ of submatrices of $M$ is {\it independent} if every two of
them are independent.
\item
Let $H=\{M_1,\ldots, M_t\}$ be a set of independent submatrices,
define $T(H)=\max(S(M_1),\ldots, S(M_t))$.
\item
For a sequence $s=a_1a_2\ldots a_n$, and a subset $P=\{i_1,\ldots,
i_m \}$ with $i_1<i_2<\ldots<i_m$ of integers of $\{1,2,\ldots,
n\}$, define $s[P]=a_{i_1}a_{i_2}\ldots a_{i_m}$. For an integer
$i\le n$, define $s[i]=a_i$.
\end{itemize}
\end{definition}

\begin{theorem}\label{thm:indSubMatrices}
Let $M_1,\ldots, M_t$ be independent submatrices of $M_{m\times n}$
and all $1$s of $M$ are in $M_1,\ldots, M_t$. Then
$L(M)=L(M_1)+\ldots+L(M_t)$.
\end{theorem}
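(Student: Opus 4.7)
The plan is to prove both directions separately: $L(M) \le \sum L(M_i)$ by a direct concatenation construction, and $L(M) \ge \sum L(M_i)$ by showing that any solution for $M$ can be partitioned along its coordinates into disjoint valid solutions for each $M_i$. Throughout I treat $E_i$ and $W_j$ as binary vectors, so $E_i\cdot W_j=0$ is equivalent to the supports of $E_i$ and $W_j$ being disjoint.

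For the upper bound, I would for each $k$ take an optimal solution $\{E_i^{(k)}\}_{i\in R_k}$, $\{W_j^{(k)}\}_{j\in C_k}$ of length $L(M_k)$ for submatrix $M_k=(R_k,C_k)$ and concatenate these across blocks of coordinates, one block per $k$. For a row $i\in R_k$ I set $E_i$ to be $E_i^{(k)}$ in block $k$ and zero in every other block; rows $i$ in no $R_k$ are assigned the all-zero vector (these rows of $M$ are entirely zero by hypothesis). Columns $W_j$ are defined symmetrically. Then for $i\in R_k$ and $j\in C_k$ the inner product $E_i\cdot W_j$ reduces to $E_i^{(k)}\cdot W_j^{(k)}$, matching $M(i,j)$; for $i\in R_k, j\in C_{k'}$ with $k\neq k'$ the two vectors have disjoint supports (different blocks), so the product is $0$, which agrees with $M(i,j)=0$ because all $1$'s are in the $M_k$'s. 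Rows/columns outside all $R_k$/$C_k$ contribute only zero products, again consistent with $M$. This gives $L(M)\le \sum_k L(M_k)$.

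For the lower bound, fix any solution $\{E_i\},\{W_j\}$ of length $l=L(M)$. For each coordinate $p\in\{1,\dots,l\}$ define
\[
R^*(p)=\{k:\exists i\in R_k,\ E_i[p]=1\}, \qquad C^*(p)=\{k:\exists j\in C_k,\ W_j[p]=1\}.
\]
The key observation is that if $k\in R^*(p)$ and $k'\in C^*(p)$ with $k\neq k'$, then picking the witnessing $i\in R_k$, $j\in C_{k'}$ forces $E_i\cdot W_j\ge 1$ at coordinate $p$, contradicting $M(i,j)=0$ (which holds because $i,j$ lie in different independent blocks). Hence for each $p$, the set $P_k:=\{p : k\in R^*(p)\cap C^*(p)\}$ satisfies $P_k\cap P_{k'}=\emptyset$ whenever $k\neq k'$, so $\sum_k |P_k|\le l$. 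Now for each $k$ restrict the vectors to the coordinates in $P_k$: set $\tilde E_i = E_i|_{P_k}$ for $i\in R_k$ and $\tilde W_j = W_j|_{P_k}$ for $j\in C_k$. Outside $P_k$, either $E_i[p]=0$ for every $i\in R_k$ or $W_j[p]=0$ for every $j\in C_k$, so discarding those coordinates does not change $E_i\cdot W_j$ on pairs $(i,j)\in R_k\times C_k$. Thus the restricted vectors are a valid solution to $M_k$ of length $|P_k|$, giving $L(M_k)\le|P_k|$ and so $\sum_k L(M_k)\le l=L(M)$.

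The main subtlety, and the only step that really uses the independence of the submatrices, is establishing the disjointness of the $P_k$'s; everything else is bookkeeping. I would therefore lead with that pigeonholing argument and phrase it explicitly as a lemma (``no coordinate can simultaneously serve two different independent blocks''), since once that observation is in place both the restriction argument for the lower bound and the consistency check for the concatenation in the upper bound become essentially immediate.
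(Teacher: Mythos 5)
Your proof is correct and follows essentially the same structure as the paper's: the upper bound by block-wise concatenation (paper's Claim 4), and the lower bound by defining, for each solution of $M$, the coordinate set $P_k=\{p:\exists i\in R_k,E_i[p]=1\text{ and }\exists j\in C_k,W_j[p]=1\}$, showing the $P_k$ are pairwise disjoint via the independence of the blocks, and that restriction to $P_k$ yields a valid solution of $M_k$ (paper's Claims 1--3). The reformulation via $R^*(p),C^*(p)$ and the choice to lead with the disjointness lemma are purely presentational.
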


\begin{proof}
Assume that $E_1,\ldots, E_m$ and $W_1,\ldots, W_n$  form a solution
of least length $l$ for the matrix $M=(a_{i,j})_{m\times n}$. Let
$M_i=(R_i, C_i)$ for $i=1,\cdots, t$. Define $P_i=\{j:$ for some
$(u,v)\in R_i\times C_i$ with $E_u[j]=W_v[j]=1\}$.

{\bf Claim 1.} $E_{r_1}[P_i],\ldots, E_{r_y}[P_i]$ and
$W_{c_1}[P_i],\ldots, W_{c_z}[P_i]$ form a solution for $M_i$, where
$R_i=\{r_1,\ldots, r_y\}$ and $C_i=\{c_1,\ldots, c_z\}$.

\begin{proof}
By the definition of $P_i$, for each $(u,v)\in R_i\times C_i$, we
have $E_u[x]=0$ or $W_v[x]=0$ for any $v\not\in P_i$ (otherwise, $x$
is in $P_i$). Thus, $E_u\ AND\ W_v =0^l$ if and only if $E_u[P_i]\
AND\ W_v[P_i] =0^{l_i}$, where $l_i=|P_i|$.
\end{proof}

{\bf Claim 2.} $P_i\cap P_j=\emptyset$ for $i\not=j$.

\begin{proof}
This can be proved by contradiction. Assume $P_i\cap
P_j\not=\emptyset$. Let $x\in P_i\cap P_j$. There is a $(u_1,
v_1)\in R_i\times C_i$ such that $E_{u_1}[x]=W_{v_1}[x]=1$. There is
a $(u_2, v_2)\in R_j\times C_j$ such that $E_{u_2}[x]=W_{v_2}[x]=1$.
Therefore, $(E_{u_1}\ AND\ W_{v_2})$ contains a bit $E_{u_1}[x]
\cdot W_{v_2}[x]= 1$. Since $M_1,\ldots, M_t$ are independent,
$u_1\in R_i$ and $v_2\in C_j$, we have $a_{u_1, v_2}=0$. This
contradicts that $(E_{u_1}\ AND\ W_{v_2})=0^l$ if and only if
$a_{u_1,v_2}=0$.
\end{proof}

{\bf Claim 3.} $L(M)\ge L(M_1)+\ldots+L(M_t)$.

\begin{proof}
By Claim 1, since $E_1[P_i],\ldots, E_m[P_i]$ and $W_1[P_i],\ldots,
W_n[P_i]$ form a solution for the matrix problem $M_i$ for
$i=1,\cdots, t$, we have $L(M_i)\le |P_i|$. By Claim 2, we have
$|P_1|+\ldots +|P_t|\le l=L(M)$. Thus, $L(M)=|P_1|+\ldots +|P_t|\ge
L(M_1)+\ldots+L(M_t)$.
\end{proof}

Assume that $E_{i,r_1},\ldots, E_{i,r_y}$ and $W_{i,c_1},\ldots,
W_{i,c_z}$ form a solution for $M_i=(R_i,C_i)$ with
$R_i=\{r_1,\ldots, r_y\}$ and $C_i=\{c_1,\ldots, c_z\}$ for
$i=1,\ldots, t$. The length of each vector for $M_i$ is
$l_i=L(M_i)$.

For $j=1,\ldots, m$, define $E'_j=A_1\ldots A_t$ such that
$A_i=0^{l_i}$ if $i\not\in R_1\cup \ldots R_t$, and $A_i=E_{u,r_h}$
if $i=r_h\in R_u$. For $j=1,\ldots, m$, define $W'_j=B_1\ldots B_t$
such that $B_i=0^{l_i}$ if $i\not\in C_1\cup \ldots C_t$, and
$B_i=W_{u,c_h}$ if $i=c_h\in C_u$.

{\bf Claim 4.} $E'_1,\ldots, E'_m$ and $W'_1,\ldots, W'_n$ form a
solution for $M$.

\begin{proof}
Let $l=l_1+\ldots+l_t$.
Let's consider a position $(i,j)$ in matrix $M$. Let $E'_j=A_1\ldots
A_t$ and $W'_j=B_1\ldots B_t$. We discuss the following two cases.

\begin{itemize}
\item
Case 1. $(i,j)\not\in R_u\times C_u$ for all $u=1,\ldots, t$.  For
each $g\le t$, we have either $A_g=0^{l_g}$ or $B_g=0^{l_g}$ since
the submatrices $M_1,\ldots, M_t$ are independent. Therefore, $E'_i\
AND\ W'_j=0^l$.

\item
Case 2. $(i,j)\in R_u\times C_u$.  Since the submatrices
$M_1,\ldots, M_t$ are independent, we have $(i,j)\not\in R_v\times
C_v$ for $v\not= u$. We have $A_v\ AND\ B_v=0^{l_v}$ for each
$v\not= u$.Thus, $E'_i\ AND\ W'_j\not=0^l$ if and only if $A_u \ AND
\ B_u\not=0^{l_u}$.
\end{itemize}

\end{proof}

By Claim 4, we have $L(M)\le L(M_1)+\ldots+L(M_t)$. By Claim 3, we
have $L(M)= L(M_1)+\ldots+L(M_t)$.
\end{proof}

\begin{corollary}\label{cor:diagOne}
There minimum length for the $n\times n$ diagonal-one matrix problem
is exactly $n$.
\end{corollary}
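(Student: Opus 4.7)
The plan is to apply Theorem~\ref{thm:indSubMatrices} directly, with a trivial decomposition of the diagonal-one matrix into its $1$-entries. Let $M$ denote the $n \times n$ diagonal-one matrix. For each $i \in \{1, \ldots, n\}$, I would define the $1 \times 1$ submatrix $M_i = (\{i\}, \{i\})$ consisting of the single entry $M(i,i) = 1$. The key observations are: (a) every $1$-entry of $M$ is contained in exactly one of the $M_i$ (since the off-diagonal entries are $0$); and (b) the family $\{M_1, \ldots, M_n\}$ is independent in the sense of Definition~\ref{matrix-basic-def}, because the row sets $\{i\}$ and the column sets $\{i\}$ are pairwise disjoint.

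With this decomposition in hand, the hypothesis of Theorem~\ref{thm:indSubMatrices} is satisfied, so
\begin{equation*}
L(M) = \sum_{i=1}^{n} L(M_i).
\end{equation*}
It then suffices to compute $L(M_i)$ for a $1 \times 1$ all-ones matrix. Here I would observe that $L(M_i) \geq 1$ trivially (length-$0$ vectors yield only the empty inner product, which is $0$, and cannot witness the entry $1$), and $L(M_i) \leq 1$ is realized by the length-$1$ vectors $E_1 = W_1 = (1)$, whose inner product is $1 \neq 0$. Hence $L(M_i) = 1$ for each $i$, and summing gives $L(M) = n$.

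There is no real obstacle here: the entire content of the corollary is the observation that the $n$ diagonal entries form $n$ pairwise independent $1 \times 1$ submatrices, each with $L$-value $1$, so the additive formula in Theorem~\ref{thm:indSubMatrices} yields the claim immediately. The only thing worth being a bit careful about is verifying that the upper bound direction in Theorem~\ref{thm:indSubMatrices} applies cleanly when the independent pieces cover \emph{all} the $1$-entries (which is the exact hypothesis of that theorem), so no separate upper bound argument is needed.
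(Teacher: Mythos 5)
Your argument is correct and is precisely the natural (and presumably intended) derivation: the paper states the corollary immediately after Theorem~\ref{thm:indSubMatrices} without a separate proof, and the only sensible way to get it from that theorem is exactly your decomposition into $n$ independent $1\times 1$ diagonal submatrices, each with $L$-value $1$. No issues.
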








\begin{theorem}\label{thm:mostMatrices}
For most of $n\times n$ matrices, the minimum solution is at least
${n\over 2}-n^{1-\epsilon}$ for any fixed $\epsilon>0$.
\end{theorem}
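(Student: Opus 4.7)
The plan is a straightforward counting argument: show that very few $n \times n$ binary matrices admit a short solution, by bounding the number of matrices obtainable from short $l$-list pairs.

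First I would observe that any solution to the matrix problem of length $l$ is described entirely by the pair of vector lists $(E_1,\ldots,E_n)$ and $(W_1,\ldots,W_n)$ with each $E_i, W_j \in \{0,1\}^l$. The matrix $M$ that the pair realizes is then uniquely determined by the rule $M(i,j)=0 \iff E_i \cdot W_j = 0$. Consequently, the number of distinct $n\times n$ binary matrices $M$ with $L(M)\le l$ is at most the number of such list pairs, which is $(2^l)^n \cdot (2^l)^n = 2^{2nl}$.

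Next I would set $l = \lfloor n/2 - n^{1-\epsilon} \rfloor$ and compare this count with the total number $2^{n^2}$ of $n\times n$ binary matrices. With this choice,
\begin{equation*}
2^{2nl} \le 2^{n^2 - 2n^{2-\epsilon}},
\end{equation*}
so the fraction of matrices admitting a solution of length at most $l$ is bounded by $2^{-2n^{2-\epsilon}}$, which tends to $0$ (in fact super-exponentially fast) as $n\to\infty$. Therefore, for all but a vanishing fraction of $n\times n$ binary matrices $M$, we have $L(M) > n/2 - n^{1-\epsilon}$, which is the claimed bound.

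There is no serious obstacle here; the only thing to be careful about is that each $l$-list pair yields at most one matrix (not more, not less), so the counting inequality is valid, and that the phrase ``most of $n\times n$ matrices'' is interpreted in the density-one sense (fraction tending to $1$ as $n\to\infty$), which the bound $2^{-2n^{2-\epsilon}}$ comfortably satisfies for every fixed $\epsilon > 0$.
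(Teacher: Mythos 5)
Your proof is correct and follows essentially the same counting argument as the paper: bound the number of matrices realizable by short $l$-list pairs and compare against $2^{n^2}$. The only cosmetic difference is that you count length-exactly-$l$ pairs (noting that shorter solutions can be padded), whereas the paper sums $\sum_{l\le m} 2^{2nl} < 2\cdot 2^{2nm}$; both yield the same bound.
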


\begin{proof}
Let $\epsilon$ be an arbitrary positive constant. There are totally
$2^{n^2}$ many $n\times n$ binary matrices $M$. Assume $m\le {n\over
2}-n^{1-\epsilon}$. We consider how many matrices can be constructed
by using $l$-list pairs with $l\le m$.

 The total number of bits of a
$l$-list pair is $2nl$. The total number of $l$-list pairs is
$2^{2nl}$. The total number of $l$-list pairs with $l\le m$ is
$\sum_{l=1}^m 2^{2nl}<2\cdot 2^{2nm}=o(2^{n^2})$ since $m\le {n\over
2}-n^{1-\epsilon}$. Therefore, for most of $n\times n$ matrices,
there is no ${n\over 2}-n^{1-\epsilon}$ solution.

\end{proof}

\begin{theorem}\label{full-rank-theorem}
For the minimum length solution for a full rank matrix is greater
than $\log n$.
\end{theorem}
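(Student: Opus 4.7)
The plan is to combine a pigeonhole count on the $E_i$ vectors with the hypothesis that $M$ has full rank. Suppose $E_1,\ldots,E_n$ and $W_1,\ldots,W_n$ in $\{0,1\}^l$ form an optimal solution of length $l=L(M)$, and form the integer matrix $P$ with $P_{ij}=E_i\cdot W_j$. By definition of a solution, $P_{ij}=0$ if and only if $M_{ij}=0$, so $P$ and $M$ have identical zero patterns.

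The first substantive step will be to show that the vectors $E_1,\ldots,E_n$ are pairwise distinct. If $E_i=E_{i'}$ for some $i\neq i'$, then $P_{ij}=P_{i'j}$ for every $j$, so rows $i$ and $i'$ of $P$ agree on which entries are zero. Because $M$ is binary and is therefore completely determined by the positions of its zero entries, rows $i$ and $i'$ of $M$ must be equal, contradicting full rank. The second substantive step will be to rule out $E_i=0^l$: if some $E_i$ is the all-zero vector then row $i$ of $P$ is identically zero, hence so is row $i$ of $M$, again contradicting full rank.

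Putting the two facts together, the $E_i$ constitute $n$ distinct nonzero vectors in $\{0,1\}^l$. Since there are only $2^l-1$ such vectors, $n\le 2^l-1$, giving $l\ge \log(n+1)>\log n$, as claimed. The only real conceptual content is the translation between ``equal rows of $P$'' and ``equal rows of $M$'' in the first step, which uses that $M$ is $0$--$1$ rather than integer valued; I do not anticipate any obstacle. By symmetry, the same argument applied to the columns would give the analogous bound in terms of the $W_j$, but one side suffices for the theorem.
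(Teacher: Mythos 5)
Your proof is correct and follows essentially the same approach as the paper's: the $E_i$ must be pairwise distinct (else two rows of the binary matrix $M$ coincide) and none can be the zero vector (else a row of $M$ vanishes), so $n \le 2^l - 1$ and $l > \log n$. Your version is in fact a bit cleaner, since it states the counting step directly rather than arguing around the specific value $l=\log n$ as the paper does.
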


\begin{proof}
Assume that there is a $l$-list pair $E_1,\ldots, E_m$ and
$W_1,\ldots, W_n$ solution for an $n\times n$ matrix $M$ of rank
$n$. If there are two different $i\not=j$ with $E_i=E_j$, the $i$-th
row and $j$-row are the same. Thus, the matrix is not full rank. If
$l=\log n$, then there is $E_i$ to be all zeros. Thus the matrix $M$
has a row to be all zeros. This makes $M$ not to be full rank.
Therefore, $l>\log n$.

\end{proof}

\begin{theorem}\label{log-lower-theorem}
For the minimum length solution for the $n\times n$ diagonal zero
matrix is at greater than $\log n$.
\end{theorem}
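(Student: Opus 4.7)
The plan is to establish $l > \log n$ by combining a distinctness argument with a counting step that rules out equality. The proof is essentially a mild specialization of Theorem~\ref{full-rank-theorem}: one could in fact deduce the statement directly as a corollary after verifying that the diagonal-zero matrix $J-I$ has full rank (its eigenvalues are $n-1$ and $-1$, the latter with multiplicity $n-1$, all nonzero for $n \geq 2$). Rather than invoking this black box, I will outline a self-contained proof that mirrors that of Theorem~\ref{full-rank-theorem}.

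The first step is to assume toward contradiction that a solution of length $l \leq \log n$ exists, given by vectors $E_1,\ldots,E_n$ and $W_1,\ldots,W_n$ in $\{0,1\}^l$. Since $M(i,i)=0$ and $M(i,j)=1$ for $i \neq j$, the defining property of an $l$-list pair gives $E_i \cdot W_i = 0$ for all $i$, while $E_i \cdot W_j \neq 0$ whenever $i \neq j$.

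The second step is to argue that the vectors $E_1,\ldots,E_n$ are pairwise distinct. If $E_i = E_j$ with $i \neq j$, then $E_i \cdot W_j = E_j \cdot W_j = 0$, contradicting the fact that the off-diagonal entry $M(i,j)$ equals $1$. Consequently there exist $n$ distinct binary strings of length $l$, forcing $n \leq 2^l$, i.e., $l \geq \log n$.

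The third step is to upgrade this to strict inequality by ruling out $l = \log n$. Under that equality the list $E_1,\ldots,E_n$ would exhaust every element of $\{0,1\}^l$, so some $E_k$ would equal the all-zero string $0^l$. But then $E_k \cdot W_j = 0$ for every index $j$, and in particular for any $j \neq k$, contradicting $M(k,j) = 1$. Hence $l > \log n$, as claimed. No step poses a real obstacle; the only delicate point is coupling the distinctness claim with the diagonal-zero structure, which is immediate once the above dot-product identities are recorded.
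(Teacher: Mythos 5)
Your proof is correct and carefully handles the boundary case where $n$ is an exact power of two. The paper takes a slightly different route: it first notes that the diagonal-zero matrix $J-I$ has determinant of absolute value $n-1$, hence has full rank, and then invokes Theorem~\ref{full-rank-theorem} (whose proof is exactly the pigeonhole argument you ``unrolled''). You instead argue directly from the diagonal-zero structure: from $E_i\cdot W_i=0$ and $E_i\cdot W_j\neq 0$ for $i\neq j$, you get distinctness of the $E_i$ (and hence $l\ge\log n$) without ever mentioning rank, and then rule out $l=\log n$ by observing that equality would force some $E_k=0^l$, making $E_k\cdot W_j=0$ for all $j$. The two proofs are really the same argument dressed differently; the paper's version is more economical because it reuses the full-rank lemma (which also covers other matrices), while yours is more self-contained and avoids computing a determinant or invoking linear algebra. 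You already flagged this equivalence yourself in the opening paragraph, so the comparison is well understood.
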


\begin{proof}
The absolute value of the determinant of diagonal zero matrix is
$(n-1)$. Thus, it is full rank matrix. It follows from
Theorem~\ref{full-rank-theorem}.
\end{proof}

\begin{theorem}\label{upper-bound-theorem}
Assume that $k$ is an integer such that there is another integer
$1<l_1<l$ such that ${l \choose l_1}\ge n$. Then there is a $l$-list
pair solution for the diagonal zero matrix problem.
\end{theorem}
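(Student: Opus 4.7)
The plan is to use a set-system construction where each row and column is labeled by a distinct $l_1$-subset of $\{1,\ldots,l\}$, taking advantage of the fact that $\binom{l}{l_1}\ge n$ gives us at least $n$ distinct such subsets. The key observation is that for the diagonal zero matrix we need vectors with $E_i\cdot W_j = 0$ iff $i=j$; since all off-diagonal entries are $1$, distinctness of the labels plus a careful pairing with complements will force nonzero overlap exactly when $i\ne j$.

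Concretely, first I would fix an injection $i\mapsto S_i$ from $\{1,\ldots,n\}$ into the collection of $l_1$-element subsets of $\{1,\ldots,l\}$, which exists by the hypothesis $\binom{l}{l_1}\ge n$. Then I would define $E_i\in\{0,1\}^l$ to be the characteristic vector of $S_i$ and $W_j\in\{0,1\}^l$ to be the characteristic vector of the complement $\overline{S_j}=\{1,\ldots,l\}\setminus S_j$. The inner product computation is then
\[
E_i\cdot W_j \;=\; |S_i\cap \overline{S_j}| \;=\; |S_i|-|S_i\cap S_j| \;=\; l_1-|S_i\cap S_j|.
\]
For the diagonal case $i=j$ this gives $l_1-l_1=0$, as needed. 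For $i\ne j$, the injectivity of the labeling ensures $S_i\ne S_j$, and since both have the same cardinality $l_1$ neither can contain the other, so $|S_i\cap S_j|<l_1$ and $E_i\cdot W_j\ge 1$. This matches the diagonal zero matrix exactly.

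The main obstacle is really just bookkeeping: verifying that $S_i\ne S_j$ with $|S_i|=|S_j|=l_1$ forces strict inequality $|S_i\cap S_j|<l_1$ (which is immediate, since equality would make the two subsets coincide), and checking that the hypothesis $1<l_1<l$ is used so that both $S_i$ and $\overline{S_j}$ are nontrivial. Finally, I would note that this construction uses vectors of length exactly $l$, establishing the claimed $l$-list pair solution; combined with the lower bound $l>\log n$ from Theorem~\ref{log-lower-theorem} and a choice like $l_1=\lceil l/2\rceil$ with $l=\log n+\log\log n$, one obtains the upper bound quoted in Corollary~\ref{cor:diagZero} via Stirling's approximation.
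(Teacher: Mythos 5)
Your construction is exactly the one the paper uses: take $n$ distinct weight-$l_1$ binary strings of length $l$ for the $E_i$ and let each $W_j$ be the bitwise complement of $E_j$, so that $E_i\cdot W_j=l_1-|S_i\cap S_j|$ vanishes iff $i=j$. Your set-theoretic phrasing and the explicit verification that equal-cardinality distinct sets cannot contain one another simply spell out the "easy to see" step in the paper's proof.
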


\begin{proof}
Let $E_1,\cdots, E_n$ be $n$ different binary string of length $l$
with exactly $l_1$ ones each. Let $W_i$ be the complementary binary
string of $E_i$. It is easy to see that $E_i\cdot W_i=0$ and
$E_i\cdot W_j>0$ for $i\not =j$.
\end{proof}

\begin{corollary}\label{cor:diagZero}
The minimum length for diagonal zero $n\times n$ matrix is between
$\log n+1$ and $\log n+\log\log n$.
\end{corollary}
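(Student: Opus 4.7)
The plan is to derive both inequalities as essentially immediate consequences of Theorems \ref{log-lower-theorem} and \ref{upper-bound-theorem}, so there is no new combinatorial content to discover; the work is just parameter selection.

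For the lower bound $\log n + 1$, I would simply quote Theorem \ref{log-lower-theorem}, which already establishes $L(M) > \log n$ for the diagonal zero matrix, and note that since $L(M)$ is a positive integer this strict inequality forces $L(M) \ge \lfloor \log n \rfloor + 1$, which is the stated bound (matching $\log n + 1$ exactly when $n$ is a power of two, and otherwise being at worst $\lceil \log n \rceil$).

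For the upper bound $\log n + \log \log n$, the plan is to invoke Theorem \ref{upper-bound-theorem} with a well-chosen $(l, l_1)$ pair and verify its hypothesis $\binom{l}{l_1} \ge n$. I would pick $l = \lceil \log n + \log \log n \rceil$ and $l_1 = \lfloor l/2 \rfloor$ — the central binomial choice, since it maximizes $\binom{l}{l_1}$ and admits the clean estimate $\binom{l}{\lfloor l/2 \rfloor} \ge 2^l/(l+1)$. Plugging in gives roughly $n \log n / (\log n + \log \log n + 1)$, which exceeds $n$ for all sufficiently large $n$. (Other choices such as $l_1 = \lceil \log n \rceil$ would also work, but the central one keeps the arithmetic trivial.) The hypothesis of Theorem \ref{upper-bound-theorem} is then satisfied, and the theorem directly produces a valid $l$-list pair of length at most $\log n + \log \log n$.

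There is no real obstacle here: the only thing to be careful about is the standard ceiling/floor bookkeeping around $\log n$ and $\log \log n$, together with handling the small values of $n$ where the central binomial estimate is not yet tight enough (these can be absorbed either into a ``for sufficiently large $n$'' clause or by a direct check). The role of the corollary is essentially to package the two preceding theorems, so the proof should be short.
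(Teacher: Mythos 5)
Your plan matches the paper's: the lower bound is Theorem~\ref{log-lower-theorem} (diagonal-zero is full rank), and the upper bound is Theorem~\ref{upper-bound-theorem} with the central binomial choice $l_1 = \lfloor l/2\rfloor$. The difference is only in how you estimate $\binom{l}{\lfloor l/2\rfloor}$, and that difference breaks the argument.

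The bound $\binom{l}{\lfloor l/2\rfloor} \ge 2^l/(l+1)$ is not strong enough for the target $l \le \log n + \log\log n$. Plugging in $l = \log n + \log\log n$ gives
\[
\frac{2^l}{l+1} = \frac{n\log n}{\log n + \log\log n + 1} < n
\]
for every $n > 1$, since the fraction $\log n/(\log n + \log\log n + 1)$ is always strictly less than $1$. There is no ``sufficiently large $n$'' clause that rescues this; the crude average-of-$(l+1)$-terms estimate loses a full $\Theta(l) = \Theta(\log n)$ factor, and that is exactly the slack you allotted yourself by choosing $l = \log n + \log\log n$, plus a bit more. The paper avoids this by using Stirling's formula, which gives $\binom{l}{l/2} \sim 2^l\sqrt{2/(\pi l)}$, losing only a $\Theta(\sqrt{l}) = \Theta(\sqrt{\log n})$ factor. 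With that sharper estimate the product becomes roughly $n\sqrt{2\log n/\pi}$, which does exceed $n$ (for $n\ge 3$ or so), and the hypothesis of Theorem~\ref{upper-bound-theorem} is met. So your proof outline is correct in structure but needs the Stirling-level bound $\binom{l}{\lfloor l/2\rfloor} = \Theta(2^l/\sqrt{l})$ in place of the weaker $2^l/(l+1)$; as written, the verification step fails.
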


\begin{proof}
Assume that $l$ is an even even number $\le \log n+\log\log n$. Let
$l_1=l/2$. By Stirling formula ${n!\over \sqrt{2\pi n}(({n\over
e})^n)}=1$, we have ${l\choose l_1}\sim {2^l\sqrt{2}\over\sqrt{\pi
l}}$. Thus, we can pick a $l\le \log n+\log\log n$ such that ${l
\choose l_1}\ge n$. The $n\times n$ diagonal zero matrix is of rank
$n$. The lower bound $\log n+1$ follows from
Theorem~\ref{log-lower-theorem}.

\end{proof}

\subsection{Upper Bounds and Algorithm for Matrix Problem}

In this section, we show a randomized algorithm to handle a class of
matrix problems. A matrix $M_{m\times n}$ has a list of independent
submatrices $M_1,\cdots, M_t$ that contain all zero entries of $M$.
We derive an upper bound of the solution for $L(M)$ to be close to
$\max(L(M_1),\cdots, L(M_t))$. This implies the interesting bound
$\max(L(M_1),\cdots, L(M_t))\le L(M)\le
(1+\epsilon)\max(L(M_1),\cdots, L(M_t))+O({\log (m+n)\over
\epsilon})$ for an arbitrary positive constant $\epsilon$.

We believe the following simple algorithm in
Lemma~\ref{rand-perm-lemma} for a random permutation is not new. For
completeness, we include it here.

\vskip 10pt

{\bf RandomPermutation($n$)}

Let $S=\{1,2,\cdots, n\}$;

For $i$ from $1$ to $n$

\qquad Select a random $a_i$ element from $S$

\qquad Let $S=S-\{a_i\}$;

Output $a_1a_2\cdots a_n$;

 {\bf End of RandomPermutation}

\vskip 10pt

\begin{lemma}\label{rand-perm-lemma}
Every permutation of  $1,2,\ldots, n$ has a equal probability to be
generated by RandomPermutation(.) that runs in $O(n^2)$ time.
\end{lemma}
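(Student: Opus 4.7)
The plan is straightforward: prove the uniformity claim by a direct probability calculation on a fixed target permutation, then analyze the per-iteration cost of maintaining the dynamic set $S$.

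First, I would fix an arbitrary permutation $\pi = \pi_1 \pi_2 \cdots \pi_n$ of $\{1, 2, \ldots, n\}$ and compute the probability that the algorithm outputs exactly $\pi$. Since the algorithm's output equals $\pi$ iff for every $i$ the random choice in iteration $i$ is $a_i = \pi_i$, and since the choice in iteration $i$ is uniform over the current set $S$ (which, conditioned on $a_1 = \pi_1, \ldots, a_{i-1} = \pi_{i-1}$, is exactly $\{1,\ldots,n\} \setminus \{\pi_1, \ldots, \pi_{i-1}\}$, a set of size $n - i + 1$), a chain-rule computation gives
\begin{equation*}
\Pr[\text{output} = \pi] \;=\; \prod_{i=1}^{n} \frac{1}{n - i + 1} \;=\; \frac{1}{n!}.
\end{equation*}
Since this is the same for every permutation $\pi$ and there are $n!$ permutations in total, every permutation is generated with equal probability.

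For the running time, I would argue at the level of a concrete representation of $S$: store $S$ as an array (or linked list) of at most $n$ elements together with a counter for its current size $k$. Selecting a uniformly random element of $S$ takes $O(1)$ after drawing a uniform index in $\{1, \ldots, k\}$, and removing that element (either by shifting the tail or by swapping with the last element and decrementing the counter) takes $O(k) = O(n)$ in the worst case. Summing over the $n$ iterations yields $\sum_{i=1}^{n} O(n - i + 1) = O(n^2)$ total work.

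The only place where one needs to be a little careful is arguing that the conditional distribution of $S$ after iteration $i-1$ is exactly the complement of $\{\pi_1,\ldots,\pi_{i-1}\}$ given the conditioning event; but this is immediate by induction on $i$ from the deterministic update $S \leftarrow S \setminus \{a_i\}$ in the pseudocode. I do not anticipate any real obstacle, as the lemma is essentially a textbook observation; the proof is a short conditional-probability computation plus an array-cost analysis.
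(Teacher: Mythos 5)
Your proof is correct, but for the uniformity claim it takes a genuinely different route from the paper. You compute directly, by the chain rule, that $\Pr[\text{output}=\pi]=\prod_{i=1}^{n}\frac{1}{n-i+1}=\frac{1}{n!}$ for every fixed $\pi$, which immediately gives uniformity. The paper instead argues indirectly via adjacent transpositions: it shows that for any permutation $a_1\cdots a_n$ and any index $i$, the permutations $a_1\cdots a_i a_{i+1}\cdots a_n$ and $a_1\cdots a_{i+1} a_i\cdots a_n$ are generated with equal probability (because once the prefix $a_1\cdots a_{i-1}$ has been fixed, the two remaining elements $a_i,a_{i+1}$ are equally likely to be drawn in either order), and then invokes the fact that any two permutations are connected by a finite sequence of adjacent transpositions, so all $n!$ output probabilities coincide. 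Your approach is more elementary and also yields the explicit value $1/n!$, whereas the paper's approach establishes only equality of probabilities (which suffices, since they must sum to $1$). For the running-time analysis, both proofs are essentially identical: each of the $n$ iterations costs $O(n)$ to select and delete from $S$; you simply flesh out the concrete array-based representation that the paper leaves implicit.
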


\begin{proof}
Assume that $a_1a_2\cdots a_n$ be an arbitrary permutation. We just
need to prove that $P_1=a_1a_2\cdots a_ia_{i+1}\cdots a_n$ and
$P_2=a_1a_2\cdots a_{i+1}a_i\cdots a_n$ have the equal chance to be
generated. This is because a permutation can be converted into
another permutation via a finite number of swaps between two
neighbor items. Note that $P_1$ can be converted into $P_2$ by
swapping the two elements $a_i$ and $a_{i+1}$. Assume that the
partial permutation $a_1a_2\cdots a_{i-1}$ have been generated by
RandomPermutation(.). We have that $a_ia_{i+1}$ and $a_{i+1}a_i$
will be generated by a equal probability RandomPermutation(.) to
append to the last partial permutation $a_1a_2\cdots a_{i-1}$. The
computational time follows from that fact that it takes $O(n)$ time
to generate one element and update the set $S$ in the algorithm.
\end{proof}

\begin{lemma}\label{probability-lemma}
Assume that $S_1$ is a subset of  $k_1$ elements of $\{1, 2,\cdots,
n\}$. Let $S_2$ be a subset of $\{1,2,\cdots,n\}$ and of size $k_2$
with $k_2\ge \alpha n$ for some $\alpha\in (0,1)$. Then with
probability at most $p\le (1-\alpha)^{k_1}$, $P[S_2]\cap
S_1=\emptyset$ for a random permutation $P$ of $1,2,\cdots,n$, where
$P[S_2]$ is the set of the positions of elements $S_2$ in $P$.
\end{lemma}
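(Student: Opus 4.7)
The plan is to leverage the uniformity of $P$ (guaranteed by Lemma~\ref{rand-perm-lemma}) together with a short telescoping of conditional probabilities over the positions in $S_1$.

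First, I would rephrase the event. The condition $P[S_2]\cap S_1=\emptyset$ is equivalent to saying that no index of $S_1$ receives a value from $S_2$; that is, $P(i)\in\{1,\ldots,n\}\setminus S_2$ for every $i\in S_1$. Writing $S_1=\{i_1,\ldots,i_{k_1}\}$ and using that the restriction of a uniformly random permutation of $\{1,\ldots,n\}$ to any $k_1$ positions is a uniformly random injection from those positions into $\{1,\ldots,n\}$, a direct count gives
\[
\Pr\bigl[P(i_1),\ldots,P(i_{k_1})\notin S_2\bigr] \;=\; \prod_{j=0}^{k_1-1}\frac{n-k_2-j}{n-j},
\]
with the convention that the product equals $0$ whenever $k_1>n-k_2$ (in which case the claimed bound holds trivially).

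Next, I would bound the product factor by factor. Rewriting
\[
\frac{n-k_2-j}{n-j} \;=\; 1-\frac{k_2}{n-j},
\]
and using $k_2\ge\alpha n$ together with $n-j\le n$ yields $\frac{k_2}{n-j}\ge\frac{k_2}{n}\ge\alpha$, so each factor is at most $1-\alpha$. Multiplying $k_1$ such terms gives the claimed bound $(1-\alpha)^{k_1}$.

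The only step that requires any care is the initial reduction, namely the assertion that the restriction of $P$ to the positions of $S_1$ is uniformly distributed over injections from $S_1$ into $\{1,\ldots,n\}$; this is immediate from the uniform distribution of $P$ over all permutations, which Lemma~\ref{rand-perm-lemma} supplies. I therefore do not anticipate any substantive obstacle, since the argument ultimately reduces to an elementary counting identity together with a single inequality applied $k_1$ times.
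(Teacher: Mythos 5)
Your proof is correct and follows essentially the same route as the paper's: reduce by the uniformity of the random permutation to the probability that $k_1$ fixed positions avoid $S_2$, write that probability as the product $\prod_{j=0}^{k_1-1}\frac{n-k_2-j}{n-j}$, and bound each factor by $1-\alpha$. The only cosmetic difference is in how the per-factor bound is obtained: you use the direct inequality $\frac{k_2}{n-j}\ge\frac{k_2}{n}\ge\alpha$, whereas the paper first shows $f(x)=\frac{n-k_2-x}{n-x}$ is decreasing (via its derivative) to conclude each factor is at most $\frac{n-k_2}{n}\le 1-\alpha$; your version is a bit more elementary, and you also explicitly note the degenerate case $k_1>n-k_2$, but these are the same argument.
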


\begin{proof} For two equal size subsets $S$ and $S'$ of $\{1, 2,\cdots, n\}$,
the number of permutations $P$ that make $P[S_2]\cap S=\emptyset$ is
the same as the number of permutations $P$ that make $P[S_2]\cap
S'=\emptyset$. By Lemma~\ref{rand-perm-lemma}, the probability for
$P[S_2]\cap S=\emptyset$ is the same as the probability for
$P[S_2]\cap S'=\emptyset$, where $P$ is a random permutation  of
$1,2,\cdots,n$. Thus, we can assume that $S_1$ is $\{1, 2,\cdots,
k_1\}$.

The probability that none of the elements of $S_2$ are selected in
the first $k_1$ positions is  $$p={n-k_2\over n}\cdot {n-k_2-1\over
n-1}\ldots {n-k_1-k_2-1\over n-k_1-1}.$$

Consider the function $f(x)={n-k_2-x\over n-x}=1-{k_2\over n-x}$. We
have its derivative function ${\partial f(x)\over
\partial x}=-{k_2\over (n-x)^2}<0$. Thus,  $f(x)$ is a decreasing function. We
have ${n-k_2-i\over n-i}\le {n-k_2\over n}$. Thus,
\begin{eqnarray}
{n-k_2\over n}\cdot {n-k_2-1\over n-1}\ldots {n-k_1-k_2-1\over
n-k_1-1}&\le& ({n-k_2\over n})^{k_1}\\
&\le& (1-\alpha)^{k_1}.
\end{eqnarray}

\end{proof}

\begin{lemma}\label{basic-design-lemma}
There is algorithm that given a binary matrix $M_{m\times n}$, it
gives a solution for the generalized matrix problem with $S(M)$
length in $O(mn)$ times, where $S(M)=\min(m,n)$ as in
Definition~\ref{matrix-basic-def}.
\end{lemma}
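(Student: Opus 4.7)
The plan is to give a very direct construction. Without loss of generality assume $n\le m$, so that $S(M)=n$ (the other case follows by transposing the roles of $E$ and $W$, since the matrix problem is symmetric under the swap $M\mapsto M^T$, $E\leftrightarrow W$).

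The construction is the ``identity on the short side'' trick. For $j=1,\ldots,n$, let $W_j = e_j$ be the length-$n$ indicator vector with a $1$ in position $j$ and $0$ elsewhere. For $i=1,\ldots,m$, let $E_i$ be the $i$-th row of $M$, viewed as a length-$n$ binary vector. Then
\[
E_i\cdot W_j \;=\; \sum_{k=1}^{n} E_i[k]\,W_j[k] \;=\; E_i[j] \;=\; M(i,j),
\]
so $E_i\cdot W_j=0$ if and only if $M(i,j)=0$, which is exactly the required condition for the matrix problem. Each vector has length $n=S(M)$, as desired.

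For the running time, writing down the $n$ vectors $W_1,\ldots,W_n$ takes $O(n^2)$ total, and writing down the $m$ rows $E_1,\ldots,E_m$ amounts to reading $M$ once, for $O(mn)$ time; since $n\le m$ the total is $O(mn)$. If instead $m<n$, perform the symmetric construction with $E_i=e_i$ of length $m$ and $W_j$ equal to the $j$-th column of $M$, giving length $m=S(M)$ and the same $O(mn)$ running time.

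There is really no main obstacle here: the only thing worth noting is the case split on whether $m\le n$ or $n\le m$, and the verification that the inner product picks out exactly one entry of $M$ when one side is a standard basis vector. This lemma will presumably be used elsewhere as the trivial baseline against which the more refined bounds (for example the $O(\log n+\log\log n)$ bound for diagonal-zero matrices in Corollary~\ref{cor:diagZero}, or the independent-submatrix decompositions in Theorem~\ref{thm:indSubMatrices}) are compared.
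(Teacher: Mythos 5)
Your proof is correct and is essentially identical to the paper's: both assume WLOG $n\le m$, set each $W_j$ to the standard basis vector $e_j$ of length $n$, and set each $E_i$ to the $i$-th row of $M$, so that $E_i\cdot W_j=M(i,j)$. Your write-up just makes the inner-product verification and the transposed case a bit more explicit.
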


\begin{proof} Let $M=(a_{i,j})_{m\times n}$.
Without loss of generality, assume $n\le m$. For each column $k$,
let $W_k=0^{k-1}10^{n-k}$. For each row $j$, let $E_j=e_{j,1}\ldots
e_{j,n}$, where $e_{j,k}=1$ if and only if $a_{j,k}=1$.
\end{proof}








\begin{definition}
For a string $s$, define a padding function $pad_1(s, k_1, k_2,
k_3)=0^{k_1}1^{k_2}s1^{k_3-|s|}$ if $|s|\le k_3$, and $pad_1(s, k_1,
k_2, k_3)$ is the empty string otherwise. We also define another
padding function $pad_2(s, k_1, k_2,
k_3)=1^{k_1}0^{k_2}s1^{k_3-|s|}$ if $|s|\le k_3$, and $pad_2(s, k_1,
k_2, k_3)$ is the empty string otherwise. For a permutation
$p=i_1\ldots i_n$ of  $1,\ldots, n$, and a binary string
$s=a_1\ldots a_n$ of length $n$, define $P(s,p)=a_{i_1}\ldots
a_{i_n}$.
\end{definition}

\begin{lemma}\label{convert-lemma} Let $\epsilon$ be an arbitrary constant in
$(0,1)$.  Then there is an $O((m+n)^3)$ time randomized algorithm
such that given a binary matrix $M_{m\times n}$,  a set
$H=\{M_1,\cdots, M_t\}$ of independent submatrices that  all zeros
entries of  $M$ are in the submatrices of $H$, and also a solution
with length $l_i$ for each $M_i\in H$, it produces a solution for
$M$ with length $(1+\epsilon)U(H)+O({\log (n+m)\over \epsilon})$,
where $U(H)=\max\{l_i: M_i\in H\}$ is the largest length of the
solution among all submatrices in $H$.
\end{lemma}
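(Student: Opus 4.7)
My plan is to lay each given sub-solution inside a shared length-$(1+\epsilon)U(H)$ main region using an independently sampled random permutation per submatrix, and augment it with a short $O(\log(m+n)/\epsilon)$-length auxiliary region whose purpose is to guarantee a nonzero inner product across distinct submatrices with high probability. The building blocks are $P(\cdot,p_k)$ with $p_k$ drawn from RandomPermutation (Lemma~\ref{rand-perm-lemma}) together with the two padding schemes $pad_1$ (row side) and $pad_2$ (column side), whose complementary $0^{k_1}1^{k_2}$ vs.\ $1^{k_1}0^{k_2}$ prefixes force any prefix contribution to the AND to vanish identically.

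Concretely, for the $a$-th row $i$ of $R_k$ I would set $E_i = pad_1(P(E_{k,a},p_k),k_1,k_2,L_{\mathrm{main}})$, and for the $b$-th column $j$ of $C_k$ set $W_j = pad_2(P(W_{k,b},p_k),k_1,k_2,L_{\mathrm{main}})$ using the \emph{same} permutation $p_k$; rows and columns that lie in no $M_i$ are assigned the all-ones vector. Within-submatrix correctness is immediate: on the solution portion, permutations preserve inner products so $P(E_{k,a},p_k)\cdot P(W_{k,b},p_k)=E_{k,a}\cdot W_{k,b}$, the prefix mismatch kills the $k_1+k_2$ header positions, and the trailing 1-pads (which coincide only when $l_k$ agrees on both sides, i.e.\ within the same $M_k$) are arranged via an asymmetric row-vs-column pad choice so as not to contaminate this value. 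Hence $E_i\cdot W_j=E_{k,a}\cdot W_{k,b}$, which is $0$ iff the corresponding entry of $M_k$, and therefore of $M$, is $0$.

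For an across-submatrix pair ($i\in R_k$, $j\in C_{k'}$, $k\ne k'$) the corresponding entry of $M$ is $1$ by the hypothesis on $H$, and I must show a nonzero AND with high probability. One of the two vectors carries at least $L_{\mathrm{main}}-U(H)\ge \epsilon U(H)/(1+\epsilon)$ pad-1 bits placed at positions chosen uniformly at random by independent permutations $p_k$ and $p_{k'}$. Lemma~\ref{probability-lemma} with $\alpha=\epsilon/(1+\epsilon)$ then bounds the probability that the other vector's 1-positions miss all of these by $(1-\alpha)^N=(1/(1+\epsilon))^N$, where $N$ is the number of 1-bits on the complementary side; dedicating $N=\Theta(\log(m+n)/\epsilon)$ guaranteed 1-bits (absorbed into the front portion of the padding, which is where the $O(\log(m+n)/\epsilon)$ additive term comes from) drives this per-pair failure probability below $1/(m+n)^{3}$. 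A union bound over the $O(mn)$ cross-submatrix pairs, together with the all-ones assignment for rows/columns outside $\bigcup R_i\cup\bigcup C_i$, leaves strictly positive success probability, so a valid choice of $(p_1,\ldots,p_t)$ exists and is found by resampling (or derandomized by the method of conditional expectations).

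The hard part is choreographing the padding so that the within-submatrix inner product equals the given sub-solution's inner product \emph{exactly} while the across-submatrix inner product is nonzero with probability $1-\mathrm{poly}(m+n)^{-1}$, all inside a total length of $(1+\epsilon)U(H)+O(\log(m+n)/\epsilon)$; the $\epsilon$-slack in $L_{\mathrm{main}}$ is precisely what supplies the $\alpha$ needed to invoke Lemma~\ref{probability-lemma}, which is why the two parameters of the target bound trade off against each other. The running time is dominated by drawing $t\le\min(m,n)$ permutations in $O(L^2)$ time each via Lemma~\ref{rand-perm-lemma} and writing out the $O(m+n)$ length-$L$ output vectors, which fits comfortably inside $O((m+n)^3)$.
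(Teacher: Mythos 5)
Your overall plan is the same as the paper's: build each $E_i,W_j$ by padding the given sub-solution with complementary prefixes ($0^{k_1}1^{k_2}$ vs.\ $1^{k_1}0^{k_2}$) and applying an independently drawn random permutation for each submatrix, handle rows/columns outside $\bigcup R_u\cup\bigcup C_u$ with the all-ones vector, then bound cross-submatrix failures with Lemma~\ref{probability-lemma} and a union bound. That architecture is correct, and your parameter balancing (the $\epsilon U(H)$ slack supplying the $\alpha$ for Lemma~\ref{probability-lemma}, the $O(\log(m+n)/\epsilon)$ term supplying the $\mathrm{poly}(m+n)^{-1}$ decay) is the right trade-off.

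There is, however, a genuine gap in the concrete construction as you wrote it. You set $E_i=pad_1(P(E_{k,a},p_k),k_1,k_2,L_{\mathrm{main}})$, i.e.\ you permute \emph{only} the given sub-solution and then wrap it with a fixed-position pad. With that order of operations the $0^{k_1}1^{k_2}$ / $1^{k_1}0^{k_2}$ prefixes and the trailing pad live at the \emph{same} fixed indices for every submatrix, so for a cross pair $i\in R_k,\ j\in C_{k'}$ with $k\neq k'$ the prefixes cancel identically and the trailing blocks occupy deterministic, possibly disjoint, ranges. If the sub-solution happens to contribute few or no ones for that particular row or column (which can and does occur when a submatrix has an all-zero row), then $E_i\ AND\ W_j=0^z$ \emph{deterministically}, regardless of $p_k,p_{k'}$. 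Your own probability argument (``pad-1 bits placed at positions chosen uniformly at random'') only makes sense if the permutation is applied to the \emph{entire} padded string, i.e.\ $E_i=P(pad_1(E_{k,a},k_1,k_2,k_3),p_k)$, as the paper does. That is the step that scatters the guaranteed pad ones over all $z$ coordinates so that Lemma~\ref{probability-lemma} applies. Fixing this also forces you to carry a $\Theta(\log(m+n)/\epsilon)$ block of guaranteed ones \emph{on each of the two sides} of the padding (i.e.\ both $k_1$ and $k_2$ must be $\tfrac{\epsilon}{2}U(H)+\Theta(\tfrac{\log(m+n)}{\epsilon})$), not just on ``the complementary side'': when $U(H)$ is small, the $\epsilon U(H)$ fraction alone will not give you a constant $\alpha$.

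On a secondary point you are ahead of the paper: your remark that the trailing pads must be chosen asymmetrically is exactly right, and should be stated explicitly. If both $pad_1$ and $pad_2$ append $1^{k_3-|s|}$ (as the paper's literal definition reads), then for any $M_u$ with $l_u<U(H)$ the tails overlap on ones and the within-submatrix AND can never be $0^z$, destroying the ``if and only if.'' One of the two padding functions should append $0^{k_3-|s|}$ (equivalently, zero-extend every sub-solution to length $U(H)$ before padding).
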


\begin{proof}
Let $k_1=k_2={\epsilon\over 2} U(H)+{c\over \epsilon}\log (m+n)$ and
$k_3=U(H)$, where constant $c$ is selected such that
\begin{eqnarray}
(1-{\epsilon\over 4})^{{c\over \epsilon}\log(n+m)}\le {1\over
4mn}.\label{c-eqn}
\end{eqnarray}
The total length for designing the matrix solution is
$z=k_1+k_2+k_3$. We design $E_1,\cdots, E_m$ and $W_1,\cdots, W_n$
of  length $z$ each for a solution for the matrix problem $M$. By
Lemma~\ref{basic-design-lemma}, we assume $U(H)\le \min(m,n)$.

For each submatrix in $M_u=(R_u,C_u)$ in $H$ with $R_u=\{i_1,\ldots,
i_s\}$ and $C_u=\{j_1,\ldots, j_t\}$, let $M_u$ have a solution
$E_{i_1}^{(u)},\ldots, E_{i_s}^{(u)}$ and $W_{j_1}^{(u)},\ldots,
W_{j_t}^{(u)}$, which has length $l_u=|E_{i_1}^{(u)}|\le U(H)$. Let
$p_u$ be a random permutation $1,2,\ldots, z$. Now let
$E_{i_k}=P(pad_1(E_{i_k}^{(u)}),p_u)$ for $k=1,\ldots, s$ and let
$W_{j_k}=P(pad_2(W_{j_k}^{(u)}),p_u)$ for $k=1,\ldots, t$. Thus, if
row $i$ and column $j$ have $(i,j)\in R_u\times C_u$ for some
$M_u=(R_u,C_u)\in H$, we have  $E_i^{(u)}\ AND\ W_j^{(u)}=0^z$ if
and only if $P(pad_1(E_{i}^{(u)}),p_u)\ AND
P(pad_2(W_{j}^{(u)}),p_u)=0^z$  if and only if $E_i\ AND\ W_j=0^z$
if and only if $a_{i,j}=0$.

For each row $i$, if $i\not\in R_k$ for any $M_k=(R_k,C_k)\in H$,
then let  $E_i=1^z$. Similarly, for each column $j$, if $j\not\in
C_k$ for any $M_k=(R_k,C_k)\in H$, then let  $W_j=1^z$. Thus, for
row $i$ and column $j$, if $i\not\in R_k$ for any $M_k=(R_k,C_k)\in
H$, we always have $E_i\ AND\ W_j\not= 0^z$. Similarly, if $j\not\in
C_k$ for any $M_k=(R_k,C_k)\in H$, we always have $E_i\ AND\
W_j\not= 0^z$.

Let
\begin{eqnarray}
\alpha&=&{k_1\over z}\label{alpha-init-ineqn}\\
&=&{k_1\over 2k_1+U(H)}\\
 &=&{{\epsilon\over 2} U(H)+{c\over \epsilon}\log (m+n)\over
2({\epsilon\over 2} U(H)+{c\over \epsilon}\log
(m+n))+U(H)}\\
&\ge&{\epsilon U(H)+{2c\over \epsilon}\log (m+n)\over 2\epsilon
U(H)+2U(H)+{4c\over \epsilon}\log
(m+n))}\\
&\ge&{\epsilon U(H)+{2c\over \epsilon}\log (m+n)\over 4U(H)+{4c\over \epsilon}\log (m+n))}\\
&\ge&{\epsilon U(H)+\epsilon\cdot {c\over \epsilon}\log (m+n)\over
4U(H)+{4c\over \epsilon}\log
(m+n))}\\
&=&{\epsilon\over 4}.\label{alpha-end-ineqn}
\end{eqnarray}

For an entry $(i,j)$,  we consider the case that $i\in R_u$ in
$M_u=(R_u, C_u)$, and $j\in C_v$ in $M_v=(R_v, C_v)$ for some
$u\not=v$.
 For a sequence $s=a_1\ldots a_z$, let
$Q_1(s)$ be the the set of positions with bit $1$ in $s$
($Q_1(s)=\{i:a_i=1\ and \ 1\le i\le z\}$). Let
$E_{i_1}^{(u)},\cdots, E_{i_s}^{(u)}$ and $W_{j_1}^{(u)},\cdots,
W_{j_t}^{(u)}$ be a solution for submatrix $M_u$ with length
$l_u=|E_{i_1}^{(u)}|\le U(H)$. Since $i\in R_u$,
$E_i=P(pad_1(E_{i_k}^{(u)}),p_u)$, where $p_u$ is a random
permutation of $1,2,\cdots,z$, and $i_k=i$. Let
$E_{x_1}^{(v)},\cdots, E_{x_a}^{(v)}$ and $W_{y_1}^{(v)},\cdots,
W_{y_b}^{(v)}$ be the solution for submatrix $M_v$ with length
$l_v=|E_{i_1}^{(v)}|\le U(H)$. Since $j\in C_v$,
$W_j=P(pad_2(W_{y_k}^{(v)}),p_v)$, where $p_v$ is a random
permutation of $1,2,\cdots,z$, and $y_k=j$. Since $u\not=v$, $p_u$
and $p_v$ are independent permutations.  $Q_1(E_i)$ contains $k_1$
positions of $\{1,2,\ldots, z\}$, and $Q_1(W_j)$ another $k_2$
positions of $\{1,2,\ldots, z\}$. By Lemma~\ref{probability-lemma},
with probability at most $(1-\alpha)^{k_2}$, $Q_1(E_i)\cap
Q_1(W_j)=\emptyset$. Therefore, with probability at most
$(1-\alpha)^{k_2}$, $E_i\ AND\ W_j=0^n$.

Therefore, with probability at most $p=nm(1-\alpha)^{k_2}$, there is
a position  $(i,j)$ in $M$ with $E_i\ AND \ W_j=0^z$ not to be
equivalent to $a_{i,j}=0$. By the choice of $c$ at
equation~(\ref{c-eqn}), and inequalities~(\ref{alpha-init-ineqn})
to~(\ref{alpha-end-ineqn}), we have small probability $p$ defined
 above
 to be at most ${1\over 4}$.

 The computational time follows from the fact that for each $E_i$ or
 $W_j$, we need at most $O((m+n)^2)$ time, which is spent for generating a random permutation, from the solutions of matrices
 in $H$.
\end{proof}

\begin{theorem}\label{thm:indSubMatrices}
 Let $\epsilon$ be an arbitrary constant in $(0,1)$.
Assume that $H$ is a set of independent submatrices of $M$ that all
zero elements in $M$ are in the submatrices of $H$, then $V(H)\le
L(M)\le (1+\epsilon)V(H)+O({\log (n+m)\over \epsilon})$, where
$V(H)=\max\{L(M_i): M_i\in H\}$.
\end{theorem}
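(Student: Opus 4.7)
The plan is to prove the two inequalities separately, and in fact both are essentially immediate once Lemma~\ref{convert-lemma} is in hand. The upper bound $L(M)\le (1+\epsilon)V(H)+O(\log(n+m)/\epsilon)$ is a direct consequence of the lemma: for each submatrix $M_i\in H$, fix a minimum-length solution realizing $M_i$ (so of length $l_i=L(M_i)$), and feed the family $\{(M_i, l_i)\}_{M_i\in H}$ together with $M$ and $\epsilon$ into the randomized algorithm of Lemma~\ref{convert-lemma}. The lemma guarantees that with probability at least $3/4$ the algorithm outputs a valid solution for $M$ of length $(1+\epsilon)U(H)+O(\log(m+n)/\epsilon)$, where $U(H)=\max_i l_i=V(H)$. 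Since a valid solution is produced with strictly positive probability, such a deterministic solution must exist, which bounds $L(M)$ as desired.

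For the lower bound $V(H)\le L(M)$, the plan is to observe that any realization of $M$ restricts to a realization of each submatrix. Concretely, let $E_1,\ldots,E_m$ and $W_1,\ldots,W_n$ be vectors of length $l=L(M)$ with $E_i\cdot W_j=0$ if and only if $M(i,j)=0$. For any $M_i=(R_i,C_i)\in H$, the subfamilies $\{E_r : r\in R_i\}$ and $\{W_c : c\in C_i\}$ still have length $l$ and satisfy $E_r\cdot W_c=0$ if and only if $(M_i)_{r,c}=0$, since this condition depends only on the $(r,c)$-entries with $r\in R_i, c\in C_i$. Hence $L(M_i)\le l=L(M)$ for every $M_i\in H$, and taking the maximum over $i$ yields $V(H)\le L(M)$.

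The substantive work --- the random-permutation construction with padded prefixes and suffixes that decouples the interactions between vectors belonging to different submatrices --- has already been carried out in Lemma~\ref{convert-lemma}. Consequently, I do not anticipate any real obstacle beyond checking that the lemma's hypotheses are in force: the hypothesis that $H$ covers all zero entries of $M$ is given by assumption, and the per-submatrix solutions required as input to the algorithm exist by the very definition of $L(M_i)$. The theorem is thus a packaging of Lemma~\ref{convert-lemma} together with the trivial restriction argument.
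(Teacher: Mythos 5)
Your proposal is correct and takes essentially the same approach as the paper's own (quite terse) proof: the lower bound $V(H)\le L(M)$ by restricting any solution for $M$ to the rows and columns of each $M_i$, and the upper bound by feeding optimal per-submatrix solutions of lengths $l_i=L(M_i)$ into the randomized algorithm of Lemma~\ref{convert-lemma} and invoking the probabilistic method to pass from positive success probability to existence. You merely spell out the two steps that the paper dismisses as ``trivial'' and ``follows from Lemma~\ref{convert-lemma}.''
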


\begin{proof}
It is trivial to see the inequality $V(H)\le L(M)$ since a solution
for $M$ automatically implies a solution for its submatrix. The
inequality $L(M)\le (1+\epsilon)V(H)+O({\log (n+m)\over \epsilon})$
follows from Lemma~\ref{convert-lemma}.
\end{proof}

\begin{theorem}\label{thm:indSubMatrices}
 Let $\epsilon$ be an arbitrary constant in $(0,1)$.
There is a randomized algorithm such that given a binary matrices
$M_{m\times n}$ and a set $H$ of independent submatrices of $M$, if
all zero elements $M$ are in the submatrices of $H$, then the
algorithm returns a design with total length at most
$(1+\epsilon)T(H)+O({\log (n+m)\over \epsilon})$. Furthermore, the
time complexity of the algorithm is $O((n+m)^3)$.
\end{theorem}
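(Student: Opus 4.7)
The plan is to combine the two preceding results in the section: Lemma~\ref{basic-design-lemma} gives, for any matrix, a solution whose length matches $S(\cdot)$, and Lemma~\ref{convert-lemma} shows how to glue together solutions for an independent family of submatrices into one solution for the full matrix with only a $(1+\epsilon)$ blow-up plus an $O(\log(n+m)/\epsilon)$ additive overhead. The present theorem is essentially the composition of these two steps, with Lemma~\ref{basic-design-lemma} supplying the per-submatrix solutions.

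First I would, for each submatrix $M_i = (R_i,C_i) \in H$, invoke Lemma~\ref{basic-design-lemma} on $M_i$ to produce a list pair for $M_i$ of length $l_i \le S(M_i) \le T(H)$; each such call runs in $O(|R_i|\cdot|C_i|) \le O(mn)$ time, and since the submatrices are independent the total cost of this phase is bounded by $O(mn)$. Let $H' = \{M_1,\ldots,M_t\}$ equipped with these solutions; then $U(H') = \max_i l_i \le T(H)$.

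Next I would feed $H'$ into the randomized algorithm of Lemma~\ref{convert-lemma}, applied to the full matrix $M_{m\times n}$. The hypothesis of Lemma~\ref{convert-lemma} is exactly that $H$ is a set of independent submatrices covering all zero entries of $M$, which is the hypothesis of the present theorem, so it applies directly. Its output is a list pair solution for $M$ of length $(1+\epsilon)U(H') + O(\log(m+n)/\epsilon) \le (1+\epsilon)T(H) + O(\log(m+n)/\epsilon)$, which is the bound claimed. For the running time, Lemma~\ref{convert-lemma} runs in $O((m+n)^3)$, which dominates the $O(mn)$ spent on constructing the per-submatrix solutions, giving the stated $O((m+n)^3)$ overall.

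The main thing to verify carefully, rather than a genuine obstacle, is the monotonicity $l_i \le S(M_i) \le T(H)$ so that plugging into Lemma~\ref{convert-lemma} actually yields the bound in terms of $T(H)$ rather than $V(H)$; this is immediate from the definitions, since Lemma~\ref{basic-design-lemma} constructs its solution with length exactly $\min(m_i,n_i) = S(M_i)$ and $T(H) = \max_i S(M_i)$ by Definition~\ref{matrix-basic-def}. No new randomized analysis is required beyond what Lemma~\ref{convert-lemma} already provides, so the proof reduces to a short composition argument.
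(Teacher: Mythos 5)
Your proposal is correct and is exactly the paper's intended argument: the paper's proof is a one-line citation of Lemma~\ref{basic-design-lemma} and Lemma~\ref{convert-lemma}, and you have spelled out precisely that composition, including the key observation that $U(H') = \max_i S(M_i) = T(H)$ when the per-submatrix solutions come from Lemma~\ref{basic-design-lemma}, and that the $O((m+n)^3)$ cost of Lemma~\ref{convert-lemma} dominates the $O(mn)$ setup cost.
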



\begin{proof}
It follows from Lemma~\ref{basic-design-lemma} and
Lemma~\ref{convert-lemma}.
\end{proof}

\begin{lemma}\label{brute-lemma}
There is a $O(2^{L(M)\min(m,n)}(m+n)^3)$ time algorithm to find an
optimal solution of length $L(M)$ for a binary matrix $M_{m\times
n}$.
\end{lemma}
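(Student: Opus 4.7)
The plan is a brute-force enumeration of candidate row vectors, combined with a monotonicity observation that makes a separate enumeration over column vectors unnecessary. Without loss of generality I will assume $m \le n$, so that $\min(m,n) = m$; the symmetric case follows by swapping the roles of the $E_i$'s and $W_j$'s.

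The first step is to establish the key monotonicity claim: once the row vectors $E_1,\ldots,E_m \in \{0,1\}^l$ are fixed, there is a canonical best choice of each column vector. Define the forbidden set $F_j = \bigcup_{i : M(i,j) = 0} \{k : E_i[k] = 1\}$. Any valid $W_j$ must be $0$ on $F_j$ in order to guarantee $E_i \cdot W_j = 0$ whenever $M(i,j) = 0$. Among all $W_j$ satisfying this necessary condition, the pointwise maximal choice $W_j^* [k] = 1 \iff k \notin F_j$ maximizes every inner product $E_i \cdot W_j$; hence if any valid $W_j$ exists for the given $E_1, \ldots, E_m$, then $W_j^*$ also satisfies the $M(i,j) = 1$ constraints. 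So it suffices to try this canonical $W_j^*$ for each column and verify the resulting matrix.

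The algorithm then iterates $l = 1, 2, 3, \ldots$ and, for each $l$, enumerates all $2^{lm}$ tuples $(E_1, \ldots, E_m) \in (\{0,1\}^l)^m$. For each tuple I build the canonical $W_1^*, \ldots, W_n^*$ in polynomial time and verify whether $E_i \cdot W_j^* = 0 \iff M(i,j) = 0$ for every $(i,j)$. Lemma~\ref{basic-design-lemma} guarantees a solution of length $\min(m,n)$ always exists, so the loop is certain to terminate by $l = L(M)$, at which point the first successful tuple produces an optimal solution.

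For the running time, the cost is bounded by $\sum_{l=1}^{L(M)} 2^{lm} \cdot O((m+n)^3) \le 2^{L(M)\,m + 1} \cdot O((m+n)^3)$, where the polynomial factor absorbs the computation of the forbidden sets, the construction of the $W_j^*$'s, and the $mn$ entrywise checks. Substituting $m = \min(m,n)$ gives the claimed $O(2^{L(M)\min(m,n)}(m+n)^3)$ bound. I foresee no substantial obstacle; the only part that requires a careful argument rather than routine bookkeeping is the monotonicity justification for the canonical $W_j^*$, everything else being straightforward enumeration and verification.
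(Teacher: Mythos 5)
Your proposal is correct and follows essentially the same route as the paper's proof: assume $m \le n$, enumerate all tuples of row vectors of length up to $L(M)$ (with Lemma~\ref{basic-design-lemma} bounding the search and guaranteeing termination), and for each tuple construct the canonical column vectors with the fewest zeros needed to satisfy the zero-entry constraints, then verify. The only difference is that you spell out the monotonicity justification for the canonical $W_j^*$ explicitly, whereas the paper simply asserts it; both yield the same time bound.
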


\begin{proof}
By Lemma~\ref{basic-design-lemma}, it has a solution of length at
most $\min(m,n)$. Assume $m\le n$. We can find the least length by
trying all numbers at most $L(M)$. Each number takes at most
$O(2^{L(M)m})$ time. After vectors $E_1,\cdots, E_m$ are fixed, it
takes $O(mn)$ time to derive each $W_i$. This can be done to put the
least number of zeros in the $W_i$ to satisfy the zero entries of
column $i$ of $M_{m\times n}$. Thus, it takes $O(mn^2)$ time to
derive all the columns after fixing rows $E_1,\cdots, E_m$.
Therefore, the total time is $O(2^{L(M)}(m+n)^3)$.
\end{proof}

\begin{theorem}\label{thm:indSubMatrices}
 Let $\epsilon$ be an arbitrary constant in $(0,1)$.
Then there is a randomized algorithm such that given a binary
matrices $M_{m\times n}$ and a set $H$ of independent submatrices of
$M$, if all zero elements $M$ are in the submatrices of $H$, then
the algorithm returns a design with total length at most
$(1+\epsilon)V(H)+O({\log (n+m)\over \epsilon})$, where
$V(H)=\max\{L(M_i): M_i\in H\}$. Furthermore, the time complexity of
the algorithm is $O(2^{V(H)\min(m,n)}(n+m)^{3})$.
\end{theorem}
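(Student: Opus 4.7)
The plan is to combine the brute-force solver from Lemma~\ref{brute-lemma} with the randomized combiner from Lemma~\ref{convert-lemma}. Note that the difference between this theorem and the previous one in the paper is that we are no longer assuming that an optimal solution for each $M_i$ is given as input; we must compute them ourselves. The key observation is that the brute-force solver runs within the allotted time budget when applied to each submatrix.

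First, for each submatrix $M_i = (R_i, C_i) \in H$, invoke the algorithm from Lemma~\ref{brute-lemma} to compute an optimal solution of length $L(M_i) \le V(H)$. Each invocation runs in time $O(2^{L(M_i)\min(|R_i|,|C_i|)}(|R_i|+|C_i|)^3) = O(2^{V(H)\min(m,n)}(m+n)^3)$, using the monotonicity $L(M_i) \le V(H)$ and $\min(|R_i|,|C_i|) \le \min(m,n)$. Because the submatrices in $H$ are independent (disjoint rows and disjoint columns), the number of submatrices satisfies $t \le \min(m,n) \le m+n$, so summing over all $M_i \in H$ contributes only a polynomial factor that is absorbed into the $(m+n)^3$ term.

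Next, feed the collection of optimal solutions into the randomized algorithm from Lemma~\ref{convert-lemma}. With $U(H) = \max_i L(M_i) = V(H)$ for the inputs just produced, that lemma yields a solution for $M$ of length $(1+\epsilon)V(H) + O(\log(n+m)/\epsilon)$ in $O((m+n)^3)$ additional time. Combining the two phases gives the claimed overall running time $O(2^{V(H)\min(m,n)}(n+m)^{3})$, with the exponential brute-force step dominating the polynomial combiner.

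The main (and only) obstacle worth checking is that the brute-force phase really fits in the stated budget, which hinges on (i) $L(M_i) \le V(H)$ for each $i$ so the exponent does not blow up, and (ii) the number $t$ of independent submatrices being polynomially bounded so the sum over $i$ does not introduce a worse factor. Both follow immediately from the definitions of $V(H)$ and of independence. Correctness of the output length and runtime of the combination step then follow directly from Lemma~\ref{convert-lemma}, and no further analysis is needed.
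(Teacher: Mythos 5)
Your proposal matches the paper's proof, which simply combines Lemma~\ref{brute-lemma} (brute-force optimal solutions for each $M_i$) with Lemma~\ref{convert-lemma} (randomized padding/permutation combiner), exactly as you describe. One small imprecision: multiplying a per-submatrix bound of $O(2^{V(H)\min(m,n)}(m+n)^3)$ by $t\le\min(m,n)$ submatrices gives $(m+n)^4$, not $(m+n)^3$; to get the stated bound cleanly you should instead keep each submatrix's own dimensions and use $\sum_i (|R_i|+|C_i|)^3 \le \bigl(\sum_i(|R_i|+|C_i|)\bigr)^3 \le (m+n)^3$, which follows from superadditivity of $x\mapsto x^3$ together with independence of the $M_i$. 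This does not change the approach or conclusion.
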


\begin{proof}
It follows from Lemma~\ref{brute-lemma} and
Lemma~\ref{convert-lemma}.
\end{proof}

\subsection{Revised Matrix Problem}

In this section, we revise the definition of the matrix problem, and
show that how the optimal length of solution depends on the rank of
the matrix.

{\it Revised Matrix Problem:} Given a $n\times n$ nonnegative
integer matrix $M$, find a list vectors $E_1,\ldots, E_n$ and a list
vectors $W_1,\ldots, W_n$ of length $l$ for each vector such that
$M(i,j)=E_i\cdot W_j$, where $\cdot$ is the inner product of two
vectors. Minimize the length $l$.

\begin{theorem}\label{revised-theorem}
Let $M$ be a $n\times n$ matrix with nonnegative elements. The
minimum length solution for the revised matrix problem $M$ is at
least $rank(M)$.
\end{theorem}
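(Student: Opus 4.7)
The plan is to recognize that any solution to the revised matrix problem with length $l$ is equivalent to a rank-$l$ factorization of $M$ over the reals, after which the lower bound follows from standard linear algebra.

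First, I would arrange the vectors $E_1,\ldots,E_n$ as the rows of an $n\times l$ matrix $E$, and the vectors $W_1,\ldots,W_n$ as the rows of an $n\times l$ matrix $W$. The defining condition $M(i,j)=E_i\cdot W_j$ then says exactly that $M = EW^{T}$, where $W^{T}$ is the $l\times n$ transpose of $W$. This reformulation is the entire conceptual content of the argument; everything else is a standard fact.

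Next, I would invoke the sub-multiplicativity of rank: for any matrices $A$ and $B$ whose product is defined, $\text{rank}(AB)\le \min(\text{rank}(A),\text{rank}(B))$. Since $E$ has $l$ columns and $W^{T}$ has $l$ rows, both matrices have rank at most $l$, so
\[
\text{rank}(M) = \text{rank}(EW^{T}) \le \min(\text{rank}(E),\text{rank}(W^{T})) \le l.
\]
Rearranging gives $l \ge \text{rank}(M)$, which is the desired bound.

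There is essentially no obstacle here beyond verifying that the identification $M = EW^T$ is valid; the nonnegativity of the entries of $M$ (and the integrality) plays no role in the lower bound since the rank argument works over $\mathbb{R}$ (or even $\mathbb{Q}$), and in fact the bound holds for arbitrary real matrices. The only minor subtlety worth mentioning is that this bound is the ordinary rank, not the nonnegative rank, so even though entries of $E_i$ and $W_j$ may be restricted to nonnegative values, the argument only uses the weaker ordinary rank and the lower bound still holds.
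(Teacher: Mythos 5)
Your proof is correct and is, at heart, the same argument as the paper's, just packaged with a different standard lemma. The paper writes $M = M_1 + \cdots + M_l$ where $M_t(i,j) = E_i[t]\,W_j[t]$ is rank at most $1$, and invokes subadditivity of rank over sums ($\rank(A+B)\le\rank(A)+\rank(B)$). You instead stack the vectors into matrices $E$ and $W$, observe $M = EW^T$, and invoke sub-multiplicativity of rank over products ($\rank(AB)\le\min(\rank A,\rank B)$). These are two equivalent ways of expressing that $M$ factors through an $l$-dimensional space; indeed, your $M = EW^T$ is precisely the outer-product expansion $\sum_t (\text{col}_t E)(\text{row}_t W^T)$ that the paper writes out termwise. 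Neither formulation is more general or more elementary than the other here. Your closing remark is a good one and worth keeping in mind: nonnegativity and integrality of $M$'s entries play no role in the lower bound, and the bound is the ordinary rank (not nonnegative rank), so the argument is valid over $\mathbb{R}$ for arbitrary real solution vectors.
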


\begin{proof}
Assume that there is a $l$-list pair $E_1,\ldots, E_n$ and
$W_1,\ldots, W_n$ solution $M$ such that $M(i,j)=E_i\cdot M_j$.
Define $M_t$ to be the matrix such that $M_t(i,j)=E_i[t]W_j[j]$.
$M_t$ is generated by the $t$-th bit of the $l$-list pair. We have
$M=M_1+M_2+\ldots+M_l$. It is easy to see that rank of each $M_t$ is
at most one.

By the  well known fact in the linear algebra, we have $\rank(M)\le
\rank(M_1)+\rank(M_2)+\ldots+\rank(M_l)\le l$. Therefore,
$\rank(M)\le l$.
\end{proof}

\begin{corollary}
The minimum length solution for the full rank revised  matrix with
matrix size $n\times n$ problem is $n$.
\end{corollary}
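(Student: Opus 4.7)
The plan is to combine the lower bound already provided by Theorem~\ref{revised-theorem} with a matching explicit upper bound construction. Since $M$ is a full rank $n \times n$ matrix, Theorem~\ref{revised-theorem} immediately gives $l \geq \rank(M) = n$ for any valid solution, so no further work is needed on the lower bound side.

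For the upper bound, I would exhibit a concrete $n$-list pair. Define $E_i$ to be the $i$-th row of $M$, viewed as a vector of length $n$ with nonnegative integer entries, and let $W_j$ be the standard basis vector with a $1$ in the $j$-th coordinate and $0$ elsewhere. Then by the definition of inner product, $E_i \cdot W_j$ picks out precisely the $(i,j)$ entry of $M$, so $M(i,j) = E_i \cdot W_j$ as required. Both $E_i$ and $W_j$ have length exactly $n$, so $l = n$ is achievable.

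Combining the two directions yields $l = n$ exactly, proving the corollary. The only subtle point worth mentioning in the writeup is that the construction does not rely on the full rank hypothesis for feasibility; full rank is only invoked through Theorem~\ref{revised-theorem} to match the upper bound. There is no real obstacle here — the statement is a routine consequence of the preceding theorem together with a one-line explicit construction.
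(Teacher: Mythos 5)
Your proof is correct and takes the natural approach the paper leaves implicit: the lower bound $l\ge\rank(M)=n$ is immediate from Theorem~\ref{revised-theorem}, and the upper bound follows from the trivial length-$n$ construction taking $E_i$ to be the $i$-th row of $M$ and $W_j$ to be the $j$-th standard basis vector (this is the revised-problem analogue of Lemma~\ref{basic-design-lemma}, which gives a $\min(m,n)$-length solution for the binary version). The paper states the corollary without a proof, and what you wrote is exactly the intended argument.
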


\begin{corollary}
The minimum length solution for the revised matrix problem with
matrix size $n\times n$  for diagonal one matrix is $n$.
\end{corollary}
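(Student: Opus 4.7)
The plan is to establish the equality by bracketing $L_{rev}(I_n) = n$ from both sides, where $I_n$ denotes the $n \times n$ diagonal-one matrix (the identity). Both directions are short given the machinery already in place.

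For the lower bound, I would simply invoke Theorem~\ref{revised-theorem}, which states that any solution to the revised matrix problem on a nonnegative integer matrix $M$ requires length at least $\rank(M)$. Since the diagonal-one matrix is exactly the identity matrix $I_n$, whose rank is $n$ (its determinant is $1 \neq 0$), we immediately get a lower bound of $n$ on the solution length. This step is entirely routine and follows the same pattern used in the preceding corollary about full-rank matrices.

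For the matching upper bound, I would exhibit an explicit length-$n$ solution. Let $e_k \in \{0,1\}^n$ denote the standard basis vector with a $1$ in position $k$ and $0$s elsewhere. Define $E_i = e_i$ and $W_j = e_j$ for each $1 \le i,j \le n$. Then $E_i \cdot W_j = \sum_{k=1}^{n} e_i[k] \, e_j[k] = 1$ if $i = j$ and $0$ otherwise, which is exactly $I_n(i,j)$. Thus length $n$ suffices.

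The main obstacle is essentially nonexistent here: the heavy lifting has already been done by Theorem~\ref{revised-theorem} (via the rank-sum inequality $\rank(M_1 + \cdots + M_l) \le \sum_k \rank(M_k)$), so the corollary reduces to observing $\rank(I_n) = n$ and producing the standard basis construction. Combining both bounds gives $L_{rev}(I_n) = n$, completing the proof.
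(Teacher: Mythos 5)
Your proposal is correct and takes essentially the same approach the paper intends: the paper leaves this corollary without an explicit proof, but the surrounding corollaries make clear the intended argument is a lower bound of $\rank(I_n)=n$ from Theorem~\ref{revised-theorem}, combined with the obvious length-$n$ upper bound. Your explicit construction $E_i = W_i = e_i$ is the natural witness for the upper bound, matching the trivial $n=l$ observation the paper relies on throughout.
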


\begin{corollary}\label{cor:diagZero2}
The minimum length solution for the revised matrix problem for
diagonal zero $n\times n$ matrix is $n$.
\end{corollary}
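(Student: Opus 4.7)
The plan is to prove Corollary~\ref{cor:diagZero2} by squeezing the minimum length between a lower bound of $n$ and an upper bound of $n$. The lower bound is essentially free from the framework already developed, and the upper bound reduces to exhibiting an explicit pair of vector lists.

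For the lower bound, I would invoke Theorem~\ref{revised-theorem}, which states that the minimum length of any solution for the revised matrix problem on a nonnegative matrix $M$ is at least $\rank(M)$. It remains to verify that the $n\times n$ diagonal zero matrix $D = J - I$ (where $J$ is the all-ones matrix and $I$ is the identity) has rank $n$. This is already noted in the proof of Theorem~\ref{log-lower-theorem}: the absolute value of $\det(D)$ is $n-1$, which is nonzero for $n\ge 2$. (Alternatively, one can observe that the eigenvalues of $J$ are $n$ with multiplicity $1$ and $0$ with multiplicity $n-1$, so $J-I$ has eigenvalues $n-1$ and $-1$, all nonzero.) Hence $\rank(D)=n$ and thus any solution has length at least $n$.

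For the matching upper bound, I would give an explicit length-$n$ construction. Let $\mathbf{1}$ denote the length-$n$ all-ones vector and $e_k$ the $k$-th standard basis vector of length $n$. Set
\[
E_i \;=\; \mathbf{1} - e_i, \qquad W_j \;=\; e_j,
\]
for $1\le i,j\le n$. Then $E_i\cdot W_j$ equals the $j$-th coordinate of $E_i$, which is $0$ when $i=j$ and $1$ when $i\ne j$. This exactly realizes $D$ as a list of vector inner products of length $n$, so $L(D)\le n$.

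Combining the two bounds gives $L(D)=n$, as required. There is essentially no hard step here: the lower bound is a direct application of Theorem~\ref{revised-theorem} together with the already-established full-rank property of $J-I$, and the upper bound is a one-line construction. The only mild subtlety to mention explicitly is that the rank computation requires $n\ge 2$, but the $n=1$ case (a single $0$ entry) is trivial since the zero matrix admits the length-$1$ (indeed length-$0$) solution $E_1 = W_1 = 0$, consistent with the statement.
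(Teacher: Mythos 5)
Your lower bound is exactly the paper's argument: the diagonal-zero matrix $J-I$ has determinant of absolute value $n-1$, hence is full rank for $n\ge 2$, and Theorem~\ref{revised-theorem} gives $L\ge\rank=n$; the paper leaves the matching length-$n$ upper bound implicit (it is the trivial construction), whereas you spell it out with $E_i=\mathbf 1-e_i$, $W_j=e_j$, which is a worthwhile addition. One small nit: your parenthetical on $n=1$ actually shows $L=0\ne n$ for the $1\times 1$ zero matrix, so that case is not ``consistent with the statement''---the corollary should really be read as holding for $n\ge 2$.
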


Corollary~\ref{cor:diagZero2} is in contrast to
Corollary~\ref{cor:diagZero}.

\begin{proof}
The determinant of a diagonal zero matrix is $n-1$. It is full rank
matrix. It follows from Theorem~\ref{revised-theorem}.

\end{proof}

\end{document}